\documentclass[11pt,a4paper]{article}
\usepackage{fullpage}
\usepackage[bookmarks]{hyperref}
\usepackage{amssymb}
\usepackage{amsmath}
\usepackage{amsthm}
\usepackage{graphicx,color,colordvi}
\usepackage{bbm}
\usepackage{varioref}
\usepackage{cleveref}
\usepackage{stmaryrd}
\usepackage[utf8]{inputenc}
\usepackage[blocks]{authblk}
\usepackage{dsfont}
\usepackage{mathtools}
\usepackage{authblk}
\newcommand{\norm}[1]{\left\| #1 \right\|}

\usepackage{wrapfig}
\usepackage[english]{babel}
\usepackage{physics}
\usepackage{braket}
\usepackage{xcolor}

\usepackage{tikz}
\usepackage{ifthen}
\usepackage{pgfplots}
\pgfplotsset{compat=1.9}
\usetikzlibrary{shapes,arrows}
\usetikzlibrary{positioning}
\usetikzlibrary{shapes.geometric}

\RequirePackage[framemethod=default]{mdframed}

\usepackage[margin=1in]{geometry}

\usepackage{comment}
\usepackage{tikz}
\usepackage{graphicx}
\usepackage[T1]{fontenc}
\usepackage{url}
\usepackage{stmaryrd}
\usepackage{hyperref}
\usepackage{varioref}
\usepackage{cleveref}
\usepackage{braket}
\usepackage{ upgreek }
\usepackage{dsfont}
\usepackage{makecell}
\usepackage{fancyref} %references with object type, no pages

\newcommand{\identity}{\ensuremath{\mathds{1}}}

\usepackage{float}
\newfloat{algorithm}{t}{lop}

\definecolor{byzantium}{rgb}{0.44, 0.16, 0.39}

\usepackage{hyperref}

\definecolor{grey}{RGB}{220,220,220}
\definecolor{deepjunglegreen}{rgb}{0.0, 0.29, 0.29}

\newmdenv[skipabove=7pt,
skipbelow=7pt,
backgroundcolor=grey!10,
innerleftmargin=5pt,
innerrightmargin=5pt,
innertopmargin=5pt,
leftmargin=0cm,
rightmargin=0cm,
innerbottommargin=5pt,
linewidth=1pt]{sBox}

\newmdenv[skipabove=7pt,
skipbelow=7pt,
backgroundcolor=grey!50,
innerleftmargin=5pt,
innerrightmargin=5pt,
innertopmargin=5pt,
leftmargin=0cm,
rightmargin=0cm,
innerbottommargin=5pt,
linewidth=1pt]{tBox}

\theoremstyle{plain}
\newtheorem{thm}{Theorem}[section]
\newtheorem{lem}[thm]{Lemma}
\newtheorem{cor}[thm]{Corollary}
\newtheorem{prop}[thm]{Proposition}
\newtheorem{defi}[thm]{Definition}

\theoremstyle{remark}
\newtheorem{rem}[thm]{Remark}
\newtheorem{stp}{Step}
\newtheorem*{stp2}{Step 3'}

\newenvironment{theo}{\begin{tBox}\begin{thm}}{\end{thm}\end{tBox}}

\newcommand{\supp}{\operatorname{supp}}

\newcommand{\diam}{\operatorname{diam}}

\newcommand*{\fancyrefthmlabelprefix}{thm}
\newcommand*{\fancyreflemlabelprefix}{lem}
\newcommand*{\fancyrefcorlabelprefix}{cor}
\newcommand*{\fancyrefdefilabelprefix}{defi}
\frefformat{plain}{\fancyreflemlabelprefix}{lemma\fancyrefdefaultspacing#1}
\Frefformat{plain}{\fancyreflemlabelprefix}{Lemma\fancyrefdefaultspacing#1}
\frefformat{plain}{\fancyrefthmlabelprefix}{theorem\fancyrefdefaultspacing#1}
\Frefformat{plain}{\fancyrefthmlabelprefix}{Theorem\fancyrefdefaultspacing#1}
\frefformat{plain}{\fancyrefcorlabelprefix}{corollary\fancyrefdefaultspacing#1}
\Frefformat{plain}{\fancyrefcorlabelprefix}{Corollary\fancyrefdefaultspacing#1}
\frefformat{plain}{\fancyrefdefilabelprefix}{definition\fancyrefdefaultspacing#1}
\Frefformat{plain}{\fancyrefdefilabelprefix}{Definition\fancyrefdefaultspacing#1}
\newcommand*{\fancyrefalglabelprefix}{alg}
\newcommand*{\frefalgname}{algorithm}
\newcommand*{\Frefalgname}{Algorithm}
% algorithm <number>
\frefformat{plain}{\fancyrefalglabelprefix}{%
  \frefalgname\fancyrefdefaultspacing#1%
}%
% Algorithm <number>
\Frefformat{plain}{\fancyrefalglabelprefix}{%
  \Frefalgname\fancyrefdefaultspacing#1%
}%

\newcommand*{\fancyrefapplabelprefix}{app}
\newcommand*{\frefappname}{appendix}
\newcommand*{\Frefappname}{Appendix}
% algorithm <number>
\frefformat{plain}{\fancyrefapplabelprefix}{%
  \frefappname\fancyrefdefaultspacing#1%
}%
% Algorithm <number>
\Frefformat{plain}{\fancyrefapplabelprefix}{%
  \Frefappname\fancyrefdefaultspacing#1%
}%

\usepackage{tikz}
\usepackage{ifthen}
\usepackage{pgfplots}
\pgfplotsset{compat=1.9}
\usetikzlibrary{shapes,arrows}
\usetikzlibrary{positioning}
\usetikzlibrary{shapes.geometric}

%%% PICTURES OF SPIN CHAINS
\def\Block[#1,#2,#3,#4]{

\def\r{0.3};

\ifthenelse{\NOT #4=0}{
\fill [#2] (-0.5,-0.5) rectangle ({#1-0.5},0.5);
}

\foreach \n in {1,...,#1}{ 

\shade[shading=ball, ball color=darkred] ({\n-1},0) circle (\r);

}

}

\definecolor{Green}{HTML}{00AD69}  % "Pantone 3405"
\definecolor{coolblue}{RGB}{0,51,102}
\definecolor{lightblue}{RGB}{102,210,255}
\definecolor{lightpurple}{RGB}{140,30,255}
\definecolor{lightpink}{RGB}{204,0,204}
\definecolor{midblue}{RGB}{0,102,204}
\definecolor{midpink}{RGB}{153,0,153}
\definecolor{darkblue}{RGB}{0,0,153}
\definecolor{cyan}{RGB}{0,204,204}
\definecolor{lightgreen}{RGB}{0,255,128}
\definecolor{midgreen}{RGB}{0,204,0}
\definecolor{midyellow}{RGB}{204,204,0}
\definecolor{darkyellow}{RGB}{153,153,0}
\definecolor{darkpurple}{RGB}{102,0,102}
\definecolor{orange}{RGB}{255,153,51}
\definecolor{darkred}{RGB}{153,0,76}
\definecolor{lightyellow}{RGB}{255,255,153}
\definecolor{lightred}{RGB}{255,153,153}

\renewcommand{\identity}{\ensuremath{\mathds{1}}}

%%%%%%%%%%%%%%%%%%

\begin{document}

\title{\bf Strong decay of correlations\\ for Gibbs states in any dimension}

\author[1,2]{Andreas Bluhm\thanks{andreas.bluhm@univ-grenoble-alpes.fr}}
\author[3,4]{{\'A}ngela Capel\thanks{ac2722@cam.ac.uk}}
\author[5]{Antonio Pérez-Hernández\thanks{antperez@ind.uned.es}}
\affil[1]{\footnotesize Univ.\ Grenoble Alpes, CNRS, Grenoble INP, LIG, 38000 Grenoble, France}
\affil[2]{QMATH, Department of Mathematical Sciences, University of Copenhagen, Universitetsparken 5, 2100 Copenhagen, Denmark}
\affil[3]{Fachbereich Mathematik, Universit\"at T\"ubingen, 72076 T\"ubingen, Germany}
\affil[4]{Department of Applied Mathematics and Theoretical Physics, University of Cambridge, United Kingdom}
\affil[5]{Departamento de Matem\'{a}tica Aplicada I, Escuela T\'{e}cnica Superior de Ingenieros Industriales, Universidad Nacional de Educación a Distancia, calle Juan del Rosal 12, 28040 Madrid (Ciudad Universitaria), Spain}

\maketitle

\begin{abstract}

    Quantum systems in thermal equilibrium are described using Gibbs states. The correlations in such states determine how difficult it is to describe or simulate them. In this article, we show that if the Gibbs state of a quantum system satisfies that each of its marginals admits a local effective Hamiltonian  with short-range interactions, then it satisfies a mixing condition, that is, for any regions $A$, $C$ the distance of the reduced state $\rho_{AC}$ on these regions to the product of its marginals,
    \begin{equation*}
    \norm{ \rho_{AC} \rho_A^{-1} \otimes \rho_C^{-1} - \mathds{1}_{AC}} \, ,
    \end{equation*}
    decays exponentially with the distance between regions $A$ and $C$. This mixing condition is stronger than other commonly studied measures of correlation. In particular, it implies the exponential decay of the mutual information between distant regions. The mixing condition has been used, for example, to prove positive log-Sobolev constants. On the way, we prove that the the condition regarding local effective Hamiltonian is satisfied if the Hamiltonian of the system is commuting and also commutes with every marginal of the Gibbs state. The proof of these results employs a variety of tools such as Araki's expansionals, quantum belief propagation and cluster expansions.

\end{abstract}

\tableofcontents

\section{Introduction}

\subsection{Correlation measures}

\emph{Quantum Gibbs states} are used to describe quantum systems in thermal equilibrium. They are fully described by the system's Hamiltonian and the temperature. For example, for the simulation of many-body systems, it is important to know when Gibbs states allow for an efficient description. This happens for instance if the correlations between faraway regions vanish exponentially fast with the distance between the regions. We refer to \cite{alhambra2022quantum} for a topical review of this and other aspects of quantum systems in thermal equilibrium. 

There are different measures of correlations. In this work, we will focus on a measure of correlations that is called the mixing condition. In order to define it, let us consider Hamiltonians with short-range interactions, i.e., interactions whose strength decays exponentially with the distance. We show that assuming the existence of a local short-range effective Hamiltonian, the following \emph{(uniform) mixing condition} holds at sufficiently high temperature: There exist universal constants $K, \alpha \geq 0$ such that for all finite $\Lambda \subset \mathbb Z^g$, $\rho:= \rho^\Lambda_\beta$ the Gibbs state for the Hamiltonian on the region $\Lambda$ at inverse temperature $\beta>0$, and $A,C \subset \Lambda$ with $A \cap C = \emptyset$,
\begin{equation*}
   \norm{ \rho_{AC} \, \rho_A^{-1} \otimes \rho_C^{-1} - \mathds{1}_{AC}} \leq K
 {f(A,C)} \operatorname{e}^{- \alpha \, \mathrm{dist}(A,C)} \, .
\end{equation*}
Here, $f(A,C)$ is a suitable function that depends on the regions $A$, $C$, for example on their cardinality or the size of their boundaries.

 The name mixing condition comes from the study of modified logarithmic Sobolev inequalities (MLSI) \cite{CapelRouzeStilckFranca-MLSIcommuting-2020,BeigiDattaRouze-ReverseHypercontractivity-2018,art:QuantumConditionalEntropyCapel_2018}, as its homonymous classical analogue \cite{daipra2002classicalMLSI}  is a fundamental ingredient in the proof of such inequalities for classical spin systems. The relevance of MLSIs for quantum spin systems is notorious because they imply rapid mixing for quantum Markovian evolutions describing thermalizing dynamics. Additionally this comes along with a number of important consequences, such as stability under perturbations \cite{art:StabilityCubitt_2015} and the fact that it rules out the usefulness of models as self-correcting quantum memories \cite{Rev:BrownSelf}, among others. In \cite{BardetCapelLuciaPerezGarciaRouze-HeatBath1DMLSI-2019}, it was shown that the mixing condition needs to be assumed in order for heat-bath dynamics in one-dimension to have a positive MLSI constant. In one-dimension, the mixing condition was subsequently used to show that Davies generators converging to an appropriate Gibbs state have a positive MLSI constant at any positive temperature and hence exhibit rapid mixing \cite{BardetCapelGaoLuciaPerezGarciaRouze-Davies1DMLSI-2021, BardetCapelGaoLuciaPerezGarciaRouze-Davies1DMLSIshort-2021}. This  has been recently extended in \cite{kochanowski2023MLSI} to any 2-colorable graph with exponential growth, for which it has been shown that exponential decay of correlations implies rapid mixing via the mixing condition. 

The mixing condition is a very strong notion of decay of correlations. An information-theoretically well-motivated alternative way to quantify the correlations in a quantum state is by using the \emph{mutual information}. This quantity has an operational interpretation as the total amount of correlations (quantum or classical) between two subsystems, as shown in \cite{Groisman2005}. The mutual information between regions $A$ and $C$ is given as 
\begin{equation*}
I_\rho(A:C):= D(\rho_{AC}\|\rho_A \otimes \rho_C),
\end{equation*}
where $D(\rho\| \sigma):= \operatorname{Tr}[\rho(\log \rho - \log \sigma)]$ is the Umegaki relative entropy between quantum states $\rho$ and $\sigma$ \cite{Umegaki-RelativeEntropy-1962}.  We say that $H = (H_\Lambda)_{\Lambda \subset \mathbb{Z}^g}$ has \emph{exponential uniform decay of mutual information} if there exist universal constants $K', \alpha' \geq 0$ such that, given $\beta \geq 0$, for all finite $\Lambda \subset \mathbb Z^g$, for $\rho:= \rho^\Lambda_\beta$ and $A,C \subset \Lambda$ with $A \cap C = \emptyset$,
\begin{equation*}
    I_{\rho}(A:C) \leq K'
 {f'(A,C)} \operatorname{e}^{- \alpha' \mathrm{dist}(A,C)} \, .
\end{equation*}
Examples of systems that have an exponentially-decaying mutual information are those for which there is a Lindbladian that thermalizes \emph{rapidly} to them \cite{Kastoryano2013}. In particular, the mixing condition implies the exponential uniform decay of the mutual information, as shown in \cite{Bluhm2021exponential} by the present authors.

Finally, correlations in many-body systems are traditionally quantified using the \emph{operator} or \emph{covariance correlation}. For a quantum state $\rho $ in $\Lambda$ and regions $A$,$C \subset \Lambda$, it is given by
\begin{equation*}
\operatorname{Cov}_\rho(A,C) := \sup_{O_A, O_C}\left|\operatorname{Tr}[O_A \otimes O_C (\rho_{AC} - \rho_A \otimes \rho_C)]\right| \, .
\end{equation*} 
Here, the operators $O_A$ and $O_C$ have supports on $A$ and $C$, respectively, and the supremum is taken over such operators of operator norm at most $1$. The operator $\rho_X$ is the reduced density matrix of $\rho$  on $X$. \emph{Exponential uniform decay of covariance} is defined similarly as for the mutual information: There exist universal constants $K'', \alpha'' \geq 0$ such that, given $\beta \geq 0$, for all finite $\Lambda \subset \mathbb Z^g$, for $\rho:= \rho^\Lambda_\beta$ and $A,C \subset \Lambda$ with $A \cap C = \emptyset$,
\begin{equation*}
\operatorname{Cov}_\rho(A,C) \leq K''
 {f''(A,C)} \operatorname{e}^{- \alpha'' \mathrm{dist}(A,C)} \, .
\end{equation*} 
Using Pinsker's inequality \cite{Pinsker-Information-1964}, one can easily show that the mutual information upper bounds the covariance, so that decay in mutual information is stronger than decay in covariance. Thus, the mixing condition also implies exponential uniform decay of the covariance.  

This article complements a variety of works that prove decay of correlations for different measures and various setups. In one-dimensional systems, results showing exponential decay of correlations in Gibbs states at any temperature are available for all measures we have discussed: In a seminal paper in 1969 \cite{Araki1969}, Araki showed that the operator correlation of infinite quantum spin chains with local translation-invariant interactions decays exponentially fast. Building on Araki's work, the authors of the present article proved in \cite{Bluhm2021exponential} that systems in finite chains with local translation-invariant interactions satisfy a mixing condition, extending it to exponentially-decaying interactions in \cite{CapelMoscolariTeufelWessel-LPPL-2023} and \cite{Gondolf.2024.ConditionalIndependence}. Therefore, their mutual information in any finite subchain also decays exponentially fast. 

For higher dimensions, the picture is less complete: Exponential decay of the operator correlation for arbitrary graphs above a critical temperature was proved in \cite{Kliesch2014, frohlich2015some}. Contrary to the one-dimensional case, in higher dimensions exponential decay of correlations for \emph{arbitrary} systems can only hold above a critical temperature due to the possible presence of phase transitions (for example, in the classical 2D Ising model). Exponential decay of the mutual information in higher dimensions above a critical temperature for arbitrary graphs is related to the results in \cite{Kuwahara2019} on the conditional mutual information.  
Unfortunately, there is a flaw in the non-commutative cluster expansion of this paper \cite{samuel-personal-communication}. While the recent paper \cite{Kuwahara.2024} proves the decay of the conditional mutual information at any positive temperature, this result does not imply decay of the mutual information. Moreover, it does not prove the existence of an effective Hamiltonian of the form that was claimed in \cite{Kuwahara2019}.

{
\subsection{Motivation}\label{sec:main_results}
In the previous section, we have seen that there are different ways to quantify decay of correlations, using, e.g., the mixing condition, the mutual information, or the covariance. The exponential decay of covariance is the condition that is most commonly used, whereas the mixing condition has been used successfully to prove rapid mixing \cite{BardetCapelGaoLuciaPerezGarciaRouze-Davies1DMLSI-2021, BardetCapelGaoLuciaPerezGarciaRouze-Davies1DMLSIshort-2021}.
While the mixing condition implies exponential decay of the mutual information, which in turn implies exponential decay of covariance, these implications cannot be reversed in general. For example, from data-hiding, it is known that there exist states whose operator correlations are arbitrarily small, but whose mutual information is still big \cite{Hastings2007, Hayden2004}. 

However, for classical Gibbs states, it is known that all these different forms of decay of correlations are equivalent  \cite{Martinelli1999}. The main motivation for this paper is to show that this is also true in the quantum setting, i.e., to show that for Gibbs states, exponential decay of covariance implies the mixing condition. 

In previous work \cite{Bluhm2021exponential}, the present authors showed that for quantum spin chains at any positive temperature with local, finite-range, translation-invariant interactions, the three notions of decay of correlations we discussed are all equivalent. In fact, we can even add another one, namely \emph{local indistinguishability} of the Gibbs state \cite{Brandao2019}. The latter holds if there exist universal constants $K''', \alpha''' \geq 0$ such that for all $\Lambda \subset \mathbb Z^g$,  split as $\Lambda=ABC$ with $B$ shielding $A$ from $C$,  and for all local operators $O_A$ on $A$,
\begin{equation*}
\left| \operatorname{Tr}_{ABC}[\rho^{\Lambda} \, O_A ] - \operatorname{Tr}_{AB}[\rho^{AB} \, O_A ] \right| \leq  \norm{O_A}  f'''(A,C) K''' 
 \operatorname{e}^{- \alpha''' \mathrm{dist}(A,C)} \, . 
\end{equation*}

The equivalence of these measures of decay of correlations and local indistinguishability in 1D was subsequently extended to short-range interactions in \cite{CapelMoscolariTeufelWessel-LPPL-2023} and \cite{Gondolf.2024.ConditionalIndependence}, and it remains true beyond one-dimensional systems for classical or even commuting systems with finite range, as shown recently in \cite{kochanowski2023MLSI}. The caveat of the latter result though is that it presents a prefactor scaling exponentially with the size of the boundaries of $A$ and $C$.  

In this article, we make progress in the case of non-commutative interactions in systems with more than one dimension. We are inspired by the flawed proof of exponential decay of conditional mutual information in \cite{Kuwahara2019}, which relied on the existence of an effective Hamiltonian. In fact, we consider two different types of effective Hamiltonians, which we term \textit{weak} and \textit{strong}, respectively (see Section \ref{sec:effective_hamiltonian} for details).

Under the existence of a strong effective Hamiltonian, we prove that for sufficiently high temperatures, the mixing condition holds.  Using cluster expansion techniques, we can show in Section \ref{sec:comm-effective-Hamiltonian} that such a strong effective Hamiltonian exists at high enough temperature if the interactions satisfy a \emph{commuting hypothesis} (Definiton \ref{defi:CommutingHypothesis}), meaning essentially that the interactions and all their partial traces commute.

However, assuming the existence of a strong effective Hamiltonian is likely to be too restrictive for general non-commuting interactions. Therefore, we also consider a weak effective Hamiltonian, which is an effective Hamiltonian of the form claimed in \cite{Kuwahara2019}. Making use of local indistinguishability (which follows from exponential decay of covariance \cite{CapelMoscolariTeufelWessel-LPPL-2023}), we can then show that, also under this weaker hypothesis, exponential decay of covariance implies the mixing condition.

Thus, to summarize, while we cannot prove unconditional equivalence of the different measures of decay of correlations at high enough temperature in this article, we can show this equivalence assuming the existence of a local effective Hamiltonian, in two possible versions. The precise systems for which these effective Hamiltonians exist beyond the commuting case remains an open problem though, and will be addressed in future work.

\subsection{Mixing condition and proof outline} \label{sec:proof-outline}

The main results of this paper deal with the implication from local indistinguishability to mixing condition under the assumption of the existence of an effective Hamiltonian with short-range interactions. We explore separately the derivation of mixing condition in the presence of a so-called \textit{strong} (\Cref{defi:localityEffectiveHamiltonian}) or \textit{weak} (\Cref{defi:weakeffHamiltonian}) effective local Hamiltonian with short-range interactions, respectively. 

More specifically, for the \textbf{\underline{strong}} case, we prove that, given a finite lattice $\Lambda \subset \mathbb{Z}^g$ and $A, C \subset \Lambda$ such that $A$  and $C $ are ``separated enough'', and for a Gibbs state $\rho^\Lambda_\beta \equiv \rho  = \operatorname{e}^{- \beta H_\Lambda} /\Tr[\operatorname{e}^{- \beta H_\Lambda}] $ of a short-range Hamiltonian with $\beta < \beta_*$, where $\beta_*$ is some sufficiently low inverse temperature, we have
\begin{equation}\label{eq:mixing_condition_intro}
    \norm{ \rho_{AC} \, \rho_A^{-1} \otimes \rho_C^{-1} - \identity_{AC}} \leq \zeta \operatorname{e}^{- \eta \, \mathrm{dist}(A,C)} \, ,
\end{equation}
where $\zeta, \eta $ are absolute constants depending on the interactions and $\beta$, and additionally $ \zeta = \mathcal{O} \left( e^{  \operatorname{min} \{ |\partial A| ,   |\partial C| \} } , e^\beta  \right) $. 
Here, $\partial X$ is the $1$-boundary of $X$, i.e., all sites in the complement of $X$ that have distance $1$ from $X$.

We prove the assumption required, namely the existence of a strong effective Hamiltonian with short-range interactions for interactions that satisfy the Commuting Hypothesis (Definiton \ref{defi:CommutingHypothesis}). The derivation of Eq. \eqref{eq:mixing_condition_intro} is then relatively straightforward and shown in \Cref{sec:stronf_effHam_implies_mixing_condition}.

Next, we assume the existence of a \textbf{\underline{weak}} effective Hamiltonian, and prove Eq. \eqref{eq:mixing_condition_intro} in this weaker case, with modified constants $\tilde{\zeta}, \tilde{\eta} $ such that 
\begin{equation*}
\tilde{\zeta}= \mathcal{O}\left(   \operatorname{min} \{ e^{|\partial A|}(|\partial A| +|C|g(A)) ,   e^{|\partial C|}(|\partial C| +|A|g(A)) \}  , e^\beta  \right)\, ,
\end{equation*}
where the factors $g(A)$ and $g(C)$ are inherited from the notion of clustering of correlations assumed to hold. The proof of this result is quite involved and requires the use of strong machinery in the context of Gibbs states. In particular, we make use in our proof of the so-called cluster expansions, the well-known Quantum Belief Propagation (QBP) \cite{Hastings2007,kim2012perturbative,CapelMoscolariTeufelWessel-LPPL-2023} and estimates on Araki's expansionals \cite{Perez2020}. Let us sketch here the proof of this result by combining these tools. The complete proof can be found in the next sections. 

\vspace{0.2cm}

\noindent \textbf{\underline{Step 1. Construction of the effective Hamiltonian.}} 

\vspace{0.1cm}
\noindent {In a first step, motivated by the ideas of \cite{Kuwahara2019}, we assume the existence of a local effective Hamiltonian} $\widehat{H}^{L, \beta}_\Lambda$ for our original Hamiltonian $H_\Lambda$ such that, for every $L \subset \Lambda$ (cf. Section \ref{sec:effective_hamiltonian}):
\begin{equation*}
 \widehat{H}^{L, \beta}_\Lambda:=-\frac{1}{\beta} \log\left( \tr_{L^{c}}(e^{-\beta H_\Lambda}) \otimes \mathbbm{1}_{L^{c}} \right) + \frac{1}{\beta} \log[Z_{L^{c}}] \mathbbm{1} \, .
\end{equation*}
We can control the interaction terms of $\operatorname{e}^{- \beta \widehat{H}_\Lambda^{L,\beta}}$, as well as bound the expansionals of the form 
\begin{equation*}
    \operatorname{e}^{- \beta \widehat{H}_\Lambda^{AB,\beta}}  \operatorname{e}^{ \beta (\widehat{H}_\Lambda^{A,\beta} + \widehat{H}_\Lambda^{B,\beta} )  } \, .
\end{equation*}
In particular, the previous construction allows us to relate the marginals of the original Hamiltonian to the exponentials of the effective Hamiltonian in the following form (see Eq.\ \eqref{eq:relation_effrho_usualrho}):
\begin{equation*}
    \rho_{AC} \, \rho_A^{-1} \otimes \rho_C^{-1} = \operatorname{e}^{-\beta \widehat{H}_\Lambda^{AC,\beta}} \operatorname{e}^{\beta (\widehat{H}_\Lambda^{A,\beta} + \widehat{H}_\Lambda^{C,\beta})} \underbrace{Z_{ABC} Z_B Z_{AB}^{-1} Z_{BC}^{-1}}_{\kappa_{ABC}} \, ,
\end{equation*}
where $Z_{X}$ is just $\Tr_X[\operatorname{e}^{-\beta H_{X}}]$.  Therefore, we can bound
\begin{align}\label{eq:estimate_mixing_condition_informal}
   & \norm{ \rho_{AC} \rho_A^{-1} \otimes \rho_C^{-1} - \identity} \nonumber \\
   & \hspace{2cm} \leq \norm{ \operatorname{e}^{-\beta \widehat{H}_\Lambda^{AC,\beta}} \operatorname{e}^{\beta (\widehat{H}_\Lambda^{A,\beta}+ \widehat{H}_\Lambda^{C,\beta})} }  \abs{\kappa_{ABC} -1} + \norm{ \operatorname{e}^{-\beta \widehat{H}_\Lambda^{AC,\beta}} \operatorname{e}^{\beta (\widehat{H}_\Lambda^{A,\beta} + \widehat{H}_\Lambda^{C,\beta})} - \identity_{AC} }  \, .
\end{align}
Now we need to estimate each of these terms separately.

\vspace{0.2cm}

\noindent  \textbf{\underline{Step 2. Estimates on the expansionals of the effective Hamiltonian.}} 

\vspace{0.1cm}

\noindent  For estimating the last term in the RHS of Eq.\ \eqref{eq:estimate_mixing_condition_informal}, we use the estimates for Araki's expansionals for the effective Hamiltonian (as in \Cref{prop:estimates_expansionals_normal} for the original interaction), concluding:
\begin{equation*}
    \norm{ \operatorname{e}^{-\beta \widehat{H}_\Lambda^{AC,\beta}} \operatorname{e}^{\beta (\widehat{H}_\Lambda^{A,\beta} + \widehat{H}_\Lambda^{C,\beta})} } \leq \operatorname{e}^{\mathcal{K}_1\mathcal{K}_2} \, ,
\end{equation*}
for $\mathcal{K}_1$ a constant and
\begin{equation*}
    \mathcal{K}_2 \leq \mathcal{O}\left( {\operatorname{min} \{ |\partial A| , |\partial C| \} } \operatorname{e}^{- \mathrm{dist}(A,C)} \right) \, .
\end{equation*}
We can similarly estimate the first term in the RHS above, obtaining:
\begin{equation*}
    \norm{ \operatorname{e}^{-\beta \widehat{H}_\Lambda^{AC,\beta}} \operatorname{e}^{\beta (\widehat{H}_\Lambda^{A,\beta} + \widehat{H}_\Lambda^{C,\beta})} - \identity_{AC} } \leq \operatorname{e}^{\mathcal{K}_1\mathcal{K}_2} -1 \, .
\end{equation*}

\noindent  \textbf{\underline{Step 3. Estimates on partition functions of the original Hamiltonian.} } 

\vspace{0.1cm}

\noindent  The remaining term from Eq.\ \eqref{eq:estimate_mixing_condition_informal} to be bounded is $\abs{\kappa_{ABC} -1}$.  We bound it in \Cref{lem:clusteringImpliesPartitionFunction}  using the result of local indistinguishability from \Cref{thm:local_indistinguishability_kastoryanobrandao} as well as the estimates for Araki's expansionals for the original Hamiltonian from \Cref{prop:estimates_expansionals_normal}, obtaining thus:
\begin{equation*}
   \abs{\kappa_{ABC} -1} \leq    \min \{ \mathcal{O}  (|\partial A| + |C| g(A) ) e^{\mathcal{O}(|\partial A|) },\mathcal{O}  (|\partial C| + |A| g(C) ) e^{\mathcal{O}(|\partial C|) }    \}  \, , 
\end{equation*}
where the factors $g(A)$ and $g(C)$ are inherited from the notion of clustering of correlations assumed to hold. 
Note that, in the proof of \Cref{thm:local_indistinguishability_kastoryanobrandao}, we additionally make use of the Quantum Belief Propagation. 

\vspace{0.2cm}

\section{Setting and beyond}\label{sec:setting}

\subsection{Notations and model}\label{subsec:model}

Let $G=(V, E)$ be a possibly infinite graph with vertices $V$ and edges $E$. We endow the graph with a metric $\mathrm{dist}: V \times V \to \mathbb R_+$, for example the shortest path distance on the graph. This fixes the set $V$ our quantum systems live on. For the distance between sets $X$, $Y \subset V$, let 
\begin{equation*}
    \mathrm{dist}(X,Y) := \inf_{x \in X} \inf_{y \in Y} \mathrm{dist}(x,y)\, .
\end{equation*}
We write the double inclusion $X \subset \subset V$ to indicate that the subset $X$ is finite. The set of finite subsets of $V$ will be denoted by $\mathcal{P}_{f}(V)$. 

The \textit{diameter} of a finite subset $X$ of $V$ is given by $\diam(X) = \max_{x,y \in X}\mathrm{dist}(x,y)$. For $A \subset \subset V$ and $r>0$, we denote by $\partial_{r} A$ the subset of $A$ made of all sites whose distance from $A^{c}:=V \setminus A$ is less than or equal to $r$. In particular, we will write $ \partial A:=\partial_{1} A$.

Let us now come to the Hilbert space associated to the quantum spin system. At each site $x \in V$ we set a local Hilbert space $\mathcal{H}_{x} \equiv \mathbb{C}^{D}$ of dimension $D \in \mathbb N$. For each $X \in \mathcal{P}_{f}(V)$ we then have the space of states $\mathcal{H}_{X} = \otimes_{v \in X}\mathcal{H}_{v} \equiv (\mathbb{C}^{D})^{\otimes |X|}$ of dimension $D_{X} = D^{|X|}$ and the algebra of observables $\mathfrak{A}_{X} = \mathcal{B}(\mathcal{H}_{X})$. As usual, for two finite subsets $X$, $Y$ of $V$ such that $X \subset Y$ we identify $\mathfrak{A}_{X} \subset \mathfrak{A}_{Y}$ via the canonical linear isometry $\mathfrak{A}_{X} \to \mathfrak{A}_{Y}$ given by $Q \mapsto Q \otimes \mathbbm{1}_{Y \setminus X}$. This allows to define the algebra of local observables  as the inductive limit $\mathfrak{A}_{loc}:=\bigcup_{X \in \mathcal{P}_{f}(V)} \mathfrak{A}_{X}$.  We will say that a local observable $Q \in \mathfrak{A}_{loc}$ is supported in $X \in \mathcal{P}_{f}(V)$, if $Q$ belongs to $\mathfrak{A}_{X}$.

Next, we will describe which notation we will use for different versions of the trace. For each $X \in \mathcal P_{f}(V)$, we will denote by $\Tr_{X}:\mathfrak{A}_{X} \longrightarrow \mathbb{C}$ the full (unnormalized) trace over $X$. For the partial trace over $X$ on any $ X' \in  P_{f}(V) $ with $X \subset X'$, we will write
\begin{equation*}
    \operatorname{tr}_{X}:= \mathrm{Tr}_X \otimes \mathrm{id}_{X^\prime \setminus X}: \mathfrak{A}_{X'} \longrightarrow \mathfrak{A}_{X' \setminus X}
\end{equation*}
and combine this map with the above canonical isometries. For instance, for a state $\sigma \in \mathfrak{A}_{X^\prime}$ and $X \subset X^\prime$, we can write $\operatorname{tr}_{X}(\sigma) = \operatorname{tr}_{X}(\sigma) \otimes \mathds{1}_{ X} \in \mathfrak A_{X^\prime}$. In particular, if $Q \in \mathfrak{A}_{X}$, then we will deal with $\operatorname{tr}_{X}(Q)$ as a multiple of identity, $\operatorname{tr}_{X}(Q) = \Tr_X(Q) \mathds{1}_X$. The normalized version of the partial trace, which is a conditional expectation, will be denoted $\mathbb E_{X^c} := \tr_{X}/D_{X}$. In this case, given $Q \in \mathfrak{A}_{X}$ and two subsets $Y, Y' \in \mathcal P_{f}(V)$ with $Y, Y' \supset X$, we can identify $\mathbb E_{Y^c}(Q) = \mathbb E_{(Y')^c}(Q)$. In terms of norms, we will denote by $\|Q\|$ the operator norm of $Q \in \mathfrak A_X$, and by $\| Q \|_1 = \operatorname{Tr}_X(| Q |)$ its trace norm.

Let us present now the kind of Hamiltonians we will consider. By a \textit{local interaction}, we refer to a family $\Phi = (\Phi_{X})_{X \in \mathcal{P}_{f}(V)}$,  where $\Phi_X \in \mathfrak{A}_X$ and $\Phi_X = \Phi_X^\ast$ for every $X \in \mathcal{P}_{f}(V)$. To quantify the decay of the interactions, we introduce for each $\lambda, \mu>0$ the following notation
\begin{equation}\label{eq:norm_interaction}
    \| \Phi\|_{\lambda, \mu}:= \sup_{x \in V} \sum_{X \ni x} \| \Phi_{X}\| e^{\lambda |X| + \mu \diam(X)} \in [0, \infty]\,.
\end{equation}
We will say that $\Phi$ has \textit{finite range} $r > 0$ and \textit{strength} $J > 0$ if $\| \Phi_{X}\| = 0$ whenever $X$ has diameter greater than $r$ and $\| \Phi_{X}\| \leq J$ for all $X \in \mathcal{P}_{f}(V)$. Moreover, we will say that  $\Phi$ has \textit{short range}, or it is \textit{exponentially decaying}, if $\norm{\Phi}_{\lambda,\mu} < \infty$. As usual, we denote for every finite subset $Y \subset \subset V$ the corresponding \textit{Hamiltonian} by
\[ H_{Y} := \sum_{X \subset Y}{\Phi_{X}} \, , \]
the \textit{time-evolution operator} (with possibly complex-valued time) by
\[ \Gamma^{s}_{H_Y}(Q) = e^{isH_Y}Qe^{-isH_Y}  \quad , \quad s \in \mathbb{C}\,,\]
and the \textit{Gibbs state} at inverse temperature $\beta >0$ by
\begin{equation*}
    \rho^{Y}_\beta := \frac{e^{-\beta H_Y}}{\Tr_{Y}[e^{-\beta H_Y}]} \,.
\end{equation*}
Moreover, for any $X \subset Y$, we denote by $\rho_{\beta,X}^Y$ the marginal in $X$ of the Gibbs state of $H_Y$ at inverse temperature $\beta >0$, namely
\[ \rho_{\beta,X}^Y := \tr_{Y \setminus X} \left[\rho^Y_\beta \right] \, . \]
Note that seen as an element in $\mathfrak{A}_Y$, $\rho_{\beta,X}^Y$ is no longer a quantum state, because its trace is no longer normalized. We will frequently drop the superindex $Y$ when it is clear from the context, as well as the subindex $\beta$ when we are fixing the temperature.

\subsection{Locality and time evolution }\label{subsec:time_evolution}

We devote this subsection to deriving some estimates on the norm of the time-evolution operator for short-range interactions. We provide below both universal estimates on such time-evolution operators, as well as decay estimates on the difference between pairs of them.  

\begin{figure}[ht]
\begin{center}

\begin{tikzpicture}[scale=0.4]

%%%%%%%%%%%%%%%%%%%% LEFT PICTURE

\fill [darkyellow!50!orange!25!white] (-3.5,-1.5) rectangle (18.5,13.5);

%Adding the lattice points

\foreach \n in {-3,-2,...,18}{
\foreach \m in {-1,0,1,...,13}{
\shade[shading=ball, ball color=darkred!10!white] (\n,\m) circle (0.2);
}
 }

%%% Region Z

\draw [black, very thick] (5.5,4.5) rectangle (8.5,7.5);
\fill [blue, opacity=0.05] (5.5,4.5) rectangle (8.5,7.5);
\node at (6.5,6.5) {\small $\textbf  Z$};

%%% Región Y

\draw [blue, very thick] (0.5,0.5) rectangle (12.5,11.5);
\node at (3.5,10.5) {\small $\textbf  Y$};

%%% Región Y'

\draw [red, very thick] (-2.5,-0.5) rectangle (13.5,12.5);
\node at (-1.5,10.5) {\small $\textbf  Y'$};

%%% Distance from Z to Y^c

\draw[<->,  red!80!black, very thick] (8.5,5.5) -- (12.5,5.5);
\node[red!80!black] at (10.5,6.5) {\tiny $\operatorname{dist}(Z,Y^{c})$};

\end{tikzpicture}

 \caption{Example of configuration of regions $Z \subset Y \subset Y'$ in Proposition \ref{theo:localityEstimates}. If the local interaction of the system is exponentially decaying, then the evolutions of an observable supported in $Z$ under $H_{Y}$ and $H_{Y'}$, respectively, are exponentially close to each other in the distance from $Z$ to the complement of $Y$.}
  \end{center}
\end{figure}
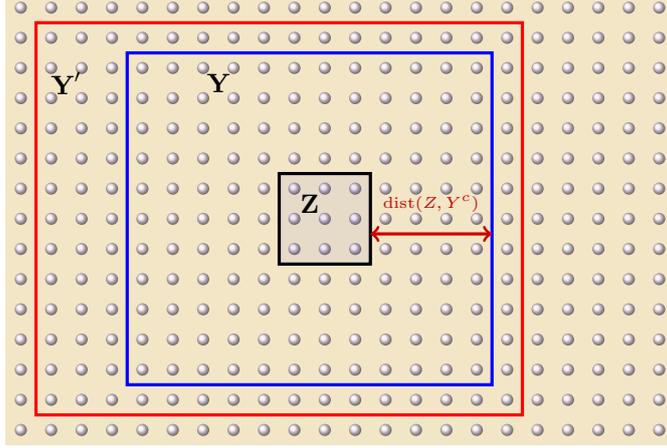

\begin{prop}\label{theo:localityEstimates}
Let  $\Phi$ be an interaction on $V$ satisfying for some constants $\lambda, \mu \in [0, \infty)$ that
\begin{equation*} 
\| \Phi\|=\| \Phi\|_{\lambda, \mu}:=\sup_{x \in V} \sum_{X \ni x} \| \Phi_{X}\| e^{\lambda |X| + \mu \operatorname{diam}(X)} < \infty   \,, 
\end{equation*}
and let $Q$ be an observable having support in a finite subset $Z$ of $V$. If $Y \in \mathcal{P}_{f}(V)$, then for every $s \in \mathbb{C}$ with $|s| < \lambda/(2 \| \Phi\|)$ 
\begin{equation}\label{equa:time_evolution_main_1}
\|\Gamma_{H_{Y}}^{s}(Q)\| \leq \| Q\| \, e^{\lambda |Z|}  \, \frac{\lambda}{\lambda - 2 \| \Phi\| \, |s|}\, . 
\end{equation}
Moreover, if $Y' \in \mathcal{P}_{f}(V)$ and $Z \subset Y \subset Y'$, then for every $s \in \mathbb{C}$ with $|s| < \lambda/(2 \| \Phi\|)$,
\begin{equation}\label{equa:time_evolution_main_2}
\left\| \Gamma_{H_{Y'}}^{s}(Q) - \Gamma_{H_{Y}}^{s}(Q) \right\| \leq \| Q\| \,e^{\lambda |Z|} \,  \frac{2 \| \Phi\| \, |s| \lambda}{(\lambda - 2 \| \Phi\| \, |s|)^{2}} \,  e^{- \mu \operatorname{dist}(Z, V \setminus Y)} \,.
 \end{equation}
\end{prop}

Before proving this result, let us mention that Eq.\ \eqref{equa:time_evolution_main_1} should be compared to  \cite[Theorem 6.2.4]{BraRob97}, to which our estimate reduces whenever  $\mu=0$. Moreover, Eq.\ \eqref{equa:time_evolution_main_2} for $\mu=0$ is essentially in \cite[after Theorem 6.2.4]{BraRob97} and can be interpreted as a manifestation of locality of the interactions. Note that we have introduced the weight  $e^{\mu \diam(X)}$ to control the decay of the interactions with the diameter, following the approach of \cite{Ueltschi2004}.

\begin{proof}
The time-evolution operator can be written in terms of derivations $\delta_{H_{Y}}(Q) = i[H_{Y}, Q]$:
\begin{equation}\label{equa:timeEvolutionAux1} 
 \Gamma_{H_{Y}}^{s}(Q) = \sum_{m=0}^{\infty} \frac{s^{m}}{m!} \delta_{H_{Y}}^{m}(Q)\,.
 \end{equation}
In turn, each $\delta_{H_{Y}}(Q)$ can be expanded as a sum of $\delta_{X}(Q):=[\Phi_{X}, Q]$, leading to
\[ \delta_{H_{Y}}^{m}(Q) = \delta_{H_{Y}} \circ \ldots \circ \delta_{H_{Y}}(Q) = \sum_{X_{1} \cap Z \neq \emptyset} \ldots \sum_{X_{m} \cap S_{m-1} \neq \emptyset} \, \delta_{X_{m}} \circ \delta_{X_{m-1}} \circ \ldots \circ \delta_{X_{1}}(Q)\,, \]
where $S_{0}:=Z$,  $S_{j} := Z \cup X_{1} \cup \ldots \cup X_{j}$ for $1 \leq j \leq m$, and the sums are extended over subsets $X_{j} \subset Y$. We next follow an argument inspired by the proof of \cite[Theorem 6.2.4]{BraRob97} to estimate
\begin{equation}\label{equa:timeEvolutionAux2}
\| \delta_{H_{Y}}^{m}(Q)\|  \leq 2^{m} \| Q\|  \sum_{X_{1} \cap Z \neq \emptyset} \ldots \sum_{X_{m} \cap S_{m-1} \neq \emptyset} \,\,\, \prod_{j=1}^{m} \| \Phi_{X_{j}}\| \,.
\end{equation}
Next, let us rewrite
\[ 
\prod_{j=1}^{m} \| \Phi_{X_{j}}\| =  e^{-\lambda ( |X_{1}| + \ldots |X_{m}|)} \, \prod_{j=1}^{m} \| \Phi_{X_{j}}\| e^{\lambda |X_{j}|} \leq e^{\lambda |Z|} e^{- \lambda |S_{m}|} \, \prod_{j=1}^{m} \| \Phi_{X_{j}}\| e^{\lambda |X_{j}|}  \, .
\]
Applying the inequality $e^{-\lambda x} \leq \frac{m!}{\lambda^{m} x^{m}}$ valid for every $\lambda$, $x >0$ with $x=|S_{m}|$ in the previous expression,  we moreover get 
\begin{equation}\label{equa:timeEvolutionAux3}
\prod_{j=1}^{m} \| \Phi_{X_{j}}\| \leq e^{\lambda |Z|} \frac{m!}{|S_{m}|^{m} \lambda^{m}} \, \prod_{j=1}^{m} \| \Phi_{X_{j}}\| e^{\lambda |X_{j}|} \leq e^{\lambda |Z|} \frac{m!}{\lambda^{m}} \, \prod_{j=1}^{m} \frac{1}{|S_{j-1}|} \| \Phi_{X_{j}}\| e^{\lambda |X_{j}|}  \, ,
\end{equation}
and thus
\begin{equation}\label{equa:timeEvolutionAux4}
\| \delta_{H_{Y}}^{m}(Q)\| \leq  2^{m} \| Q\| \, e^{\lambda |Z|} \frac{m!}{\lambda^{m}} \sum_{X_{1} \cap Z \neq \emptyset} \ldots \sum_{X_{m} \cap S_{m-1} \neq \emptyset} \,\,\, \prod_{j=1}^{m} \frac{1}{|S_{j-1}|} \| \Phi_{X_{j}}\|e^{\lambda |X_{j}|} \, .
\end{equation}
Finally, note that we can bound for each finite subset $Y$ of the lattice
\begin{equation}\label{equa:timeEvolutionAux5}
\sum_{X \cap Y \neq \emptyset} \| \Phi_{X}\| e^{\lambda |X| + \mu \operatorname{diam}(X)} \leq \sum_{v \in Y} \sum_{X \ni v}\| \Phi_{X}\| e^{\lambda |X| + \mu \operatorname{diam}(X)}\leq |Y| \| \Phi\|\,. 
\end{equation}
Applying Eq.\ \eqref{equa:timeEvolutionAux5}  iteratively,  we can estimate
\[ \sum_{X_{1} \cap Z \neq \emptyset} \ldots \sum_{X_{m} \cap S_{m-1} \neq \emptyset} \,\,\, \prod_{j=1}^{m} \frac{1}{|S_{j-1}|} \| \Phi_{X_{j}}\|e^{\lambda |X_{j}|} \leq \| \Phi\|^{m} \,,\]
so that 
\begin{equation}\label{equa:EvolutionNormAux2}
\| \delta_{H_{Y}}^{m}(Q)\| \leq \left(\frac{2\| \Phi\|}{\lambda}\right)^{m} \| Q\| \, e^{\lambda |Z|} \, m! \,.
\end{equation}
Applying Eq.\ \eqref{equa:EvolutionNormAux2} in Eq.\ \eqref{equa:timeEvolutionAux1}, 
\[ \|\Gamma_{H_{Y}}^{s}(Q)\| \leq \| Q\| \, e^{\lambda |Z|}  \, \sum_{m=0}^{\infty} \left( \frac{2\| \Phi\|\,|s|}{\lambda}\right)^{m} \, . \]
Using finally the formula $\frac{1}{1-x} = \sum_{m=0}^{\infty} x^{m}$ for $|x|<1$ we arrive at Eq.\ \eqref{equa:time_evolution_main_1}. 

Next, we prove the other estimate. Using Eq.\ \eqref{equa:timeEvolutionAux1}, we can again upper bound
\begin{equation}\label{equa:timeEvolutionComparisonAux3} 
\| \Gamma^{s}_{H_{Y'}}(Q) - \Gamma^{s}_{H_{Y}}(Q)\| \leq \sum_{m=1}^{\infty} \frac{|s|^{m}}{m!} \left\|\delta_{H_{Y'}}^{m}(Q) - \delta_{H_Y}^{m}(Q)\right\| \, .
\end{equation}
Each summand can be bounded following a similar strategy to the first inequality of the theorem. Let us denote by $\Phi^{Y}$ the local interaction on $V$ given by $\Phi_{X}^{Y} = \Phi_{X}$ if $X \subset Y$ and $\Phi_{X}^{Y} = 0$ if $X \nsubseteq Y$. Then,
\begin{multline}\label{equa:timeEvolutionComparisonAux4} 
\| \delta_{H_{Y'}}^{m}(Q)  - \delta_{H_{Y}}^{m}(Q)  \| \\ 
\leq 2^{m} \| Q\| \sum_{j=1}^{m} \sum_{X_{1} \cap Z \neq \emptyset} \ldots \sum_{X_{m} \cap S_{m-1} \neq \emptyset}  \left( \prod_{\substack{i=1}}^{j-1} \| \Phi^Y_{X_{i}}\| \right) \cdot \| \Phi_{X_{j}} - \Phi^Y_{X_{j}}\| \cdot \left( \prod_{\substack{i=j+1}}^{m} \| \Phi_{X_{i}}\| \right)\,,
\end{multline}
where the sums run over subsets $X_{i}$ of $Y^\prime$ satisfying $X_{i} \cap S_{i-1} \neq \emptyset$.

To deal with the previous term, we argue as with Eq.\ \eqref{equa:timeEvolutionAux2}, but adding one additional intermediate step. More specifically, we first estimate as in Eq.\ \eqref{equa:timeEvolutionAux3} to get for each $j \in \{ 1, \ldots, m\}$
\begin{equation*}
    \begin{split}
        & \left( \prod_{\substack{i=1}}^{j-1} \| \Phi^Y_{X_{i}}\| \right) \cdot \| \Phi^Y_{X_{j}} - \Phi_{X_{j}}\| \cdot \left( \prod_{\substack{i=j+1}}^{m} \| \Phi_{X_{i}}\| \right) \leq \\[2mm]
        & \hspace{3cm} \leq  e^{\lambda |Z|} \frac{m!}{\lambda^{m}} \,\, \left( \prod_{\substack{i=1}}^{j-1} \frac{\| \Phi^Y_{X_{i}}\| e^{\lambda |X_{i}|}}{|S_{i-1}|} \right) \cdot \frac{\| \Phi^Y_{X_{j}} - \Phi_{X_{j}}\| e^{\lambda |X_{j}|}}{|S_{j-1}|} \cdot \left( \prod_{\substack{i=j+1}}^{m} \frac{\| \Phi_{X_{i}}\| e^{\lambda |X_{i}|}}{|S_{i-1}|} \right) \, .
    \end{split}
\end{equation*}
Then, applying Eq.\ \eqref{equa:timeEvolutionAux5} to the last $m-j$ terms iteratively
\begin{equation}\label{equa:timeEvolutionAux6}
\begin{split}
& \sum_{X_{1} \cap Z \neq \emptyset} \ldots \sum_{X_{m} \cap S_{m-1} \neq \emptyset}  \left( \prod_{\substack{i=1}}^{j-1} \| \Phi^Y_{X_{i}}\| \right) \cdot \| \Phi^Y_{X_{j}} - \Phi_{X_{j}}\| \cdot \left( \prod_{\substack{i=j+1}}^{m} \| \Phi_{X_{i}}\| \right) \\[2mm]
& \hspace{1cm} \leq \| \Phi\|^{m-j} \cdot e^{\lambda |Z|} \frac{m!}{\lambda^{m}} \sum_{X_{1} \cap Z \neq \emptyset} \ldots \sum_{X_{j} \cap S_{j-1} \neq \emptyset}  \left( \prod_{\substack{i=1}}^{j-1} \frac{\| \Phi^Y_{X_{i}}\| e^{\lambda |X_{i}|}}{|S_{i-1}|} \right) \cdot \frac{\| \Phi^Y_{X_{j}} - \Phi_{X_{j}}\| e^{\lambda |X_{j}|}}{|S_{j-1}|} \, .
\end{split}
\end{equation}
Let us observe that, by definition, $\Phi_{X} -\Phi_{X}^{Y} = 0$ if $X \subset Y$ and $\Phi_{X} -\Phi_{X}^{Y} = \Phi_{X}$ if $X \nsubseteq Y$ (i.e. $X \cap (V \setminus Y) \neq \emptyset$). Thus, in the above expression we can restrict the sum over $X_{j}$ with $X_{j} \cap S_{j-1} \neq \emptyset$ to sets $X_{j}$ that also satisfy $X_{j} \cap (V \setminus Y) \neq \emptyset$ and simplify $\Phi_{X_j} -\Phi_{X_j}^{Y} = \Phi_{X}$. Thus, the upper bound from Eq.\ \eqref{equa:timeEvolutionAux6} can be rewritten as
\begin{equation}\label{equa:timeEvolutionAux7}
\| \Phi\|^{m-j} \cdot e^{\lambda |Z|} \frac{m!}{\lambda^{m}} \sum_{X_{1} \cap Z \neq \emptyset} \ldots \sum_{X_{j-1} \cap S_{j-2} \neq \emptyset}\sum_{\substack{X_{j} \cap S_{j-1} \neq \emptyset \\ X_{j} \cap (V \setminus Y) \neq \emptyset}}  \left( \prod_{\substack{i=1}}^{j-1} \frac{\| \Phi^Y_{X_{i}}\| e^{\lambda |X_{i}|}}{|S_{i-1}|} \right) \cdot \frac{\|  \Phi_{X_{j}}\| e^{\lambda |X_{j}|}}{|S_{j-1}|}\,. 
\end{equation}
Note that the conditions on $X_{1}, \ldots, X_{j}$ yield that
\[ \operatorname{dist}(Z, V \setminus Y) \leq \sum_{i=1}^{j} \diam(X_{i})\,. \]
Hence, we can introduce a factor:
\begin{multline}
\left( \prod_{\substack{i=1}}^{j-1} \frac{\| \Phi^Y_{X_{i}}\| e^{\lambda |X_{i}|}}{|S_{i-1}|} \right) \cdot \frac{\|  \Phi_{X_{j}}\| e^{\lambda |X_{j}|}}{|S_{j-1}|}\\
 \leq 
\left( \prod_{\substack{i=1}}^{j-1} \frac{\| \Phi^Y_{X_{i}}\| e^{\lambda |X_{i}| + \mu \operatorname{diam}(X_{i})}}{|S_{i-1}|} \right) \cdot \frac{\|  \Phi_{X_{j}}\| e^{\lambda |X_{j}| + \mu \operatorname{diam}(X_{j})}}{|S_{j-1}|} e^{- \mu \operatorname{dist}(Z, V \setminus Y)} \, .
\end{multline}
Inserting the last expression in Eq.\  \eqref{equa:timeEvolutionAux7}, and using again Eq.\ \eqref{equa:timeEvolutionAux5} considering the fact that $\| \Phi^{Y}\| \leq \| \Phi\|$, we show that
\[
\sum_{X_{1} \cap Z \neq \emptyset} \ldots \sum_{X_{j-1} \cap S_{j-2} \neq \emptyset}\sum_{X_{j} \cap S_{j-1} \neq \emptyset }  \left( \prod_{\substack{i=1}}^{j-1} \frac{\| \Phi^Y_{X_{i}}\| e^{\lambda |X_{i}| + \mu \operatorname{diam}(X_{i})}}{|S_{i-1}|} \right) \cdot \frac{\|  \Phi_{X_{j}}\| e^{\lambda |X_{j}| + \mu \operatorname{diam}(X_{j})}}{|S_{j-1}|} \leq \| \Phi\|^{j}\,.
\]
Thus, we deduce from Eq.\ \eqref{equa:timeEvolutionComparisonAux4} 
\begin{align*}
\| \delta_{H_{Y'}}^{m}(Q)  - \delta_{H_{Y}}^{m}(Q)  \| & \leq 2^{m} \| Q\| m \, \frac{m!}{\lambda^{m}} \, \| \Phi\|^{m}  \, e^{\lambda |Z|} \,  e^{-\mu \operatorname{dist}(Z, V \setminus Y)}\,.
\end{align*}
Replacing this estimate in Eq.\ \eqref{equa:timeEvolutionComparisonAux3}, we conclude that
\[ \left\| \Gamma_{H_{Y'}}^{s}(Q) - \Gamma_{H_{Y}}^{s}(Q) \right\| \leq \| Q\| \,e^{\lambda |Z|} \,  \sum_{m=1}^{\infty} m \left( \frac{2\| \Phi\| \, |s|}{\lambda}\right)^{m}  e^{-\mu \operatorname{dist}(Z, V \setminus Y)} \,.\]
Finally, we apply the formula $x/(1-x)^{2} = \sum_{m=1}^{\infty}mx^{m}$ with $x = 2 \| \Phi\|\, |s|/ \lambda < 1$ to get the desired result.
\end{proof}

\subsection{Araki's expansionals} 

In this subsection, we present  some estimates on Araki's expansionals \cite{Araki1969} for a Hamiltonian with short-range interactions. We use the following notation for the expansionals:

\begin{equation*}
\begin{aligned}
        E_{X,Y}(s)  := & \operatorname{e}^{- s H_{XY}}\operatorname{e}^{s H_{X} + s H_Y} \,  \quad \text{ for every } s \in \mathbb{C} \, . \\
        E_{X,Y}  :=&  E_{X,Y}(1) = \operatorname{e}^{-  H_{XY}}\operatorname{e}^{ H_{X} +  H_Y} \, .
\end{aligned}
\end{equation*}
 \noindent First let us recall that for every pair of observables $H$ and $W$ and every real value $\beta \geq 0$ we have the following expansion in terms of the time-evolution operator (see \cite[Eq.\ (5.1)]{Araki1969})

\[
e^{\beta(H+W)}e^{-\beta H} = \sum_{m=0}^{\infty} \,\, \int_{0}^{\beta}d t_{1} \int_{0}^{t_{1}}dt_{2} \ldots \int_{0}^{t_{m-1}} dt_{m} \,\, \prod_{j}^{m \to 1} \Gamma_{H}^{-it_{j}}(W) \, ,
\]
 where we are denoting $\underset{j}{\overset{m \to n}{\prod}}Q_{j} := Q_{m} Q_{m-1} \ldots Q_{n}$ for every $m \geq n$, and where we recall that $\Gamma_H^{-it}(W) = e^{tH}We^{-tH}$. Changing the signs $H \mapsto -H$ and $W \mapsto - W$ we can then rewrite
\begin{equation} \label{eq:Duhamel_formula}
 e^{-\beta(H+W)}e^{\beta H} = \sum_{m=0}^{\infty} (-1)^{m}\,\, \int_{0}^{\beta}d t_{1} \int_{0}^{t_{1}}dt_{2} \ldots \int_{0}^{t_{m-1}} dt_{m} \,\, \prod_{j}^{m \to 1} \Gamma_{H}^{it_{j}}(W) \,.
\end{equation}

\begin{prop}\label{prop:estimates_expansionals_normal}
Let $A,B,C$ be disjoint finite subsets of $V$ and let $\Phi$ be a local interaction on $V$ satisfying for some $\lambda, \mu >0$ 
\begin{equation*} 
\| \Phi\|=\| \Phi\|_{\lambda, \mu}:=\sup_{x \in V} \sum_{X \ni x} \| \Phi_{X}\| e^{\lambda |X| + \mu \operatorname{diam}(X)} < \infty   \,, 
\end{equation*} 
Then, for every real number $\beta$ with $|\beta|<  \frac{\lambda}{2 \| \Phi\|}$ we have
\begin{equation}\label{eq:estimate_expansional_normal}
    \norm{E_{A,B}(\beta)} \leq \exp{\frac{\| \Phi\|\, |\beta| \lambda}{\lambda-2 \| \Phi\| \, |\beta| }   \, \textstyle\sum_{v \in A} e^{-\mu \operatorname{dist}(v,B)}  }   \, ,
\end{equation}
and
\begin{equation}\label{eq:estimate_differences_expansional_normal}
   \textstyle \norm{E_{A,BC}(\beta) - E_{A,B}(\beta)} \leq \exp{ \frac{ \| \Phi\|\, |\beta| \lambda}{\lambda - 2 \| \Phi\|\, |\beta|} \sum_{v \in A} e^{-\mu \operatorname{dist}(v,BC)}} \cdot \frac{|\beta| \, \| \Phi\|^2 (\lambda + |\beta|)^{2} }{(\lambda - 2 \| \Phi\| |\beta|)^{2}} \sum_{v \in A} e^{- \mu \operatorname{dist}(v,C)}\,.
\end{equation}
\end{prop}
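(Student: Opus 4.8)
The plan is to start from the Duhamel/Araki expansion \eqref{eq:Duhamel_formula} applied to $H = H_A + H_B$ (so $H_{AB} = H + W$ with $W := H_{AB} - H_A - H_B = \sum_{X \subset AB,\, X \cap A \neq \emptyset,\, X \cap B \neq \emptyset} \Phi_X$ the ``interaction across the cut''), which gives
\[
E_{A,B}(\beta) = e^{-\beta H_{AB}} e^{\beta(H_A + H_B)} = \sum_{m=0}^\infty (-1)^m \int_0^\beta dt_1 \cdots \int_0^{t_{m-1}} dt_m \, \prod_j^{m \to 1} \Gamma_{H_A + H_B}^{i t_j}(W).
\]
First I would bound each factor $\|\Gamma_{H_A+H_B}^{i t_j}(W)\|$. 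Since $W$ is a sum of interaction terms $\Phi_X$ with $X$ straddling the cut, I would split $W = \sum_{v \in A}\sum_{X \ni v,\, X \cap B \neq \emptyset} \Phi_X$ (counting each such $X$ via one of its sites in $A$, or more carefully distributing weight) and apply the locality estimate \eqref{equa:time_evolution_main_1} of \Cref{theo:localityEstimates} to each $\Gamma_{H_A+H_B}^{it_j}(\Phi_X)$: this yields $\|\Gamma^{it_j}(\Phi_X)\| \le \|\Phi_X\| e^{\lambda|X|}\frac{\lambda}{\lambda - 2g_{\lambda,\mu}|t_j|} \le \|\Phi_X\| e^{\lambda|X|}\frac{\lambda}{\lambda - 2g_{\lambda,\mu}|\beta|}$ for $|t_j| \le |\beta|$. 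Summing over the straddling $X$ and using the weight $e^{\mu\operatorname{diam}(X)} \ge e^{\mu\operatorname{dist}(v,B)}$ whenever $X \ni v$ and $X \cap B \neq\emptyset$, I extract the decay factor: $\|W\|$ in this weighted sense is controlled by $\frac{\lambda}{\lambda - 2g_{\lambda,\mu}|\beta|} \sum_{v\in A} g_{\lambda,\mu} e^{-\mu\operatorname{dist}(v,B)}$. Call this quantity $\kappa := \frac{g_{\lambda,\mu}\lambda}{\lambda - 2g_{\lambda,\mu}|\beta|}\sum_{v\in A} e^{-\mu\operatorname{dist}(v,B)}$.

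Then the $m$-th term in the series is bounded in norm by $\frac{|\beta|^m}{m!} \kappa^m$ (the nested time integral over the simplex $0 \le t_m \le \cdots \le t_1 \le |\beta|$ has volume $|\beta|^m/m!$, and each of the $m$ factors contributes at most $\kappa/\lambda \cdot \lambda$... — I need to track the normalization so that the per-factor bound is $\kappa$, absorbing the $\lambda/(\lambda - 2g|\beta|)$ into $\kappa$ as above and the extra $g_{\lambda,\mu}$ from the weighted sum). Summing the geometric-type series $\sum_m \frac{(|\beta|\kappa)^m}{m!} = e^{|\beta|\kappa}$ — wait, here I should be careful: I want the exponent in \eqref{eq:estimate_expansional_normal} to come out as $\frac{g_{\lambda,\mu}|\beta|\lambda}{\lambda - 2g_{\lambda,\mu}|\beta|}\sum_v e^{-\mu\operatorname{dist}(v,B)}$, so $\kappa$ should be defined \emph{without} the extra $|\beta|$, i.e. $\kappa = \frac{g_{\lambda,\mu}\lambda}{\lambda - 2g_{\lambda,\mu}|\beta|}\sum_v e^{-\mu\operatorname{dist}(v,B)}$ and the per-term bound is $\frac{|\beta|^m \kappa^m}{m!}$, giving $e^{|\beta|\kappa}$, exactly \eqref{eq:estimate_expansional_normal}.

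For the second estimate \eqref{eq:estimate_differences_expansional_normal}, I would write $E_{A,BC}(\beta) - E_{A,B}(\beta)$ using the same expansion but now with $W' := H_{ABC} - H_A - H_{BC}$ versus $W := H_{AB} - H_A - H_B$, and a telescoping trick: subtract the two series term by term, and in each term replace one factor $\Gamma^{it_j}(W')$ by $\Gamma^{it_j}(W') - \Gamma^{it_j}(W)$ (plus, for the deeper terms, also account that the base Hamiltonian differs, $H_A + H_{BC}$ vs $H_A + H_B$, which requires \eqref{equa:time_evolution_main_2} comparing $\Gamma_{\Phi}$ to $\Gamma_{\Phi^Y}$). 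The difference $W' - W$ consists precisely of interaction terms $\Phi_X$ with $X \cap A \neq \emptyset$ and $X \cap C \neq \emptyset$, so its weighted norm carries the factor $\sum_{v \in A} e^{-\mu\operatorname{dist}(v,C)}$; combined with the prefactor $e^{|\beta|\kappa'}$ (where $\kappa'$ uses $\operatorname{dist}(v,BC)$) coming from bounding the remaining $m-1$ factors, and the combinatorial factor $m$ from choosing which factor to difference (summed via $\sum_m m x^{m-1}$-type series giving the $(\lambda - 2g_{\lambda,\mu}|\beta|)^{-2}$ and the polynomial $(\lambda+|\beta|)^2$ prefactor), this reproduces \eqref{eq:estimate_differences_expansional_normal}.

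The main obstacle I expect is bookkeeping the telescoping in the second estimate cleanly: there are genuinely two sources of difference between $E_{A,BC}$ and $E_{A,B}$ — the ``perturbation'' operator ($W'$ vs $W$) and the ``free'' Hamiltonian ($H_A+H_{BC}$ vs $H_A+H_B$, which differ by $H_C$ plus $B$–$C$ couplings) — and one must show that both sources, when a factor is replaced by a difference, produce the decay $e^{-\mu\operatorname{dist}(v,C)}$ and not merely $e^{-\mu\operatorname{dist}(v, BC)}$. The resolution is that any interaction term distinguishing the two setups must touch $C$, so the diameter weight $e^{\mu\operatorname{diam}(X)}$ dominates $e^{\mu\operatorname{dist}(v,C)}$ for $v \in A \cap X$; invoking \eqref{equa:time_evolution_main_2} of \Cref{theo:localityEstimates} handles the free-Hamiltonian discrepancy with exactly this decay. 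Everything else is summing majorant series, as in the proof of \Cref{theo:localityEstimates}.
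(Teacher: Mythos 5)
Your plan matches the paper's proof essentially step for step: the Duhamel expansion with per-factor bounds from \Cref{theo:localityEstimates}, the diameter-weight trick to extract $\sum_{v\in A} e^{-\mu\operatorname{dist}(v,B)}$, and, for the difference, a term-by-term telescoping in which the single differenced factor is split into the $A$--$C$ straddling interaction terms (bounded as in the first estimate, giving $\sum_{v\in A}e^{-\mu\operatorname{dist}(v,C)}$) and the change of free Hamiltonian (handled exactly by Eq.\ \eqref{equa:time_evolution_main_2}). The only cosmetic inaccuracy is the attribution of the prefactor $(\lambda+|\beta|)^2(\lambda-2g_{\lambda,\mu}|\beta|)^{-2}$: it arises from bounding that one differenced factor, while the sum over $m$ with the combinatorial factor $m$ remains an exponential series $\sum_m \frac{|\beta|^m}{(m-1)!}(\kappa')^{m-1}$ rather than a geometric one, which does not affect the final bound.
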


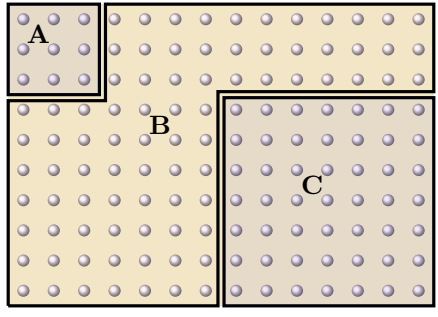
\begin{figure}[ht]
\begin{center}

\begin{tikzpicture}[scale=0.4]

%%%%%%%%%%%%%%%%%%%% LEFT PICTURE

\fill [darkyellow!50!orange!25!white] (-0.5,-0.5) rectangle (13.5,9.5);

%Adding the lattice points

\foreach \n in {0,1,...,13}{
\foreach \m in {0,1,...,9}{
\shade[shading=ball, ball color=darkred!10!white] (\n,\m) circle (0.2);
}
 }

%%% Region A

\draw [black, very thick] (-0.5,6.5) rectangle (2.5,9.5);
\fill [blue, opacity=0.05] (-0.5,6.5) rectangle (2.5,9.5);
\node at (0.5,8.5) {\small $\textbf A$};

%%% Region B

\draw [black, very thick] (-0.5,-0.5) -- (6.4,-0.5) -- 
(6.4, 6.6) -- (13.5, 6.6) --
(13.5, 9.5) -- (2.7,9.5) --
(2.7,6.3) -- (-0.5,6.3) --
(-0.5,-0.5);
\node at (4.5,5.5) {\small  $\textbf B$};

%%% Region C

\draw [black, very thick] (6.6,-0.5) rectangle (13.5,6.4);
\fill [blue, opacity=0.05] (6.6,-0.5) rectangle (13.5,6.4);
\node at (9.5,3.5) {\small $\textbf C$};

\end{tikzpicture}

 \caption{Example of configuration of the three disjoint regions $A,B,C$ in Proposition \ref{prop:estimates_expansionals_normal}}
  \label{fig:three_disjoint_regions}
  \end{center}
\end{figure}

\begin{rem}
    In fact, the exponential growth in $\beta$ is unavoidable, as one can see from the commutative case \cite{kochanowski2023MLSI}.
\end{rem}

{
\begin{rem} \label{rem:cupcake}
    In Eq. \eqref{eq:estimate_expansional_normal} we could have also written $\sum_{v \in B} e^{- \mu \operatorname{dist}(v,A)}$, since $E_{A,B}(\beta) = E_{B,A}(\beta)$. Thus, a sharper upper bound would be to take the minimum of both quantities.
\end{rem}

\begin{rem} \label{rem:onion}
In both inequalities \eqref{eq:estimate_expansional_normal} and \eqref{eq:estimate_differences_expansional_normal}, we find expressions of the form $\sum_{v \in X} e^{- \mu \operatorname{dist}(v,Y)}$, that can be understood as a way of measuring the size of the boundary of $X$ with respect to $Y$. Indeed, let us consider the case $V=\mathbb{Z}^{g}$ with the distance induced by any of the $\| \cdot\|_{p}$ norm ($1 \leq p \leq \infty$). Given two finite and disjoint subsets $A$ and $X$ of $V$, we can estimate for example 
\[ \sum_{v \in A} e^{- \mu \operatorname{dist}(v, X)} = \sum_{k=1}^{\infty} |\{ v \in A \colon k-1 < \operatorname{dist}(v, X) \leq k \}| e^{-\mu k} \,. \]
On the one hand, the summands corresponding to $k < \operatorname{dist}(A,X)$ are equal to zero, since no element $v \in A$ satisfies $\operatorname{dist}(v, X) \leq k$ in this case. Thus, using that $X \subset A^{c}$, we can estimate
\[ \sum_{v \in A} e^{- \mu \operatorname{dist}(v, X)} \leq \sum_{k \geq  \operatorname{dist}(A,X) } |\{ v \in A \colon \operatorname{dist}(v, A^{c}) \leq k \}| e^{-\mu k} \,. \]
On the other hand, this space has the property that if $v \in A$ satisfies $\operatorname{dist}(v,A^{c}) \leq k$, then the open ball $B_{d}(v,k)$ centered at $v$ with radius $k$ intersects $\partial A$, i.e. $B_{d}(v,k) \cap \partial A \neq \emptyset$. Therefore,
\[ \{ v \in A \colon \operatorname{dist}(v, A^{c}) \leq k \} \subset \cup_{u \in \partial A} B_{d}(u,k)\,, \]
and so we can upper estimate
\[ |\{ v \in A \colon \operatorname{dist}(v, A^{c}) = k\}| \leq |\partial A| \sup_{v} |B_{d}(v,k)| \leq |\partial A| (2k+1)^{g}\,. \]
We then conclude that there is a constant $\nu =  \sup_{k \in \mathbb{N}} (2k+1)^{g} e^{-\mu k /2} (\sum_{j \geq 0} e^{-\mu j/2})$ depending on $\mu$ such that
\[ \sum_{v \in A} e^{- \mu \operatorname{dist}(v, X)} \leq |\partial A| \, \sum_{k \geq  \operatorname{dist}(A,X) } (2k+1)^{g}  e^{-\mu k} \leq |\partial A| \nu e^{-(\mu/2)  \operatorname{dist}(A,X) } \,. \]
\noindent As a consequence, we can simplify the estimates in Eq. \eqref{eq:estimate_expansional_normal} by 
\begin{equation}\label{eq:simplified_estimates_expansionals}
    \norm{E_{A,B}(\beta)}  \leq \exp \{ |\beta| K  \min \{ |\partial A|, |\partial B| \}   \} \, ,
\end{equation}
where we used Remark \ref{rem:cupcake}, and 
\begin{equation}\label{eq:simplified_estimates_difference_expansionals}
  \norm{E_{A,BC}(\beta)-E_{A,B}(\beta)}  \leq \exp \{ |\beta| K |\partial A|    \} K ' |\partial A| e^{- (\mu/2) \operatorname{dist}(A,C) }   \, ,
\end{equation}
 for certain constants $K=K(\lambda, \mu, \| \Phi\|, \beta)$ and $K'=K'(\lambda, \mu, \| \Phi\|, \beta)$ depending on $\lambda, \mu, \| \Phi\|, \beta$.
\end{rem}
}

\begin{proof}[Proof of \Cref{prop:estimates_expansionals_normal}]
 
We can restrict the proof of the theorem to values $\beta \geq 0$, since for values $\beta <0$ we can rewrite
\[
E_{X,Y}(\beta) = e^{-\beta H_{XY}}e^{\beta H_{X} + \beta H_{Y}} = e^{ -|\beta| \hat{H}_{XY}} e^{|\beta| \hat{H}_{X} + |\beta | \hat{H}_{Y}}\,,
\]
where $\hat{H}$ is the Hamiltonian associated to the new interaction  \mbox{$\hat{\Phi} = - \Phi$}, which satisfies $\| \hat{\Phi}\| = \|\Phi\|$. Let us start with an observation that will be useful at several points of the proof. For an arbitrary pair of disjoint subsets $A$,$B \in \mathcal P_f(V)$ we can estimate
\begin{align}
\nonumber \underset{ \hbox{\tiny${\begin{array}{c}
         Z \subset V  \\
         Z \cap A \neq \emptyset \\
         Z \cap B \neq \emptyset
    \end{array}}$}}{\sum } \norm{ \Phi_Z} \operatorname{e}^{\lambda \abs{Z}} 
    &\nonumber \leq \sum_{v \in A} \underset{ \hbox{\tiny ${\begin{array}{c}
         Z \ni v  \\
         Z \cap B \neq \emptyset
    \end{array}}$}}{\sum } \norm{ \Phi_Z} \operatorname{e}^{\lambda \abs{Z}}\\
    &\nonumber = \sum_{v \in A} \underset{ \hbox{\tiny${\begin{array}{c}
         Z \ni v  \\
         Z \cap B \neq \emptyset
    \end{array}}$}}{\sum } \norm{ \Phi_Z} \operatorname{e}^{\lambda \abs{Z}} \operatorname{e}^{\mu \, \text{diam}(Z)} \operatorname{e}^{-\mu \, \text{diam}(Z)} \\
    &\nonumber \leq \sum_{v \in A} \underset{ \hbox{\tiny${\begin{array}{c}
         Z \ni v  \\
         Z \cap B \neq \emptyset
    \end{array}}$}}{\sum } \norm{ \Phi_Z} \operatorname{e}^{\lambda \abs{Z}} \operatorname{e}^{\mu \, \text{diam}(Z)} \operatorname{e}^{-\mu \, \text{dist}(v,B)} \\
    & \label{eq:estimates_expansionals_normal_Aux_1}  \leq \| \Phi\| \sum_{v \in A} e^{-\mu \operatorname{dist}(v,B)}
\end{align}

\noindent To prove Eq.\ \eqref{eq:estimate_expansional_normal}, let us consider $H\equiv  H_{A} + H_{B}$ and $W \equiv W_{A,B}=  H_{AB} -   H_A -  H_B$ in Eq.\ \eqref{eq:Duhamel_formula}, yielding for every $\beta \geq 0$: 

\begin{align*}
    E_{A,B}(\beta) & = \sum_{m=0}^{\infty} (-1)^{m} \,\, \int_{0}^{\beta}dt_{1} \int_{0}^{t_{1}}dt_{2} \ldots \int_{0}^{t_{m-1}} dt_{m} \,\, \prod_{j}^{m \to 1} \Gamma_{H_{A} + H_{B}}^{it_{j}}(W_{A,B})  \, .
\end{align*}
Therefore, we can estimate 
\begin{equation}\label{eq:estimate_prod_exponentials_normal_minus_identity}
   \left\|  E_{A,B}(\beta)  \right\| \leq 1 + \sum_{m=1}^{\infty} \frac{\beta^{m}}{m!} \, \left(\sup_{0 \leq t \leq \beta}\| \Gamma^{it}_{H_A + H_B}(W_{A,B})\|\right)^{m} \, .
\end{equation}
Now, let us recall that
\begin{equation}\label{eq:estimate_expansional_normal_Aux2}
    \Gamma^{it}_{H_A + H_B}(W_{A,B}) = \underset{ \hbox{\tiny${\begin{array}{c}
         Z \subset AB  \\
         Z \cap A \neq \emptyset \\
         Z \cap B \neq \emptyset
    \end{array}}$}}{\sum }\Gamma^{it}_{H_A + H_B}(\Phi_Z)   \, .
\end{equation}
Then, using Proposition \ref{theo:localityEstimates} and the fact that $0 \leq t \leq \beta <\lambda /(2\|\Phi\|)$,
\begin{equation*}
    \norm{  \Gamma^{it}_{H_A + H_B}(W_{A,B}) }  \leq \underset{ \hbox{\tiny${\begin{array}{c}
         Z \subset AB  \\
         Z \cap A \neq \emptyset \\
         Z \cap B \neq \emptyset
    \end{array}}$}}{\sum }\norm{\Gamma^{it}_{H_A + H_B}(\Phi_Z) } \leq \frac{\lambda}{\lambda - 2 \| \Phi\| \beta} \underset{ \hbox{\tiny${\begin{array}{c}
         Z \subset AB  \\
         Z \cap A \neq \emptyset \\
         Z \cap B \neq \emptyset
    \end{array}}$}}{\sum } \norm{ \Phi_Z} \operatorname{e}^{\lambda \abs{Z}}  \, .
\end{equation*} 
At this point, we can make use of the observation Eq.\ \eqref{eq:estimates_expansionals_normal_Aux_1}  to obtain  the upper bound
\begin{equation}\label{eq:estimate_expansional_normal_Aux3}
\norm{  \Gamma^{it}_{H_A + H_B}(W_{A,B}) } \leq \frac{\lambda \| \Phi\|}{\lambda - 2 \| \Phi\| \beta} \,  \sum_{v \in A} e^{-\mu \operatorname{dist}(v,B)} \,. 
\end{equation}
Finally, applying this upper estimate to Eq.\ \eqref{eq:estimate_prod_exponentials_normal_minus_identity}, we conclude that Eq.\  \eqref{eq:estimate_expansional_normal} holds. 

Let us prove now Eq.\ \eqref{eq:estimate_differences_expansional_normal} following similar ideas. First, note that
\begin{align*}
      & E_{A,B}(\beta) - E_{A,BC}(\beta)  = \operatorname{e}^{-\beta H_{AB} }\operatorname{e}^{  \beta H_A  + \beta H_B }  - \operatorname{e}^{-\beta H_{ABC} }\operatorname{e}^{ \beta H_A  + \beta H_{BC} } \\
      & =  \sum_{m=1}^{\infty} \,\, \int_{0}^{\beta}dt_{1} \int_{0}^{t_{1}}dt_{2} \ldots \int_{0}^{t_{m-1}} dt_{m} \,\, \left[\underset{j}{\overset{m \to 1}{\prod}} \Gamma_{H_A + H_B}^{it_{j}}(W_{A,B}) - \underset{j}{\overset{m \to 1}{\prod}} \Gamma_{H_A + H_{BC} }^{it_{j}}(W_{A,BC}) \right] 
\end{align*}
Then,
\begin{equation}\label{equa:differenceBound}
    \norm{E_{A,B}(\beta) - E_{A,BC}(\beta)}\leq  \sum_{m=1}^{\infty}  \frac{\beta^m}{m!} \underset{|t_m| , \ldots , |t_1|\leq \beta}{\text{sup}} \norm{\underset{j}{\overset{m \to 1}{\prod}} \Gamma_{H_A + H_B}^{it_{j}}(W_{A,B}) - \underset{j}{\overset{m \to 1}{\prod}} \Gamma_{H_A + H_{BC} }^{it_{j}}(W_{A,BC}) } \, .
\end{equation}
Moreover, note that 
\begin{align*}
    & \norm{\underset{j}{\overset{m \to 1}{\prod}} \Gamma_{H_A + H_B}^{it_{j}}(W_{A,B}) - \underset{j}{\overset{m \to 1}{\prod}}  \Gamma_{H_A + H_{BC} }^{it_{j}}(W_{A,BC}) } \nonumber \\
    & = \norm{ \underset{\alpha=1}{\overset{m}{\sum}} \,  \underset{j}{\overset{m \to (\alpha + 1)}{\prod}}  \Gamma_{H_A + H_B}^{it_{j}}(W_{A,B})  \left(  \Gamma_{H_A + H_B}^{it_{\alpha}}(W_{A,B})- \Gamma_{H_A + H_{BC}}^{it_{\alpha}}(W_{A,BC})  \right) \underset{j}{\overset{(\alpha -1) \to 1}{\prod}}  \Gamma_{H_A + H_{BC}}^{it_{j}}(W_{A,BC})  } \nonumber \\
    & \leq \sum_{\alpha = 1}^{m} \underbrace{\norm{\Gamma_{H_A + H_B}^{it_{\alpha}}(W_{A,B})- \Gamma_{H_A + H_{BC}}^{it_{\alpha}}(W_{A,BC})}}_{(I)}
    \prod_{j=\alpha + 1}^{m} \underbrace{\norm{\Gamma_{H_A + H_B}^{it_{j}}(W_{A,B})}}_{(II)} 
    \prod_{j=1}^{\alpha-1} \underbrace{\norm{\Gamma_{H_A + H_{BC}}^{it_{j}}(W_{A,BC})}}_{(III)}\,.
\end{align*}
We have classified the factors on the previous expression into three types (I), (II) and (III). Factors of type (II) can be upper estimate using Eq.\ \eqref{eq:estimate_expansional_normal_Aux3}. The same estimate can be applied to factors of type (III) adapted to the pair $A$ and $BC$ instead of $A$ and $B$. But since $\operatorname{dist}(v,B) \geq \operatorname{dist}(v,BC)$ for every $v \in A$, we can actually use the following common upper bound for both type of factors: 
\begin{equation}\label{eq:estimate_expansional_normal_Aux4}
\norm{\Gamma_{H_A + H_B}^{it}(W_{A,B})} , \norm{\Gamma_{H_A + H_{BC}}^{it}(W_{A,BC})} \leq \frac{\lambda \| \Phi\|}{\lambda - 2 \| \Phi\| \beta}  \sum_{v \in A} e^{-\mu \operatorname{dist}(v,BC)} 
\end{equation} 
To deal with the factor of type (I), let us split
\begin{align*}
    & \norm{\Gamma_{H_A + H_B}^{it}(W_{A,B})- \Gamma_{H_A + H_{BC}}^{it}(W_{A,BC})  }  \\
    & \leq \underbrace{\underset{ \hbox{\tiny${\begin{array}{c}
         Z \subset AB  \\
         Z \cap A \neq \emptyset \\
         Z \cap B \neq \emptyset
    \end{array}}$}}{\sum } \norm{  \Gamma_{H_A + H_B}^{it}(\Phi_Z)- \Gamma_{H_A + H_{BC}}^{it}(\Phi_Z)  }}_{(III.1)} + \underbrace{\underset{ \hbox{\tiny${\begin{array}{c}
         Z \subset ABC  \\
         Z \cap A \neq \emptyset \\
         Z \cap C \neq \emptyset 
    \end{array}}$}}{\sum } \norm{ \Gamma_{H_A + H_{BC}}^{it}(\Phi_Z) }}_{(III.2)} \, .
\end{align*}
For the second sum (III.2), we again use 
\begin{align}
\nonumber \underset{ \hbox{\tiny${\begin{array}{c}
         Z \subset ABC  \\
         Z \cap A \neq \emptyset \\
         Z \cap C \neq \emptyset
    \end{array}}$}}{\sum }\norm{\Gamma^{it}_{H_A + H_{BC}}(\Phi_Z) }
    & \label{eq:estimates_expansionals_normal_Aux_8} \leq \frac{\lambda \| \Phi\|}{\lambda - 2 \| \Phi\| \beta}
    \underset{ \hbox{\tiny${\begin{array}{c}
         Z \subset ABC  \\
         Z \cap A \neq \emptyset \\
         Z \cap C \neq \emptyset
    \end{array}}$}}{\sum}
    \norm{\Phi_{Z}} e^{\lambda |Z|}\\
    & \leq  \frac{\lambda \|\Phi\|^{2}}{\lambda - 2 \| \Phi\| \beta}  \sum_{v \in A} e^{- \mu \operatorname{dist}(v,C)}\,,
\end{align}
where in the last inequality we have again used Eq.\ \eqref{eq:estimates_expansionals_normal_Aux_1} for the pair $A$ and $C$. For the first sum (III.1), however, we are going to use inequality Eq.\ \eqref{equa:time_evolution_main_2} from Proposition \ref{theo:localityEstimates} with the interactions $\Phi^{A,B}$ and $\Phi^{A,BC}$ on $V = ABC$ giving $H_{A} + H_{B}$ and $H_{A}+H_{BC}$, respectively, and $Y=AB$. Note that $\|\Phi^{A,B}\|, \| \Phi^{A,BC}\| \leq \|\Phi\|$ since both interactions coincide with $\Phi$ or are zero on every subset $X \in \mathcal{P}_{f}$. Then
\begin{align*}
\underset{ \hbox{\tiny${\begin{array}{c}
         Z \subset AB  \\
         Z \cap A \neq \emptyset \\
         Z \cap B \neq \emptyset
    \end{array}}$}}{\sum } \norm{  \Gamma_{H_A + H_B}^{it}(\Phi_Z)- \Gamma_{H_A + H_{BC}}^{it}(\Phi_Z)  } 
     &\leq 
    \frac{2 \| \Phi\| \beta \lambda}{(\lambda - 2 \| \Phi\| \beta)^{2}}
     \underset{ \hbox{\tiny${\begin{array}{c}
         Z \subset AB  \\
         Z \cap A \neq \emptyset \\
         Z \cap B \neq \emptyset
    \end{array}}$}}{\sum } 
    \| \Phi_{Z}\| e^{\lambda |Z|} e^{- \mu \operatorname{dist}(Z, C)}\,,\\
\end{align*}
where the last sum can be estimated by
\begin{align*} 
\underset{ \hbox{\tiny${\begin{array}{c}
         Z \subset AB  \\
         Z \cap A \neq \emptyset \\
         Z \cap B \neq \emptyset
    \end{array}}$}}{\sum } 
    \| \Phi_{Z}\| e^{\lambda |Z|} e^{- \mu \operatorname{dist}(Z,C)} 
    & = 
    \underset{ \hbox{\tiny${\begin{array}{c}
         Z \subset AB  \\
         Z \cap A \neq \emptyset \\
         Z \cap B \neq \emptyset
    \end{array}}$}}{\sum } 
    \| \Phi_{Z}\| e^{\lambda |Z| + \mu \operatorname{diam}(Z)} e^{- \mu(\operatorname{diam}(Z)+ \operatorname{dist}(Z,C))} \\
    & \leq 
    \sum_{v \in A} \sum_{Z \ni v}
    \| \Phi_{Z}\| e^{\lambda |Z| + \mu \operatorname{diam}(Z)} e^{- \mu(\operatorname{diam}(Z)+ \operatorname{dist}(Z,C))} \\
    & \leq 
    \sum_{v \in A} \sum_{Z \ni v}
    \| \Phi_{Z}\| e^{\lambda |Z| + \mu \operatorname{diam}(Z)} e^{- \mu \operatorname{dist}(v,C)} \\
    & \leq \| \Phi\| \sum_{v \in A} e^{- \mu \operatorname{dist}(v,C)}\,.
\end{align*}
Here, we have used that $\operatorname{diam}(Z) + \operatorname{dist}(Z,C) \geq \operatorname{dist}(v,C)$ for any $v \in Z$, which holds by the triangle inequality. Combining Eq.\ \eqref{eq:estimates_expansionals_normal_Aux_8} with the previous bounds, we obtain the following bound for (I)
\begin{align*} 
\norm{\Gamma_{H_A + H_B}^{it}(W_{A,B})- \Gamma_{H_A + H_{BC}}^{it}(W_{A,BC})  } 
& \leq   \left(\frac{2 \| \Phi\|^2 \beta \lambda}{(\lambda - 2 \| \Phi\| \beta)^{2}} + \frac{\lambda \| \Phi\|^{2}}{\lambda - 2 \| \Phi\| \beta} \right) \sum_{v \in A} e^{- \mu \operatorname{dist}(v,C)}\\
& = \frac{2 \| \Phi\|^2 \beta \lambda + \lambda^{2}\| \Phi\|^2 -2 \lambda \| \Phi\|^{3} \beta}{(\lambda - 2 \| \Phi\| \beta)^{2}} \sum_{v \in A} e^{- \mu \operatorname{dist}(v,C)}\\
& \leq \frac{\| \Phi\|^2 (\lambda + \beta)^{2} }{(\lambda - 2 \| \Phi\| \beta)^{2}} \sum_{v \in A} e^{- \mu \operatorname{dist}(v,C)}\,.
\end{align*}
Combining the upper bounds for (I), (II) and (III) we conclude that
\begin{multline}
 \norm{\underset{j}{\overset{m \to 1}{\prod}} \Gamma_{H_A + H_B}^{it_{j}}(W_{A,B}) - \underset{j}{\overset{m \to 1}{\prod}}  \Gamma_{H_A + H_{BC} }^{it_{j}}(W_{A,BC}) }  \\
\leq   m \left(\frac{\lambda \| \Phi\|}{\lambda - 2 \| \Phi\| \beta}  \sum_{v \in A} e^{-\mu \operatorname{dist}(v,BC)}  \right)^{m-1}  \frac{\| \Phi\|^2 (\lambda + \beta)^{2} }{(\lambda - 2 \| \Phi\| \beta)^{2}} \sum_{v \in A} e^{- \mu \operatorname{dist}(v,C)}\,.
\end{multline}
Inserting this expression in Eq.\ \eqref{equa:differenceBound} we conclude that
\begin{align*}
\norm{E_{A,B}(\beta) - E_{A,BC}(\beta)} 
& \leq 
\sum_{m=1}^{\infty} \frac{\beta^{m}}{(m-1)!}
\left(\frac{\lambda \| \Phi\|}{\lambda - 2 \| \Phi\| \beta}  \sum_{v \in A} e^{-\mu \operatorname{dist}(v,BC)}  \right)^{m-1}   \cdot \\
& \hspace{2cm} \cdot \frac{\| \Phi\|^2 (\lambda + \beta)^{2} }{(\lambda - 2 \| \Phi\| \beta)^{2}} \sum_{v \in A} e^{- \mu \operatorname{dist}(v,C)}\\
& = \exp{ \frac{\lambda \| \Phi\|\beta}{\lambda - 2 \| \Phi\| \beta}  \sum_{v \in A} e^{-\mu \operatorname{dist}(v,BC)}  }\\ 
& \hspace{2cm} \cdot \frac{\beta \| \Phi\|^2 (\lambda + \beta)^{2} }{(\lambda - 2 \| \Phi\| \beta)^{2}} \sum_{v \in A} e^{- \mu \operatorname{dist}(v,C)}\,.
\end{align*}
 This finishes the proof of the inequality.
\end{proof}

Based on this proposition, we can derive estimates for various expressions on the expansionals, in the spirit of, e.g., those from \cite[Corollary 4.4]{Bluhm2021exponential}. We only provide here the bounds required in the proof of the main result, Theorem \ref{thm:Weakimpliesmixingcondition}, but some other bounds would follow from Proposition \ref{prop:estimates_expansionals_normal} analogously. However, unlike in \cite{Bluhm2021exponential}, we can only recover bounds in which we take full traces of expansionals and states.

\begin{cor}\label{cor:estimate_expansional}
    Under the conditions of Proposition \ref{prop:estimates_expansionals_normal} and for $V=\mathbb{Z}^g$ endowed with the Euclidean distance, for $0 \leq \beta < \frac{\lambda}{2 \| \Phi\|}$,  and denoting by $\rho^{AB}_{\beta}$ the Gibbs state in $AB$ at inverse temperature $\beta$, we have 
    \begin{equation}
        \Big| \Tr_{AB}\Big[ \rho^{AB}_\beta E_{A,B}^{\ast \, -1}(\beta)\Big]^{-1}  \Big| \leq e^{{\beta} \, K \, \mathrm{min} \{ \abs{\partial A},  \abs{\partial B} \}  } \, . 
    \end{equation}
    where $K=K(\lambda, \mu, \| \Phi\|, \beta)$ is the constant from Eq. \eqref{eq:simplified_estimates_expansionals}.
\end{cor}

\begin{proof}
Note that we can write 
\begin{align}
\Tr_{AB} \Big[ \rho^{AB}_\beta E_{A,B}^{\ast \, -1}(\beta) \Big]& =\Tr_{AB} \left[ \rho^{AB}_\beta E_{A,B}^{\ast \, -1}\left(\frac{\beta}{2}\right) E_{A,B}^{ -1}\left(\frac{\beta}{2}\right) \right]  \, ,
\end{align}
where $E_{A,B}^{ -1}\left(\frac{\beta}{2}\right) = e^{\frac{-\beta (H_A+H_B)}{2}}e^{\frac{\beta H_{AB}}{2}}$ and $E_{A,B}^{* -1}\left(\frac{\beta}{2}\right) = e^{\frac{\beta H_{AB}}{2}}e^{\frac{-\beta (H_A+H_B)}{2}}$. Let us denote $Q:=  E_{A,B}^{\ast \, -1}\left(\frac{\beta}{2}\right) E_{A,B}^{ -1}\left(\frac{\beta}{2}\right)$. Since $Q$ is a positive and invertible operator, the following inequality holds:
\begin{equation}
    Q \geq \norm{Q^{-1}}^{-1} \mathds{1} \, .
\end{equation} 
Next, note that 
\begin{equation}
    \norm{Q^{-1}} \leq \norm{ E_{A,B}\left(\frac{\beta}{2}\right)E_{A,B}^{\ast }\left(\frac{\beta}{2}\right)} \leq \norm{ E_{A,B}^{\ast }\left(\frac{\beta}{2}\right)}^2 \leq \operatorname{e}^{{\beta} \, K \, \text{min} \{ \abs{\partial A},  \abs{\partial B} \}  } \, ,
\end{equation}
where we are using the estimates from \Cref{prop:estimates_expansionals_normal}, and specifically the simplification from Eq. \eqref{eq:simplified_estimates_expansionals}. Then,  
\begin{align}
    \Big| \Tr_{AB}\Big[ \rho^{AB}_\beta E_{A,B}^{\ast \, -1}(\beta)\Big]^{-1}  \Big| & \leq \norm{E_{A,B}^{\ast }\left(\frac{\beta}{2}\right) E_{A,B}\left(\frac{\beta}{2}\right)} \leq \operatorname{e}^{{\beta} \, K \, \text{min} \{ \abs{\partial A},  \abs{\partial B} \} } \, .
\end{align}
\end{proof}

\section{Local effective Hamiltonian}\label{sec:effective_hamiltonian}

Another tool we will need in order to prove our main result is the existence of an effective Hamiltonian. Let us depart from a quantum spin system defined on a (possibly infinite) metric space $(V, \operatorname{dist})$ and a local interaction $\Phi$. Given any two finite subsets $L \subset \Lambda$ of $V$ and some fixed (inverse) temperature $\beta >0$, we can consider the Hermitian operator given by
\[ \widetilde{H}^{L, \beta}_{\Lambda} := - \frac{1}{\beta}\log \mathbb E_L[e^{- \beta H_{\Lambda}}] \,. \]
This allows us to represent the normalized marginal  of the Gibbs state
$(\rho^{\Lambda}_{\beta})_{L}$ as the Gibbs state of this new (so-called \emph{effective}) Hamiltonian:
\[ e^{-\beta \widetilde{H}^{L, \beta}_{\Lambda}} = \mathbb E_L[e^{-\beta H_{\Lambda}}] \,. \]
One might expect that the inherent locality of the original Hamiltonian $H_{\Lambda}$ manifests in some form of locality for the new one. We may even speculate that if $\Phi$ possesses a strong decaying condition (e.g. finite or short range), then the local interactions defining the effective Hamiltonian should also have some strong form of decay (exponential or even faster). 

One initial exploratory avenue to seek evidence supporting our statements is within the ``(very) high-temperature regime'' where, traditionally, the locality of the interactions manifests as locality properties of the Gibbs state (e.g.\ decay of correlations) in very general settings, as it has been formally proved in quite a number of results. From a purely mathematical point of view, we can consider a complex variable instead of merely a positive temperature. 

Let us explain the broad idea behind this approach. One considers the complex vector-valued function
\[ z \mapsto \rho^{\Lambda}_{z} = e^{-z H_{\Lambda}}/\Tr_{\Lambda}[e^{-z H_{\Lambda}}]\,. \]
Since $\Lambda$ is finite, and therefore, $H_{\Lambda}$ is bounded, we can deduce the existence of an open disk around $z=0$ where the above function is well-defined and analytic. Moreover, using the analyticity properties around $z=0$, one can infer locality properties for small values of $\beta$. {This is the route taken in \cite{Kuwahara2019}, where cluster expansions are used to control the number of terms appearing in the expansion around $z=0$ and their norms.  Unfortunately, there is a flaw \cite{samuel-personal-communication, Kuwahara.2024} in this part of \cite{Kuwahara2019}, such that the status of the effective Hamiltonian derived in this paper is presently unclear. Therefore, we will take a different route in this paper.

In Section \ref{sec:prop-high-T}, we will follow the approach explained above to get an idea of what properties one might expect for an effective Hamiltonian, culminating in the definition of a local effective Hamiltonian in Section \ref{sec:loc-effective-H}. More specifically, we will examine the problem at \emph{very} high temperatures.  By this, we mean values of $\beta$ that are smaller than a constant $\beta_{c}$ depending on the size of $\Lambda$, opposing to simply high temperatures, where the constant $\beta_{c}$ does not depend on the size on $\Lambda$ but maybe on local properties of the underlying graph $V$. This subtle difference makes the latter a considerable harder problem that we will address in Section \ref{sec:comm-effective-Hamiltonian}. In Section \ref{sec:comm-effective-Hamiltonian} we show that such effective Hamiltonians exist in the case where all marginals of the interactions are commuting. From then on, we assume the existence of such a local effective Hamiltonian to prove the mixing condition.}

\subsection{Properties exhibited at very high temperatures} \label{sec:prop-high-T}

 Fix a finite subset $\Lambda$ of $V$. To quantify the decay of the local interaction $\Phi$ on $V$, we are going to introduce now a new class of functions. We will assume there is a function $\mathbf{b}: \mathcal{P}_{f}(V) \to [0,\infty)$ being \emph{subbaditive}, namely $\mathbf{b}(X \cup Y) \leq \mathbf{b}(X) + \mathbf{b}(Y)$ for every $X, Y \in \mathcal{P}_{f}(V)$, such that
 \[ \|\Phi\|_{\mathbf{b}} = \sum_{x \in V} \sum_{X \ni x} \|\Phi_{X}\| e^{\mathbf{b}(X)} < \infty\,. \]
Observe that the previous quantification of decay encompasses the case $\mathbf{b}(X) = \lambda |X| + \mu \operatorname{diam}(X)$ for any fixed constants $\lambda, \mu \geq 0$.

Next, let us introduce for each $X \subset \Lambda$ a complex variable $z_{X} \in \mathbb{C}$, and consider the vector-valued holomorphic map 
\[ \mathbb{C}^{\mathcal{P}_{f}(\Lambda)} \to \mathfrak{A}_{\Lambda} \quad , \quad z= (z_{X})_{X} \mapsto H_{\Lambda}(z) = \sum_{X \subset \Lambda} z_{X} \Phi_{X}\,. \]
Then, for each (inverse) temperature $\beta>0$ and every subset $L \subset \Lambda$, the composite map 
\begin{equation}\label{equa:expectationExpansion1} 
\mathbb{C}^{\mathcal{P}_{f}(\Lambda)} \to \mathfrak{A}_{L} \quad , \quad z \mapsto \mathbb E_L[e^{-\beta H_{\Lambda}(z)}]\,, 
\end{equation}
defines again an entire function.  

Let us recall a few facts on holomorphic functions in several variables with values in a Banach space \cite{Mujica1986}. Given a \emph{multiradius}, that is, $\mathbf{r} = (r_{X})_{X \in \mathcal{P}_{f}(\Lambda)}$ with $r_{X} >0$ for every $X$, we define the open \emph{polydisc} with multiradius $r$ as
\[ r \mathbb{D}^{\mathcal{P}_{f}(\Lambda)} = \{ z \in \mathbb{C}^{\mathcal{P}_{f}(\Lambda)} \colon |z_{X}| < r_{X} \text{ for every } X \in \mathcal{P}_{f}(\Lambda) \}\,. \]
Analogously, one defines the closed polydisk $r \overline{\mathbb{D}}^{\mathcal{P}_{f}(\Lambda)}$ replacing the condition $|z_{X}| < r_{X}$ with $|z_{X}| \leq r_{X}$ for every $X \in \mathcal{P}_{f}(\Lambda)$. 

A holomorphic function on the open polydisc $f:\mathbb{C}^{\mathcal{P}_{f}(\Lambda)} \to \mathfrak{A}_{L}$ is characterized by the existence of a unique power (or monomial) series expansion
\[  f(z) = \sum_{\alpha: \mathcal{P}_{f}(\Lambda) \to \mathbb{N}_{0}} c_{\alpha} z^{\alpha} \quad \mbox{ where } \quad {z^{\alpha}} :=  \prod_{X \in \mathcal{P}_{f}(\Lambda)} z_{X}^{\alpha_{X}}\,,  \]
that is absolutely convergent on every closed polydisc $s \overline{\mathbb{D}}^{\mathcal{P}_{f}(\Lambda)}$ with $0\leq s_{X} < r_{X}$ for every $X \in \mathcal{P}_{f}(\Lambda)$, see \cite[Corollary 7.8]{Mujica1986}. Moreover, the coefficients $c_{\alpha}$ can be computed using integral Cauchy formulas, and also in terms of the partial derivatives
\[ c_{\alpha} =  (\partial^{\alpha} f) (0) = \big(\textstyle \prod_{X \in \mathcal{P}_{f}(\Lambda)} \partial^{\alpha_{X}}_{z_{X}}f\big) (0) \,,\]
where we do not specify a specific order in the concatenated application of the partial derivatives since the value is independent of it. 

We will however use another notation for the power series expansion that has been already used in \cite{wild2023classical}. It consists of identifying each map $\alpha: \mathcal{P}_{f}(\Lambda) \to \mathbb{N}_{0}$ appearing in the power series expansion with the multiset $\mathbf{W} = \mathbf{W}_{\alpha}$ containing each $X \in \mathcal{P}_{f}(\Lambda)$ a number $\alpha_{X}$ of times. Thus we can rewrite the power series expansion of $f$ as
\[ f(z) = \sum_{\mathbf{W}} c_{\mathbf{W}} z^{\mathbf{W}}  \quad \mbox{ where } \quad z^{\mathbf{W}}:= \prod_{X \in \mathbf{W}}  z_{X}\,, \]
and each coefficient as
\[ c_{\mathbf{W}} = D_{\mathbf{W}}|_{z=0}f(z) = \big( \prod_{X \in \mathbf{W}} \partial_{X} f\big)(0)\,. \]

The function defined in \eqref{equa:expectationExpansion1} thus admits a power series expansion that is absolutely convergent on every polydisc. It can be computed explicitly by expanding the exponential term:
\[ \mathbb E_L[e^{-\beta H_{\Lambda}(z)}] = \sum_{m=0}^{\infty} \frac{(-\beta)^m}{m!} \sum_{X_{1},  \ldots ,X_{m} \in \mathcal{P}_{f}(\Lambda)} \mathbb{E}_{L}[\Phi_{X_{1}} \cdot \ldots \cdot \Phi_{X_{m}}] \prod_{i=1}^{m} z_{X_{i}} =\sum_{\mathbf{W}} c_{\mathbf{W}} z^{\mathbf{W}}\,, \]
where  $c_{\mathbf{W}} = \frac{(-\beta)^{m}}{m!} \sum_{X_{1}, \ldots, X_{m} \in \mathcal{P}_{f}(\Lambda) \colon [X_{1}, \ldots, X_{m}] = \mathbf{W}} \mathbb{E}_{L}[\Phi_{X_{1}} \cdot \ldots \cdot \Phi_{X_{m}}]$ for each multiset $\mathbf{W}$ with $m$ elements. Note that the summands in the previous expression may be different from each other as the local interaction is not necessarily commuting.

Recall that a sufficient condition for the existence of a holomorphic logarithm of a given holomorphic function on an open domain  $f: \Omega \to \mathfrak{A}$ , is that $\|f(z) - \mathbbm{1} \| < 1$ for every $z \in \Omega$. In the case of \eqref{equa:expectationExpansion1}, the above expansion yields that taking the multiradius $\mathbf{r} = e^{\mathbf{b}}$, for every $z \in e^{\mathbf{b}} \overline{\mathbb D}^{\mathcal{P}_{f}(\Lambda)}$
\[ \| \mathbb E_L[e^{-\beta H_{\Lambda}(z)}]  - \mathbbm{1}\|  \leq \sum_{m=1}^{\infty} \frac{\beta^{m}}{m!} \sum_{X_{1}, \ldots, X_{m} \in \mathcal{P}_{f}(\Lambda)} \,\,\prod_{i=1}^{m} \| \Phi_{X_{i}}\| e^{\mathbf{b}(X_{i})} \leq e^{|\beta| \| \Phi\|_{\mathbf{a}} |\Lambda|} - 1\,.  \]
Therefore, if we restrict to values $0 < \beta < \frac{\log{(2)}}{\|\Phi\|_{\mathbf{b}} |\Lambda|}$, the previous norm is smaller than one, and therefore  we have a holomorphic function with power series expansion
\begin{equation} \label{eq:power-series-expansion}
 -\frac{1}{\beta}\log \mathbb E_L[e^{-\beta H_{\Lambda}(z)}] = \sum_{\mathbf{W}} a_{\mathbf{W}}(\beta) z^{\mathbf{W}}\,
 \end{equation}
that is absolutely convergent on the polydisc  $z \in e^{\mathbf{b}} \overline{\mathbb D}^{\mathcal{P}_{f}(\Lambda)}$, and whose coefficientes are given by
\[ a_{\mathbf{W}}(\beta)=-\frac{1}{\beta} D_{\mathbf{W}}|_{z=0} \log \mathbb E_L[e^{-\beta H_{\Lambda}(z)}] = -\frac{1}{\beta} D_{\mathbf{W}}|_{z=0} \log \mathbb E_L[e^{-\beta H_{\Lambda, \mathbf{W}}(z)}]\,. \]
Here, $H_{\Lambda,\mathbf{W}}(z)$ corresponds to $H_{\Lambda}(z)$ with the variables corresponding to sets $X \notin \mathbf{W}$ particularized to $z_{X} = 0$. Note, however, that we can omit the subindex $\Lambda$ since if $\Lambda'$ is another finite set containing $\cup \mathbf{W}$, then  $H_{\Lambda', \mathbf{W}}(z) = H_{\Lambda, \mathbf{W}}(z)$. Thus, we will simply write $H_{\mathbf{W}}(z)$.  

We can further simplify the sum in Eq.\ \eqref{eq:power-series-expansion} when $z$ is held constantly equal to one, which is the case we are most interested in because it recovers our Hamiltonian:
\begin{equation*} 
 -\frac{1}{\beta}\log \mathbb E_L[e^{-\beta H_{\Lambda}}] = \sum_{\mathbf{W}} a_{\mathbf{W}}(\beta) \,, 
 \end{equation*}
Next, we are going to rearrange the summands of the power series expansion in the following way: Let us define for each subset $X \in \mathcal{P}_{f}(\Lambda)$
\begin{equation}\label{equa:definingPhiSummand}
 \widetilde \Phi^{L, \beta}_{X}(z) = -\frac{1}{\beta} \sum_{\mathbf{W} \colon X = \cup \mathbf{W}} D_{\mathbf{W}}{|_{z=0}} \log \mathbb E_L[e^{-\beta H_{\mathbf{W}}(z)}] \cdot z^{\mathbf{W}}\,,
\end{equation}
where the sum is extended over all multisets $\mathbf{W}$ such that the union of its elements is equal to $X$. Note that absolute convergence ensures that this sum is well defined, at least on the polydisc $e^{\mathbf{b}} \overline{\mathbb D}^{\mathcal{P}_{f}(\Lambda)}$, and thus  we can rewrite
\[ -\frac{1}{\beta} \log \mathbb E_L[e^{-\beta H_{\Lambda}(z)}] = \sum_{X \in \mathcal{P}_{f}(\Lambda)} \widetilde \Phi_{X}^{L, \beta}(z)\,. \]
We can easily check the following properties for every $X \subset \Lambda$:
\begin{enumerate}
\item[(i)] $\widetilde \Phi^{L,\beta}_{X}(z)$ is supported in $X \cap L$: Indeed, for each $\mathbf{W}$ we have that $H_{\mathbf{W}}$ is supported in $\cup\mathbf{W}$, and thus
\[   D_{\mathbf{W}} \log \mathbb E_L[e^{-\beta H_{\mathbf{W}}(z)}] \]
is supported in $\cup\mathbf{W} \cap L$.
\item[(ii)] If $X \subset L$, then $\widetilde \Phi^{L, \beta}_{X}(z) = \Phi_{X}z_{X}$: Indeed, for every multiset $\mathbf{W}$ such that the union of its elements is equal to $X$, since $X \subset L$, we have
\[ \mathbb E_L[e^{-\beta H_{\mathbf{W}}(z)}] = e^{-\beta H_{\mathbf{W}}(z)}\,, \]
so that
\begin{align*}
D_{\mathbf{W}} \log \mathbb E_L[e^{-\beta H_{\Lambda}(z)}] 
& = D_{\mathbf{W}}  \log \mathbb E_L[e^{-\beta H_{ \mathbf W}(z)}]\\[2mm] 
& = D_{\mathbf{W}} \log e^{-\beta H_{\mathbf W}(z)}\\[2mm]
& =- \beta \, D_{\mathbf{W}}  H_{\mathbf W}(z)\,, 
\end{align*}
Now $D_{\mathbf{W}}|_{z=0}H_{\mathbf{W}}(z)$ is zero if the multiset $\mathbf{W}$ has cardinality greater than one (i.e. if it contains two different elements, or one element with multiplicity larger than two). Therefore, the only nonzero summand in Eq.\ \eqref{equa:definingPhiSummand} corresponds to $\mathbf{W}=[X]$, and $D_{\mathbf{W}}|_{z=0} H_{\mathbf{W}}(z) =  \Phi_{X}$.
\item[(iii)] If $L \subset L' \subset \Lambda$ and $X \cap (L' \setminus L) = \emptyset$, then ${\widetilde \Phi}^{L,\beta}_{X}(z) = {\widetilde \Phi}^{L',\beta}_{X}(z)$: Indeed, using that $(L^\prime \setminus L)^c = (L^\prime)^c \cup L$,  
\begin{align*} 
\mathbb E_L[e^{-\beta H_{\Lambda,\mathbf{W}}(z)}] & = \mathbb E_{L'} \, \mathbb E_{L \cup (L')^c}[e^{-\beta H_{\Lambda, \mathbf{W}}(z)}]\\[2mm]
&  = \mathbb E_{L'} \, \mathbb E_{ (L'\setminus L)^c}[e^{-\beta H_{\Lambda, \mathbf{W}}(z)}]\\[2mm] 
& =  \mathbb E_{L'}[e^{-\beta H_{\Lambda, \mathbf{W}}(z)}] \,.
\end{align*}
Therefore, the summands in Eq.\ \eqref{equa:definingPhiSummand} for both $\widetilde \Phi^{L, \beta}_X(z)$ and $\widetilde \Phi^{L', \beta}_X(z)$ coincide.
\end{enumerate}

\subsection{Locality of the effective Hamiltonian} \label{sec:loc-effective-H}

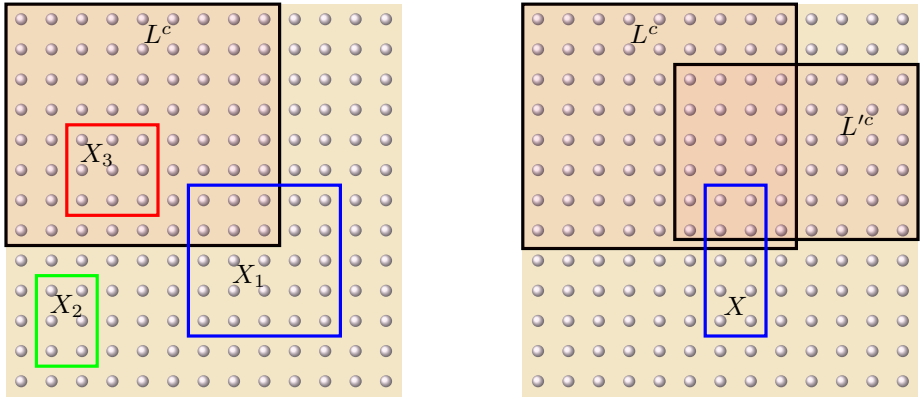
\begin{figure}[ht]
\begin{center}

\begin{tikzpicture}[scale=0.4]

%%%%%%%%%%%%%%%%%%%% LEFT PICTURE

\fill [darkyellow!50!orange!25!white] (-0.5,-0.5) rectangle (12.5,12.5);

%Adding the lattice points

\foreach \n in {1,...,13}{
\foreach \m in {1,...,13}{
\shade[shading=ball, ball color=darkred!10!white] (\n-1,\m-1) circle (0.2);
}
 }

%%% Region L^c

\draw [black, very thick] (-0.5,4.5) rectangle (8.5,12.5);
\fill [red, opacity=0.05] (-0.5,4.5) rectangle (8.5,12.5);

\node at (4.5,11.5) {\small $L^{c}$};

%%%

\draw [blue, very thick] (5.5,1.5) rectangle (10.5,6.5);
\node at (7.5,3.5) {\small $X_{1}$};

\draw [green, very thick] (0.5,0.5) rectangle (2.5,3.5);
\node at (1.5,2.5) {\small $X_{2}$};

\draw [red, very thick] (1.5,5.5) rectangle (4.5,8.5);
\node at (2.5,7.5) {\small $X_{3}$};
%%%%%%%%%%%%%%%%%%%% RIGHT PICTURE

\begin{scope}[xshift=17cm]

\fill [darkyellow!50!orange!25!white] (-0.5,-0.5) rectangle (12.5,12.5);

%Adding the lattice points

\foreach \n in {1,...,13}{
\foreach \m in {1,...,13}{
\shade[shading=ball, ball color=darkred!10!white] (\n-1,\m-1) circle (0.2);
}
 }
 
%%% Region L^c

\draw [black, very thick] (-0.5,4.4) rectangle (8.5,12.5);
\fill [red, opacity=0.05] (-0.5,4.4) rectangle (8.5,12.5);

\node at (3.5,11.5) {\small $L^{c}$};

%%% Region L'^c

\draw [black, very thick] (4.5,4.7) rectangle (12.5,10.5);
\fill [red, opacity=0.05] (4.5,4.7) rectangle (12.5,10.5);
\node at (10.5,8.5) {\small $L'^{c}$};

%%% Region X1

\draw [blue, very thick] (5.5,1.5) rectangle (7.5,6.5);
\node at (6.5,2.5) {\small $X$};

\end{scope}

\end{tikzpicture}

 \caption{On the left picture, we represent three distinct dispositions of a subset $X$ with respect to the tracing region $L^{c}$. In particular, the local interactions of the effective Hamiltonian must satisfy $\widetilde{\Phi}^{L, \beta}_{X_{2}} = \Phi_{X_{2}}$, while $\widetilde{\Phi}^{L, \beta}_{X_{3}}$ is a multiple of the identity. On the right picture, the coincidence $X \cap L = X \cap L'$ yields that $\widetilde{\Phi}^{L, \beta}_{X} = \widetilde{\Phi}^{L', \beta}_{X}$.}
  \end{center}
\end{figure}

Taking inspiration from the properties that we have noticed in the previous section for extremely high temperatures, we introduce the following definition for the given quantum spin system on the metric space $V$ with local interaction $\Phi$.

\begin{defi}[Strong form]\label{defi:localityEffectiveHamiltonian}
Let us say that the above quantum spin system has (strong) \emph{local effective Hamiltonians  at (inverse) temperature $\beta >0$} if it satisfies the following property: for every  $L\subset V$, there exists a local interaction $\widetilde{\Phi}^{L,\beta}$ on $V$ satisfying
\begin{enumerate}
\item[(i)]  $\widetilde{\Phi}_{X}^{L, \beta}$ is supported in $X \cap L$ for every  $X \in \mathcal{P}_{f}(V)$.
\item[(ii)] If $L' \subset V$, then $\widetilde{\Phi}^{L, \beta}_{X} = \widetilde{\Phi}^{L', \beta}_{X}$ for all  $X \in \mathcal{P}_{f}(V)$ such that $X \cap L' = X \cap L$.
\item[(iii)] For every finite subset $\Lambda \subset V$
\begin{equation}
\label{equa:localityEffectiveHamiltonian}
\widetilde{H}^{L, \beta}_{\Lambda}:=\log \mathbb{E}_{L}[e^{-\beta H_{\Lambda}}] =  \sum_{X \subset \Lambda} \widetilde{\Phi}^{L, \beta}_{X}\,. 
\end{equation}
\end{enumerate}
We will say that $\widetilde{\Phi}^{L, \beta}$ is the local effective interaction of the marginals $(\rho^{\Lambda}_{\beta})_{L}$ on $L$.
\end{defi}

\begin{rem}
It is noteworthy that the same local interaction $\widetilde{\Phi}^{L, \beta}$ provides the family of Hamiltonians $\widetilde{H}_{\Lambda}^{L, \beta}$ for every $\Lambda \in \mathcal{P}_{f}(V)$. In particular, taking $L = V$ we have that the local interaction $\widetilde{\Phi}^{V, \beta}$ satisfies that for every finite subset $\Lambda \subset V$
\[ \sum_{X \subset \Lambda} \widetilde{\Phi}^{V, \beta}_{X} = - \frac{1}{\beta} \log[e^{-\beta H_{\Lambda}}] = H_{\Lambda} = \sum_{X \subset \Lambda} \Phi_{X}\,. \]
By an easy induction argument on the cardinal of $\Lambda$, the above equality implies that   $\widetilde{\Phi}^{V, \beta}_{X} = \Phi_{X}$ for every finite subset $X \subset V$. Consequently, applying condition (ii), for every $L \subset V$ and every finite $X \subset L$, we have $\widetilde{\Phi}^{L, \beta}_{X} =  \widetilde{\Phi}^{V, \beta}_{X} = \Phi_{X}$.
\end{rem}

The above definition does not include any condition on the decay of the local effective interaction $\widetilde{\Phi}^{L, \beta}$. One might conjecture that if the original interaction $\Phi$ satisfies a decay condition given in terms of a subbaditive function $\mathbf{b}:\mathcal{P}_{f}(V) \to [0,\infty)$, namely
\[ \| \Phi\|_{\mathbf{b}} = \sup_{x \in V} \sum_{X \ni x} \| \Phi_{X}\| e^{\mathbf{b}(X)} < \infty\,, \]
then $\widetilde{\Phi}^{L, \beta}$ should satisfy a similar decay condition. We will explore this in Section \ref{sec:comm-effective-Hamiltonian}. We will actually obtain a general result where the decay of the local effective interaction is slightly weaker than that of $\Phi$.

Although the effective Hamiltonian has appeared in previous works \cite{Anshu2021, Kuwahara2019, Bilgin2010}, we were unable to find an explicit definition that incorporates locality similar to the conventional definition of local Hamiltonians, namely in terms of a local interaction defined on the lattice. The closest approach we are aware of was explored by Kuwahara et al. \cite{Kuwahara2019}, who used cluster expansions to study the locality properties of the effective Hamiltonian in arbitrary lattices at high temperatures. However, a gap was found in their proof, leaving the validity of their result unknown. Nevertheless, it is worthwhile to compare their locality description given in \cite[Theorem 11]{Kuwahara2019} with the above definition. They apply cluster expansion ideas to analyze the locality properties of
\begin{equation}\label{equa:KuwaharaLocalEffectiveHamiltonian} 
-\frac{1}{\beta} \log\left( \tr_{\Lambda \setminus  L}[e^{-\beta H_\Lambda}]  \right) + \frac{1}{\beta} \log[Z_{\Lambda \setminus L}] \mathbbm{1}\,,
\end{equation}
that they rewrite as $H_{L}$ plus a sum of local terms localized around the boundary of $L$ with exponential decay. Note that \eqref{equa:KuwaharaLocalEffectiveHamiltonian}  can be rewritten in terms of the conditional expectation as
\[ -\frac{1}{\beta} \log\left( \mathbb{E}_{L}[e^{-\beta H_\Lambda}] \right) + \frac{1}{\beta} \log \mathbb{E}_{L}[e^{-\beta H_{\Lambda \setminus L}}] \mathbbm{1}\,. \]
Hence, if the quantum system admits a local effective Hamiltonian at inverse temperature $\beta$ according to Definition \ref{defi:localityEffectiveHamiltonian}, then using property (ii) and the remark after the definition, we have that
\[ -\frac{1}{\beta} \log\left( \mathbb{E}_{L}[e^{-\beta H_\Lambda}]  \right) + \frac{1}{\beta} \log \mathbb{E}_{L}[e^{-\beta H_{\Lambda \setminus L}}] \mathbbm{1} = \sum_{\substack{X \subset \Lambda \\ X \cap L \neq \emptyset}} \widetilde{\Phi}^{L, \beta}_{X} = H_{L} + \sum_{\substack{X \subset \Lambda \\ X \cap L \neq \emptyset \\ X \cap L^{c} \neq \emptyset}} \widetilde{\Phi}^{L, \beta}_{X}\,. \]
Thus, taking inspiration from the approach of Kuwahara et al. \cite{Kuwahara2019}, an alternative definition for the existence of a local effective Hamiltonian is the following:
\begin{defi}[Weak version]\label{defi:weakeffHamiltonian}
Let us say that the above quantum spin system has (weak)\emph{ local effective Hamiltonians at (inverse) temperature $\beta >0$} if it satisfies the following property: for every  subset $L\subset V$ there exists a local interaction $\widehat{\Phi}^{L,\beta}$ on $V$ such that
\begin{enumerate}
\item[(i)]  $\widehat{\Phi}_{X}^{L, \beta}$ is supported in $X \cap L$ for every  $X \subset \subset V$.
\item[(ii)] $\widehat{\Phi}^{L, \beta}_{X} = \widehat{\Phi}^{L', \beta}_{X}$ for all finite subset $X \subset V$ and $L' \subset V$ satisfying $X \cap L' = X \cap L$.
\item[(iii)] For every finite subset $\Lambda \subset V$
\[ \widehat{H}^{L, \beta}_{\Lambda}:=-\frac{1}{\beta} \log\left( \tr_{\Lambda \setminus  L}[e^{-\beta H_\Lambda}] \right) + \frac{1}{\beta} \log[Z_{\Lambda \setminus L}] \mathbbm{1} = \sum_{X \subset \Lambda, X \cap L \neq \emptyset} \widehat{\Phi}_{X}^{L, \beta}\,. \]
\end{enumerate}
\end{defi}

We have shown before that the strong version  (Definition \ref{defi:localityEffectiveHamiltonian}) implies the weak version (Definition \ref{defi:weakeffHamiltonian}). However,  we do not have a proof neither a counterexample for the reverse implication.

\begin{rem}\label{rem:relation_effrho_usualrho}
    For $H$ and $\widehat{H}^{L,\beta}_\Lambda$ defined as in \Cref{defi:weakeffHamiltonian}, and for fixed $\beta >0$, note that     \begin{equation}\label{eq:relation_effrho_usualrho}
      \rho^\Lambda_{\beta,L} \equiv \rho_L =  e^{-\beta \widehat{H}^{L,\beta}_\Lambda} \frac{Z_{\Lambda \setminus L}}{Z_\Lambda}\,.
    \end{equation}
    Thus, bounding products of exponentials of $\widehat{H}$ allows us to bound products of marginals of $\rho$. We can also see that the weak version of the effective Hamiltonian is designed to be able to write marginals of a Gibbs state as Gibbs states of effective Hamiltonians. It is the form that the claimed effective Hamiltonians in the paper \cite{Kuwahara2019} had.

    The strong version of the effective Hamiltonian is designed to write marginals of $\exp(-\beta H_\Lambda)$ as exponentials of the effective Hamiltonian. We will see in the remainder of \Cref{sec:effective_hamiltonian} that for Hamiltonians satisfying a Commuting Hypothesis (Definiton \ref{defi:CommutingHypothesis}), we can prove the existence of an effective Hamiltonian in the strong from.
\end{rem}

\subsection{Effective Hamiltonian in the commuting case} \label{sec:comm-effective-Hamiltonian}

{
We will now show in some cases that effective Hamiltonians with short-range interactions exist, as it is well-known for the Ising model \cite{Sznajd.1984.EffectiveHamiltonianIsing}.  Let us state the main assumption on the local interaction $\Phi$ that will be needed.

\begin{defi}[Commuting Hypothesis] \label{defi:CommutingHypothesis}
Let us say that a local interaction $\Phi$ on $V$ satisfies the \emph{Commuting Hypothesis} if there is a commuting algebra $\mathcal{A} \subset \mathfrak{A}_{V}$ such that  $\Phi_{X} \in \mathcal{A}$ for every $X \in \mathcal{P}_{f}(V)$,  and moreover, for every $L \subset V$ the conditional expectation $\mathbb E_{L}[\cdot]$ satisfies $\mathbb E_{L}[\mathcal{A}] \subset \mathcal{A}$.
\end{defi}

 We next state the main result of this section. 

\begin{thm}\label{thm:GeneralLocalEffectiveInteractionExistence}
Let us consider a quantum spin system with local interaction $\Phi$ on $V$ satisfying the Commuting Hypothesis (Definition \ref{defi:CommutingHypothesis}) and such that for some $\varepsilon >0$ and a subadditive function $\mathbf{b}:\mathcal{P}_{f}(V) \to [0,\infty)$
\[ \| \Phi\|_{\varepsilon, \mathbf{b}} = \sup_{x \in V} \sum_{X \ni x} \| \Phi_{X}\| e^{\varepsilon |X| + \mathbf{b}(X)} < \infty \,.\]
Then, for every $\beta \in \mathbb{C}$ with $|\beta| \leq \varepsilon /(2\| \Phi\|_{\varepsilon, \mathbf{b}})$ there are (strong) local effective Hamiltonians, namely for every $L\subset V$ there exists a local interaction $\widetilde{\Phi}^{L, \beta}$ on $V$ satisfying Definition \ref{defi:localityEffectiveHamiltonian}(i)-(iii), and moreover

\[ \| \widetilde{\Phi}^{L, \beta}\|_{\mathbf{b}} = \sup_{x \in V} \sum_{X \ni x} \| \widetilde{\Phi}^{L, \beta}_{X}\| e^{\mathbf{b}(X)} < \frac{\varepsilon}{2}\,.  \]
\end{thm}

Observe that the decay of the local effective interaction is slightly weaker than the decay of $\Phi$. In particular, if we wanted to ensure that the decay of the effective interaction $\widetilde{\Phi}^{L, \beta}$ satisfies $\| \widetilde{\Phi}^{L, \beta}\|_{\mathbf{b}} < \infty$ for the subadditive function $\mathbf{b}(X) = \lambda |X| + \mu \diam(X)$, we would need that the original interaction $\Phi$ to decay as in $\| \Phi\|_{\mathbf{b}'} < \infty$ with $\mathbf{b}'(X) = (\lambda + \varepsilon) |X| + \mu \diam(X)$ for a positive value $\varepsilon >0$. This fact may be a limitation produced by our techniques. 

Under stronger assumptions on the decay of $\Phi$, however, it is possible to obtain a better result where the local effective Hamiltonian has the same type of decay. Let us denote by $\mathcal{S}$ the set of finite elements $X \subset V$ with $\Phi_{X} \neq 0$.

\begin{defi}
We say that our quantum spin system has finite degree $\mathfrak{d} \in \mathbb{N}$ if for every $X \in \mathcal{S}$, the number of subsets $Y \in \mathcal{S}$ such that $Y \cap X \neq \emptyset$ is at most $\mathfrak{d}$.
\end{defi}

Our primary example is the case of a quantum spin system over $V = \mathbb{Z}^{g}$ endowed with the supremum distance (norm) and with a local interaction $\Phi$ having finite range $r >0$. In this case, it is not difficult to check that it has finite degree, since every ball of radius $r$ intersects at most a number $\mathfrak{d} = \mathfrak{d}(r,g)$ of balls of radius $r$. For this type of interactions, we can prove the following result

\begin{thm}\label{thm:effectiveTemperatureFiniteDegree}
Let us consider a quantum spin system with local interaction $\Phi$ on $V$ satisfying the Commuting Hypothesis (Definition \ref{defi:CommutingHypothesis}), such that for a subadditive $\mathbf{b}:\mathcal{P}_{f}(V) \to [0,\infty)$
\[ \| \Phi\|_{\mathbf{b}} = \sup_{x \in V}\sum_{X \ni x}\| \Phi_{X}\|e^{\mathbf{b}(X)} < \infty\,, \]
and having finite degree $\mathfrak{d}$. Then, for every $|\beta| <\frac{1}{\mathfrak{d}(1+\mathfrak{d})e^{2} \| \Phi\|_{\mathbf{b}}}$, there are (strong) local effective Hamiltonians, namely for every $L\subset V$ there exists a local interaction $\widetilde{\Phi}^{L, \beta}$ on $V$ satisfying Definition \ref{defi:localityEffectiveHamiltonian}(i)-(iii), and moreover
\[ \| \widetilde{\Phi}^{L, \beta}\|_{\mathbf{b}} = \sup_{x \in V}\sum_{X \ni x}\| \widetilde{\Phi}^{L, \beta}_{X}\|e^{\mathbf{b}(X)} \leq 1\,. \]
\end{thm}

To establish the above results, we will employ cluster expansion techniques through the theory of abstract polymer models \cite{FriedliVelenik2018, Ueltschi2004, FernandezProcacci2007, KoteckyPreiss2003}. However, we are going to introduce a novelty that, to the best of our knowledge, has not been considered elsewhere, namely vector-valued polymer models. 
}
\subsubsection{Polymer models and cluster expansions}

A polymer model is described in terms of three elements denoted as $(\mathbb{P}, \xi, \mathbf{w})$. Here, $\mathbb{P}$ is a (nonempty) set  whose elements are referred as \emph{polymers}. There is also an \emph{interaction function} $\xi: \mathbb{P} \times \mathbb{P} \to \mathbb{R}$, which is assumed to satisfy
\[ \xi(\gamma, \gamma') = \xi(\gamma', \gamma) \quad \text{and} \quad |1+\xi(\gamma, \gamma')| \leq 1 \quad , \quad \text{ for all } \gamma, \gamma' \in \mathbb{P}\,. \]
\noindent Lastly, we have a function $\mathbf{w}: \mathbb{P} \to \mathcal{A}$, called the \emph{weight function}, taking values in a (complex) commuting Banach algebra $\mathcal{A}$, and satisfying
\begin{equation}\label{equa:boundedWeight}
|\mathbf{w}|:= \sum_{\gamma \in \mathbb{P}} \| \mathbf{w}(\gamma)\| < \infty\,. 
\end{equation}
In the literature, the weight function is typically assumed to take real or complex values. Although transitioning to vector-valued functions may introduce new challenges when attempting to extend results from scalars to this setting, the results that we will be using can be straightforwardly reproved along the same lines in this case, thanks to the commutativity of the Banach algebra. Investigating the nonconmutative case appears to be an interesting line of research that we will not be pursuing here.
 
Under the above conditions, the  \emph{(polymer) partition function} associated with this model is defined by
\begin{align*} 
\mathcal{Z} & := 1+ \sum_{m=1}^{\infty} \frac{1}{m!} \sum_{(\gamma_{1}, \ldots, \gamma_{m}) \in \mathbb{P}^{m}} \,\, \prod_{j=1}^{m} \mathbf{w}(\gamma_{j}) \, \prod_{1 \leq i <j \leq n} (1+\xi(\gamma_{i}, \gamma_{j}))\,.
\end{align*}
Then, subject to certain conditions on the weight function, we can write the logarithm of the partition function $\mathcal{Z}$ in terms of the Mayer expansion \cite{GruberKunz1971}
\begin{equation}\label{equa:MayerExpansion}
\mathcal{Z}= \exp\left(\sum_{m=1}^{\infty}  \sum_{(\gamma_{1}, \ldots, \gamma_{m}) \in \mathbb{P}^{m}} \phi(\gamma_{1}, \ldots, \gamma_{m}) \prod_{j=1}^{m} \mathbf{w}(\gamma_{j}) \right)
\end{equation}
where the functions $\phi: \cup_{m}\mathbb{P}^{m} \to \mathbb{R}$ are the so-called \emph{Ursell} functions (see e.g. \cite[Eq.\ (2.4)]{FernandezProcacci2007}). To explicitly define them, denote by $\mathcal{G}_{n}$ the set of all graphs with $n$ vertices, that we identify with $\{1,\ldots, n\}$. The edge connecting two vertices $i$ and $j$ will be denoted by $\{ i,j\}$ and to claim that a given graph $G \in \mathcal{G}_{n}$ contains this edge, we will write $\{ i,j\} \in G$ in an abuse of notation. Then, for every $(\gamma_{1}, \ldots, \gamma_{m}) \in \mathbb{P}^{m}$
\begin{equation}\label{equa:UrsellFunctionDefi} 
\phi(\gamma_{1}, \ldots, \gamma_{m})= \frac{1}{m!} \sum_{\substack{G \in \mathcal{G}_{m}\\ \text{ connected}}} \prod_{\{ i,j\} \in G}\xi(\gamma_{i}, \gamma_{j})\,.
\end{equation}
We will say that the sequence $(\gamma_{1}, \ldots, \gamma_{m})$ is a \emph{cluster} if the graph $G \in \mathcal{G}_{m}$, that contains an edge $\{ i,j\}$ if and only if $\xi(\gamma_{i}, \gamma_{j}) \neq 0$, is connected. Observe that $\phi(\gamma_{1}, \ldots, \gamma_{m}) = 0$ if $(\gamma_{1}, \ldots, \gamma_{m})$ is not connected, as every summand in the right hand-side of \eqref{equa:UrsellFunctionDefi} is going to be null.

Several sufficient conditions for the absolute convergence of the series in \eqref{equa:MayerExpansion} have been provided by e.g. Koteck\'{y} and Preiss, Dobrushin, and more recently by Fern\'{a}ndez and Procacci \cite{FernandezProcacci2007}. We are going to base on a criterion that appears in \cite[Theorem 5.4 and Lemma 5.6]{FriedliVelenik2018}, which in turn is based on a more general approach by Ueltschi \cite[Theorem 1]{Ueltschi2004}.

\begin{thm}\label{thm:convergenceAbstractPolymers}
Let $(\mathbb{P}, \xi, \mathbf{w})$ be a polymer model where the weight function $\mathbf{w}$ takes values in a commutative Banach algebra $\mathcal{A}$. Let us assume that there is a function  $\mathbf{a}:\mathbb{P} \to [0, \infty)$ such that 
\begin{enumerate}
\item[(i)] $\sum_{\gamma \in \mathbb{P}}\| \mathbf{w}(\gamma)\| e^{\mathbf{a}(\gamma)} < \infty$,
\item[(ii)] $\sum_{\gamma \in \mathbb{P}} \|\mathbf{w}(\gamma)\| \, |\xi(\gamma, \gamma^{\ast})| e^{\mathbf{a}(\gamma)} \leq \mathbf{a}(\gamma^*)$  for every $\gamma^* \in \mathbb{P}$.
\end{enumerate}
Then, the power series given in Eq.\ \eqref{equa:MayerExpansion} is absolutely convergent, namely
\[ \sum_{m=1}^{\infty}  \sum_{(\gamma_{1}, \ldots, \gamma_{m}) \in \mathbb{P}^{m}} \left|\phi(\gamma_{1}, \ldots, \gamma_{m})\right| \, \prod_{j=1}^{m} \left\| \mathbf{w}(\gamma_{j})\right\| \leq \sum_{\gamma \in \mathbb{P}} \| \mathbf{w}(\gamma)\| e^{\mathbf{a}(\gamma)} < \infty\,. \]
Moreover, for every $\gamma^{\ast} \in \mathbb{P}$
\begin{equation}\label{equa:convergenceAbstractPolymersAux2}
\sum_{m = 1}^{\infty} \sum_{(\gamma_{1}, \ldots, \gamma_{m}) \in \mathbb{P}^{m}} \left( \sum_{j=1}^{m} |\xi(\gamma^{\ast}, \gamma_{j})| \right) |\phi(\gamma_{1}, \gamma_{2}, \ldots, \gamma_{m})| \prod_{j=1}^{m} \| \mathbf{w}(\gamma_{j})\| \leq \mathbf{a}(\gamma^{\ast})\,. 
\end{equation}
\end{thm}

The proof of this result follows the lines of \cite[Theorem 5.4 and Lemma 5.6]{FriedliVelenik2018}, and its original source \cite[Theorem 1]{Ueltschi2004}. Although these proofs are developed in the scalar case, they can be straightforwardly reproduced in the commuting vector-valued case. Let us also observe that in \cite{FriedliVelenik2018} the set of polymers $\mathbb{P}$ is assumed to be finite. This is not a major issue, and actually in the original source \cite{Ueltschi2004} it is permitted that the set of polymers can be infinite. One just needs to add two extra conditions on the weight function that are omitted in the proofs of \cite[Theorem 5.4 and Lemma 5.6]{FriedliVelenik2018}, as they are superfluous under finiteness assumption. The first condition is that the weight function $\mathbf{w}$ satisfies \eqref{equa:boundedWeight}, which is tantamount to the condition in \cite{Ueltschi2004} that the complex measure has bounded total variation; and the second condition is that $\mathbf{a}$ in Theorem \ref{thm:convergenceAbstractPolymers} satisfies condition ($ii$), whose analogue is \cite[Eq. (3) in Theorem 1]{Ueltschi2004}.

\subsubsection{Cluster expansion for the effective Hamiltonian}

Let $\Lambda$ be a finite subset of $V$ and let also $L \subset V$. Next, we want to find a (vector-valued) polymer model $(\mathbb{P}, \xi, \mathbf{w})$ for which we can rewrite
\[ \mathbb E_{L}[e^{-\beta H_{\Lambda}}]\,, \]
as the associated (polymer) partition function, so that we can apply the cluster expansion techniques that allow us to describe its logarithm as a convergent power series. 

Let us denote by $\mathcal{S} = \mathcal{S}(\Lambda)$ the set of all (finite) subsets $X \subset \Lambda$ such that $\Phi_{X} \neq 0$. Then, we can expand
\begin{equation}\label{equa:logPartitionAux1}
\mathbb{E}_{L}[e^{-\beta H_{\Lambda}}] = 1+\sum_{k=1}^{\infty} \frac{(-\beta)^{k}}{k!} \sum_{(X_{1}, \ldots, X_{k}) \in \mathcal{S}^{k}} \mathbb{E}_{L}\left[ \Phi_{X_{1}} \ldots \Phi_{X_{k}} \right]\,.  
\end{equation}
 We next establish an equivalence relation on $\mathcal{S}^{k}$, by defining $\mathbb{X} = (X_{i})_{i=1}^{k} \sim \mathbb{Y}=(Y_{i})_{i=1}^{k}$ if there is a permutation $\pi$ on $\{ 1, \ldots, k\}$ such that $X_{\pi(i)} = Y_{i}$ for every $1 \leq i \leq k$. Let us denote $\mathbb{S}_{k} = \mathbb{S}_{k}(\Lambda) := \mathcal{S}^{k}/\sim$ and $\mathbb{S}= \mathbb{S}(\Lambda):=\cup_{k}\mathbb{S}_{k}$ the set of all equivalence classes. We also define for each $\gamma \in \mathbb{S}_{k}$
 \[ \mathbf{w}_{\beta}(\gamma) := \frac{(-\beta)^{k}}{k!}\sum_{(X_{1}, \ldots, X_{k}) \in \gamma} \mathbb{E}_{L}\left[ \Phi_{X_{1}} \ldots \Phi_{X_{k}} \right] \,, \]
so that Eq.\ \eqref{equa:logPartitionAux1} can be rewritten as
 \begin{equation} \label{equa:logPartitionAux2}
 \mathbb{E}_{L}[e^{-\beta H_{\Lambda}}] = 1+\sum_{k=1}^{\infty}   \sum_{\gamma \in \mathbb{S}_{k}} \mathbf{w}_{\beta}(\gamma) \,.
 \end{equation}
 We can associate (identify) the elements of $\mathbb{S}$ with (nonempty) multisets by considering the equivalence class of $(X_{1}, \ldots, X_{k})$ as the multiset $[ X_{1}, \ldots, X_{k}]$. In this way, each element in $\mathbb{S}$ corresponds to a unique multiset and vice versa. Given two multisets $\gamma = [ X_{1}, \ldots, X_{k}]$ and $\gamma'=[ Y_{1}, \ldots, Y_{l}]$, they are said to be \emph{disjoint}, denoted $\gamma \wedge \gamma'=\emptyset$,  if $X_{i} \cap Y_{j} = \emptyset$ for every $i, j$. Otherwise, we will write $\gamma \wedge \gamma' \neq \emptyset$. The \emph{sum} of the previous multisets $\gamma$ and $\gamma'$ is defined as the new multiset $\gamma \vee \gamma' := [X_{1}, \ldots, X_{k}, Y_{1}, \ldots, Y_{l}]$. We say that a multiset $\gamma$ is \emph{disconnected} if it can  be written as $\gamma = \gamma_{1} \vee \gamma_{2}$ where $\gamma_{1}$ and $\gamma_{2}$ are disjoint (nonempty) multisets. In this case, it is very easy to check that
 \begin{equation}\label{equa:logPartitionAux3} 
 \mathbf{w}_{\beta}(\gamma) = \mathbf{w}_{\beta}(\gamma_{1}) \mathbf{w}_{\beta}(\gamma_{2})\,. 
 \end{equation}
 If $\gamma$ is not disconnected, we will say that it is \emph{connected}. Note that an equivalent way to formulate that a polymer $\gamma$ is connected, is that we can order its elements as $\gamma =[X_{1}, \ldots, X_{k}]$ so that $X_{j} \cap (X_{1} \cup X_{2} \cup \ldots \cup X_{j-1}) \neq \emptyset$ for every $j=2,\ldots, k$. See Figure \ref{fig:connected_polymers}. We can define the function $\chi: \mathbb{S} \to \{ 0,1\}$ given for each $\chi = [X_{1}, \ldots, X_{k}] \in \mathbb{S}_{k}$
 \[
 \chi(\gamma) =\chi(X_{1}, \ldots, X_{k}) = 
 \begin{cases}
1 &  \text{$\gamma$ is conneced}\\
0 & \text{$\gamma$ is disconnected}
 \end{cases}
 \]

\begin{figure}[ht]
\begin{center}

\begin{tikzpicture}[scale=0.4]

%%%%%%%%%%%%%%%%%%%% LEFT PICTURE

\fill [darkyellow!50!orange!25!white] (-0.5,-0.5) rectangle (12.5,12.5);

%Adding the lattice points

\foreach \n in {1,...,13}{
\foreach \m in {1,...,13}{
\shade[shading=ball, ball color=darkred!10!white] (\n-1,\m-1) circle (0.2);
}
 }

%%% Region X1

\draw [black, very thick] (2.5,2.5) rectangle (6.5,6.5);
\node at (4.5,4.5) {\small $X_{1}$};

%%% Region X2

\draw [black, very thick] (5.5,4.5) rectangle (8.5,9.5);
\node at (6.5,8.5) {\small $X_{2}$};

%%% Region X3

\draw [black, very thick] (4.5,5.5) rectangle (0.5,7.5);
\node at (1.5,6.5) {\small  $X_{4}$};

%%% Region X4

\draw [black, very thick] (4.5,3.5) rectangle (9.5,1.5);
\node at (7.5,2.5) {\small  $ X_{3}$};

%%% Region X5

\draw [black, very thick] (8.5,2.5) rectangle (11.5,0.5);
\node at (10.5,1.5) {\small  $X_{5}$};

 %Labels of regions

%%%%%%%%%%%%%%%%%%%% RIGHT PICTURE

\begin{scope}[xshift=17cm]

\fill [darkyellow!50!orange!25!white] (-0.5,-0.5) rectangle (12.5,12.5);

%Adding the lattice points

\foreach \n in {1,...,13}{
\foreach \m in {1,...,13}{
\shade[shading=ball, ball color=darkred!10!white] (\n-1,\m-1) circle (0.2);
}
 }

\begin{scope}[yshift=2cm]
\draw [black, very thick] (2.5,2.5) rectangle (6.5,6.5);
\node at (4.5,4.5) {\small $X_{1}$};
\end{scope}

\begin{scope}[yshift=2cm]
\draw [black, very thick] (5.5,4.5) rectangle (8.5,9.5);
\node at (6.5,8.5) {\small $X_{2}$};
\end{scope}

\draw [black, very thick] (4.5,3.5) rectangle (9.5,1.5);
\node at (7.5,2.5) {\small $X_{3}$};

\begin{scope}[yshift=2cm]
\draw [black, very thick] (4.5,5.5) rectangle (0.5,7.5);
\node at (1.5,6.5) {\small $X_{4}$};
\end{scope}

\draw [black, very thick] (8.5,2.5) rectangle (11.5,0.5);
\node at (10.5,1.5) {\small $X_{5}$};

\end{scope}

\end{tikzpicture}

 \caption{On the left-hand side, a multiset $\gamma =[X_{1}, X_{2}, X_{3}, X_{4}, X_{5}]$ that is connected (polymer). On the right-hand side, an homonymous multiset that is disconnected, as it can be decomposed as $\gamma = \gamma_{1} \vee \gamma_{2}$ where $\gamma_{1} = [X_{1}, X_{2}, X_{4}]$ and $\gamma_{2} = [X_{3}, X_{5}]$ satisfy $\gamma_{1} \wedge \gamma_{2} = \emptyset$.}
  \label{fig:connected_polymers}
  \end{center}
\end{figure}
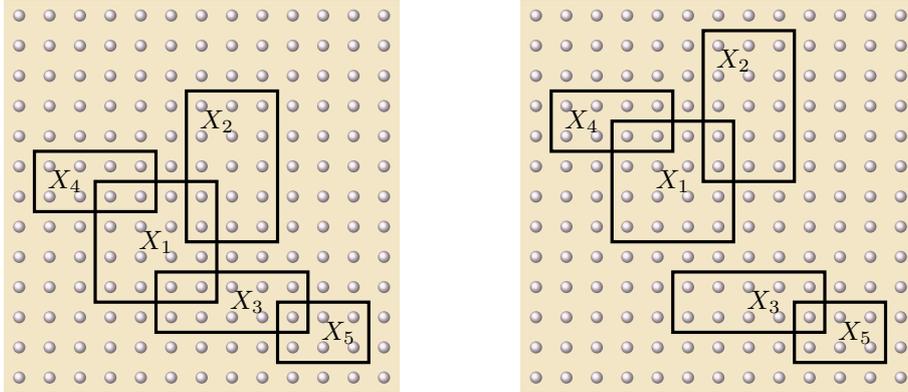
 
 Let  $\mathbb{P}_{k} = \mathbb{P}_{k}(\Lambda)$ be the subset of $\mathbb{S}_{k}$ made of all multisets that are connected, and $\mathbb{P} = \mathbb{P}(\Lambda):= \cup_{k} \mathbb{P}_{k}$. Note that every $\gamma \in \mathbb{S}$ can be decomposed in a unique way as a sum of connected multisets (i.e.\ polymers) $\gamma = \gamma_{1} \vee \ldots \vee \gamma_{m}$ with $\gamma_{i} \wedge \gamma_{j} = \emptyset$ whenever $i \neq j$, let us call them its \emph{connected components}, so that by Eq.\ \eqref{equa:logPartitionAux3} 
 \[ \mathbf{w}_{\beta}(\gamma) = \prod_{i=1}^{m} \mathbf{w}_{\beta}(\gamma_{i})\,. \]
 Using this fact on Eq.\ \eqref{equa:logPartitionAux2}, and rearranging summands according to the number of connected components, we get
 \[ \mathbb{E}_{L}[e^{-\beta H_{\Lambda}}] = 1 + \sum_{m=1}^{\infty} \frac{1}{m!} \sum_{\substack{(\gamma_{1}, \ldots, \gamma_{m}) \in \mathbb{P}^{m} \\ \gamma_{i} \wedge \gamma_{j} = \emptyset \,,\, \forall i \neq j}} \prod_{i=1}^{m} \mathbf{w}_{\beta}(\gamma_{i})  \,.  \]
Defining $\xi: \mathbb{P} \times \mathbb{P} \to \{ 0,-1\}$ as 
\[
\xi(\gamma, \gamma') = -\chi(\gamma \vee \gamma') = 
\begin{cases}
-1 & \gamma \wedge \gamma' \neq \emptyset\\
0 & \gamma \wedge \gamma' = \emptyset
\end{cases}\, ,
\]
we can rewrite again
\begin{equation}\label{equa:CondExpectAsPartitionFunction}
\mathbb{E}_{L}[e^{-\beta H_{\Lambda}}] = 1 + \sum_{m=1}^{\infty} \frac{1}{m!} \sum_{\substack{(\gamma_{1}, \ldots, \gamma_{m}) \in \mathbb{P}^{m}}} \,\, \prod_{i=1}^{m} \mathbf{w}_{\beta}(\gamma_{i}) \prod_{1 \leq i < j \leq m} (1+\xi(\gamma_{i}, \gamma_{j})) 
\end{equation}
This form is consistent with the polymer partition function associated with the polymer model where $\mathbb{P}$ serves as the set of polymers, and it employs the disjointness relation along with the weight function $\mathbf{w}_{\beta}: \mathcal{G} \to \mathfrak{A}_{V}$. However, in order to apply Theorem \ref{thm:convergenceAbstractPolymers} and to get an explicit description of the logarithm, we need to assume that the weight takes values in a commutative algebra.It is at this point we will need that our local interactions $\Phi$ satisfy the Commuting Hypothesis given in Definition \ref{defi:CommutingHypothesis}.

Let us state now the main result from which we will prove the existence of a local effective Hamiltoniam. Recall that a function $\mathbf{c}:\mathcal{P}_{f}(V) \to [0,\infty)$ is subaditive if $\mathbf{c}(X \cup Y) \leq \mathbf{c}(X) + \mathbf{c}(Y)$ for every $X,Y \in \mathcal{P}_{f}(V)$. From such a function, we can construct a function on the set of polymers $\mathbb{P}$ that we denote in the same way $\mathbf{c}:\mathbb{P} \to [0, \infty)$ by defining $\mathbf{c}(\gamma) = \sum_{X \in \gamma} \mathbf{c}(X) =\sum_{i=1}^{m}\mathbf{c}(X_{i})$ if $\gamma = [X_{1}, \ldots, X_{m}]$.

\begin{thm}\label{thm:generalExistenceEffectiveHamiltonianCommuting}
Let us consider a quantum spin system with local interaction $\Phi$ satisfying the Commuting Hypothesis (Definiton \ref{defi:CommutingHypothesis}). Assume that for a given $\beta \in \mathbb{R}$ there exist subadditive maps $\mathbf{a}, \mathbf{b}: \mathcal{P}_{f}(V) \to [0,\infty)$ satisfying that for every $Z \in \mathcal{S}$
\begin{equation}\label{equa:generalExistenceEffectiveHamiltonianCommutingAux1} 
\sum_{k=1}^{\infty} \, |\beta|^{k}\,\sum_{[X_{1}, \ldots, X_{k}] \in \mathbb{P}_{k}} \chi(Z,X_{1}, \ldots, X_{k}) \, \prod_{j=1}^{k}  \| \Phi_{X_{j}}\| e^{\mathbf{a}(X_{j}) + \mathbf{b}(X_{j})} \leq \mathbf{a}(Z)\,. 
\end{equation}
Then, for every $L \subset V$ there exists a local interaction $\widetilde{\Phi}^{L, \beta}$ on $V$ satisfying the following properties:
\begin{enumerate}
\item[(i)]  $\widetilde{\Phi}_{X}^{L, \beta}$ is supported in $X \cap L$ for every  $X\in \mathcal{P}_{f}(V)$.
\item[(ii)] If $L' \subset V$, then $\widetilde{\Phi}^{L, \beta}_{X} = \widetilde{\Phi}^{L', \beta}_{X}$ for all  $X \in \mathcal{P}_{f}(V)$ such that $X \cap L' = X \cap L$.
\item[(iii)] For every finite subset $\Lambda \subset V$
\[ \log \mathbb{E}_{L}[e^{-\beta H_{\Lambda}}] =  \sum_{X \subset \Lambda} \widetilde{\Phi}^{L, \beta}_{X}\,. \]
\item[(iv)] For every $x \in V$
\begin{equation}\label{equa:effectiveInteractionDecayCondition}
\sum_{X \ni x} \| \widetilde{\Phi}^{L, \beta}_{X} \| e^{\mathbf{b}(X)} \leq \mathbf{a}(\{ x\})
\end{equation}

\end{enumerate} 
\end{thm}

\begin{proof}
Let us start by fixing some finite subsets $\Lambda, L \subset V$. Recall the discussion preceding this theorem, where we found that $\mathbb{E}_{L}[e^{-\beta H_{\Lambda}}]$ can be rewritten as the partition function of a polymer model, see \eqref{equa:CondExpectAsPartitionFunction}. The Commuting Hypothesis (Definiton \ref{defi:CommutingHypothesis}) implies that the weight function $\mathbf{w}_{\beta}: \mathbb{P} \to \mathcal{A}$ takes values in a commutative Banach algebra $\mathcal{A}$. Thus, the first assumption of Theorem \ref{thm:convergenceAbstractPolymers} is satisfied.

 Next, we are going to check that conditions (i) and (ii) of Theorem \ref{thm:convergenceAbstractPolymers}  are satisfied when considering as the weight function $\gamma \mapsto \mathbf{w}_{\beta}(\gamma) e^{\mathbf{b}(\gamma)}$. Observe that, as a consequence, the same conditions (i) and (ii) will be satisfied if considering as the weight function only $\gamma \mapsto \mathbf{w}_{\beta}(\gamma) $. We have to consider however this more stringent condition in order to ensure that the last condition on the decay of the effective interaction, eq. \eqref{equa:effectiveInteractionDecayCondition}, is satisfied. On the one hand, observe that we can always bound
 \begin{equation}\label{equa:generalExistenceEffectiveHamiltonianCommutingAux2} 
 |\xi(\gamma, \gamma^\ast)| \leq \sum_{Z \in \gamma^{\ast}} |\xi(\gamma,[Z])|\,. 
 \end{equation}
Combining inequality \eqref{equa:generalExistenceEffectiveHamiltonianCommutingAux2} with the definition of $\xi$ in terms of $\chi$ and also with the hypothesis \eqref{equa:generalExistenceEffectiveHamiltonianCommutingAux1}, we deduce that
\begin{align*}   
\sum_{\gamma \in \mathbb{P}(\Lambda)} |\xi(\gamma, \gamma^*)| \, \| \mathbf{w}_{\beta}(\gamma)\| & e^{\mathbf{a}(\gamma) + \mathbf{b}(\gamma)} 
 \leq \sum_{Z \in \gamma^\ast} \sum_{\gamma \in \mathbb{P}(\Lambda)} |\xi([Z], \gamma)|\, \| \mathbf{w}_{\beta}(\gamma)\| e^{\mathbf{a}(\gamma) + \mathbf{b}(\gamma)}\\
& \leq \sum_{Z \in \gamma^\ast}  \sum_{k=1}^{\infty} \,\, \sum_{[X_{1}, \ldots, X_{k}] \in \mathbb{P}_{k}(\Lambda)} \chi(Z, X_{1}, \ldots, X_{k}) \prod_{j=1}^{k}|\beta| \, \| \Phi_{X_{j}}\| e^{\mathbf{a}(X_j)+\mathbf{b}(X_j)} \\[2mm]
& \leq \sum_{Z \in \gamma^{\ast}} \mathbf{a}(Z) = \mathbf{a}(\gamma^{\ast})\,.
\end{align*}
This shows that Theorem \ref{thm:convergenceAbstractPolymers}.(ii) holds.  With the same idea, we can also argue that Theorem \ref{thm:convergenceAbstractPolymers}.(i) is satisfied, since due to the fact that $\mathcal{S}(\Lambda)$ is finite (because $\Lambda$ is finite), 
\[  
\sum_{\gamma \in \mathbb{P}(\Lambda)} \| \mathbf{w}_{\beta}(\gamma)\| e^{\mathbf{a}(\gamma) + \mathbf{b}(\gamma)} \leq \sum_{Z \in \mathcal{S}(\Lambda)} \sum_{\gamma \in \mathbb{P}(\Lambda)} |\xi(\gamma, [Z])| \, \|\mathbf{w}_{\beta}(\gamma)\| e^{\mathbf{a}(\gamma)+\mathbf{b}(\gamma)} \leq \sum_{Z \in \mathcal{S}(\Lambda)} \mathbf{a}(Z)< \infty\,.
\]
Thus, as a consequence of Theorem \ref{thm:convergenceAbstractPolymers}, we conclude that
\[
\sum_{m=1}^{\infty} \sum_{(\gamma_{1}, \ldots, \gamma_{m}) \in \mathbb{P}^{m}(\Lambda)} |\phi(\gamma_{1}, \ldots, \gamma_{m})| \prod_{j=1}^{m} \|\mathbf{w}_{\beta}(\gamma_{j})\| < \infty\,.
\]
The absolute convergence of the previous sum allows us to define, for every finite subset  $X \subset \Lambda$, the following Hermitian operator supported on $X$:
\[ 
\widetilde{\Phi}_{X}^{L, \beta} := \sum_{n=1}^{\infty}  \,\,
\sum_{\substack{(\gamma_{1}, \ldots, \gamma_{n}) \in \mathbb{P}(\Lambda)^{n}\colon \\ \supp(\gamma_{1} \vee \ldots \vee \gamma_{n}) = X}} \phi(\gamma_{1}, \ldots, \gamma_{n}) \prod_{i=1}^{n} \mathbf{w}_{\beta}(\gamma_{i})\,. 
\]
Note that the preceding definition is independent of $\Lambda \supset X$, namely, we can replace in the above sum the indexing $(\gamma_{1}, \ldots, \gamma_{n}) \in \mathbb{P}(\Lambda)^{n}$ with  $(\gamma_{1}, \ldots, \gamma_{n}) \in \mathbb{P}^{n}$ due to the additional condition $\supp(\gamma_{1} \vee \ldots \vee \gamma_{n}) = X$. Thus, since $\Lambda$ is arbitrary, we have defined a local interaction \mbox{$\widetilde{\Phi}^{L, \beta}: \mathcal{P}_{f}(V) \to \mathbb{R}$} that satisfies for every finite $\Lambda \subset V$
\[
\mathbb{E}_{L}[e^{-\beta H_{\Lambda}}]= \exp\left(\sum_{n=1}^{\infty}  \sum_{(\gamma_{1}, \ldots, \gamma_{n}) \in \mathbb{P}^{n}(\Lambda)} \phi(\gamma_{1}, \ldots, \gamma_{n}) \prod_{i=1}^{n} \mathbf{w}_{\beta}(\gamma_{i}) \right) = \exp\left( \sum_{X \subset \Lambda} \widetilde{\Phi}^{L,\beta}_{X} \right)\,.
\]
Let us check that $\widetilde{\Phi}^{L, \beta}$ is a local interaction satisfying (i) and (ii). For every finite subset $X \subset V$, it is clear that $\widetilde{\Phi}_{X}^{L, \beta}$ is self-adjoint, since the conditional expectation preserves Hermiticity. Moreover, it  satisfies (i) since if $Q$ is supported in $X$, then $\mathbb{E}_{L}[Q]$ is supported in $L \cap X$; and also (ii) since $\mathbb{E}_{L}[Q] = \mathbb{E}_{L'}[Q]$ if $Q$ is supported in $X$ and $L' \cap X = L \cap X$.

To verify the decay condition \eqref{equa:effectiveInteractionDecayCondition}, we use first that $\mathbf{b}$ is subadditive to estimate
\[ 
\sum_{X \ni x}\|\widetilde{\Phi}_{X}^{L, \beta} \| e^{\mathbf{b}(X)} 
\leq 
\sum_{X \ni x} \, \sum_{n=1}^{\infty}  \,\,
\sum_{\substack{(\gamma_{1}, \ldots, \gamma_{n}) \in \mathbb{P}^{n} \colon \\ \supp(\gamma_{1} \vee \ldots \vee \gamma_{n}) = X}} |\phi(\gamma_{1}, \ldots, \gamma_{n})| \, \prod_{i=1}^{n} \|\mathbf{w}_{\beta}(\gamma_{i})
\| e^{\mathbf{b}(\gamma_{i})}
\]
Then, we apply
inequality \eqref{equa:convergenceAbstractPolymersAux2} from Theorem \ref{thm:convergenceAbstractPolymers}, so that 
\begin{align*}
 \sum_{X \ni x}\|\widetilde{\Phi}_{X}^{L, \beta} \| e^{\mathbf{b}(X)}
& \leq 
\sum_{n=1}^{\infty} \sum_{\substack{(\gamma_{1}, \ldots, \gamma_{n}) \in \mathbb{P}^{n} \colon \\ x \in \supp(\gamma_{1} \vee \ldots \vee \gamma_{n})}} |\phi(\gamma_{1}, \ldots, \gamma_{n}) | \, \prod_{i=1}^{n} \|\mathbf{w}_{\beta}(\gamma_{i})
\| e^{\mathbf{b}(\gamma_{i})}\\
& \leq 
\sum_{n=1}^{\infty} \sum_{(\gamma_{1}, \ldots, \gamma_{n}) \in \mathbb{P}^{n}} \left( \sum_{i=1}^{n} \,|\xi([\{ x\}], \gamma_{i})|  \right) |\phi(\gamma_{1}, \ldots, \gamma_{n})| \prod_{i=1}^{n} \|\mathbf{w}_{\beta}(\gamma_{i})
\| e^{\mathbf{b}(\gamma_{i})}\\[1.5mm]
& \leq
\mathbf{a}(\{ x\})\,.
\end{align*}
This finishes the proof of (iv). Note that conditions (i) - (iii) also hold, as per the discussion above on the weights.
\end{proof}

In the following subsections, we will apply the previous result to prove Theorems \ref{thm:effectiveTemperatureFiniteDegree} (finite-range interactions), 
for which the argument is simpler, and later Theorem \ref{thm:GeneralLocalEffectiveInteractionExistence} (exponentially decaying interactions).

\subsubsection{Finite-degree case: Proof of Theorem \ref{thm:effectiveTemperatureFiniteDegree}}

We will need the following auxiliary result that appears in \cite[Lemma 1]{wild2023classical}, which is a reformulation of \cite[Proposition 3.6]{haah2021optimal}.

\begin{prop}\label{Prop:countingPolymersFixedComponent}
Let us assume that the quantum spin system has finite degree $\mathfrak{d}$. Then, for every $X \in \mathcal{S}$ and $m \in \mathbb{N}$, the number of polymers $\gamma=[X_{1}, \ldots, X_{k}] \in \mathbb{P}_{k}$ such that $X \in \gamma$ is at most $(e \mathfrak{d})^{k}$. 
\end{prop}

Next, we can prove the main result that establishes the existence of an effective Hamiltonian for high temperatures under the specified conditions.

\begin{proof}[Proof of Theorem \ref{thm:effectiveTemperatureFiniteDegree}]
We have to prove that the hypotheses of Theorem \ref{thm:generalExistenceEffectiveHamiltonianCommuting} are satisfied for this choice of $\mathbf{a}$ and $\mathbf{b}$. To check that inequality \eqref{equa:generalExistenceEffectiveHamiltonianCommutingAux1} holds, we simply use that for each set $X \in \mathcal{S}$ we can estimate $\| \Phi_{X}\|e^{\mathbf{a}(X) + \mathbf{b}(X)} \leq \| \Phi\|_{\mathbf{a} + \mathbf{b}}$, so that
\begin{align*}
\sum_{k=1}^{\infty} \,\,\sum_{[X_{1}, \ldots, X_{k}] \in \mathbb{P}_{k}} \chi(Z,X_{1}, \ldots, X_{k}) \, & \prod_{j=1}^{k} |\beta| \| \Phi_{X_{j}}\| e^{\mathbf{a}(X_{j}) + \mathbf{b}(X_{j})}\\ 
& \leq \sum_{k=1}^{\infty} \, |\beta|^{k} \| \Phi\|_{\mathbf{a} + \mathbf{b}}^{k}\,\sum_{[X_{1}, \ldots, X_{k}] \in \mathbb{P}_{k}} \chi(Z,X_{1}, \ldots, X_{k}) \,.
\end{align*}
Since each $\gamma=[X_{1}, \ldots, X_{k}] \in \mathbb{P}_{k}$ is connected,  $\chi(Z,X_{1}, \ldots, X_{k}) = 1$ if and only if (at least) one of the sets $X_{j}$ satisfies $X_{j} \cap Z \neq \emptyset$. Then, we can estimate for each $k \in \mathbb{N}$
\[
\sum_{[X_{1}, \ldots, X_{k}] \in \mathbb{P}_{k}} \chi(Z,X_{1}, \ldots, X_{k})
\leq  
\sum_{X \in \mathcal{S}}\chi(Z, X) \sum_{\gamma \in \mathbb{P}_{k} \colon \gamma \ni X} 1 \leq \sum_{X \in \mathcal{S}} \chi(Z, X) (\mathfrak{d}e)^{k} \leq \mathfrak{d} (\mathfrak{d}e)^{k}\,. 
\]
Thus, applying this estimate in the above inequality
\[
\sum_{k=1}^{\infty} \,\,\sum_{[X_{1}, \ldots, X_{k}] \in \mathbb{P}_{k}} \chi(Z,X_{1}, \ldots, X_{k}) \,  \prod_{j=1}^{k} |\beta| \| \Phi_{X_{j}}\| e^{\mathbf{a}(X_{j}) + \mathbf{b}(X_{j})} \leq \sum_{k=1}^{\infty} \mathfrak{d}(|\beta|\, \|\Phi \|_{\mathbf{a} + \mathbf{b}} \mathfrak{d} e)^{k}\,.
\]
Taking $\mathbf{a}$ as a constant function, $\mathbf{a}(X) = a>0$ for every $X \in \mathcal{P}_{f}(V)$, which is obviously subadditive, we have that $\| \Phi\|_{\mathbf{a} + \mathbf{b}} = e^{a} \| \Phi\|_{\mathbf{b}}$, so that
\[
\sum_{k=1}^{\infty} \,\,\sum_{[X_{1}, \ldots, X_{k}] \in \mathbb{P}_{k}} \chi(Z,X_{1}, \ldots, X_{k}) \,  \prod_{j=1}^{k} |\beta| \| \Phi_{X_{j}}\| e^{\mathbf{a}(X_{j}) + \mathbf{b}(X_{j})}
\leq
\sum_{k=1}^{\infty} \mathfrak{d}(|\beta|\mathfrak{d} e^{1+a} \|\Phi\|_{\mathbf{b}})^{k}
\]
Therefore, if $|\beta|<\frac{1}{\mathfrak{d} e^{1+a}\| \Phi\|_{\mathbf{b}}}$, then we can estimate
\[ \sum_{k=1}^{\infty} \,\,\sum_{[X_{1}, \ldots, X_{k}] \in \mathbb{P}_{k}} \chi(Z,X_{1}, \ldots, X_{k}) \,  \prod_{j=1}^{k} |\beta| \| \Phi_{X_{j}}\| e^{\mathbf{a}(X_{j}) + \mathbf{b}(X_{j})} \leq \frac{|\beta| \mathfrak{d}^{2} e^{1+a}  \| \Phi\|_{\mathbf{b}}}{1-|\beta| \mathfrak{d} e^{1+a}  \| \Phi\|_{\mathbf{b}}} \,. \]
If we moreover impose $|\beta| \leq \frac{a}{a+\mathfrak{d}} \frac{1}{\mathfrak{d}e^{1+a}\| \Phi\|_{\mathbf{b}}}$, {using that $x \mapsto \frac{x}{1-x}$ is increasing on $[0,1)$} we have that
\[ 
\frac{|\beta| \mathfrak{d}^{2} e^{1+a}  \| \Phi\|_{\mathbf{b}}}{1-|\beta| \mathfrak{d} e^{1+a}  \| \Phi\|_{\mathbf{b}}} \leq a = \mathbf{a}(Z)\,,
\]
which means that \eqref{equa:generalExistenceEffectiveHamiltonianCommutingAux1} is satisfied. Observe that the choice of $a>0$ is arbitrary. Actually we could try to optimize the map $a \mapsto \frac{a}{a+\mathfrak{d}} e^{-a}$, obtaining that it reaches an absolute maximum on $[0, \infty)$ at $a= \frac{\sqrt{\mathfrak{d}^{2} + 4 \mathfrak{d}} - \mathfrak{d}}{2} = \frac{2\mathfrak{d}}{\mathfrak{d}+\sqrt{\mathfrak{d}^{2} + 4\mathfrak{d}}} = \frac{2}{1+\sqrt{1 + 4/\mathfrak{d}} }$. Since the expression is rather complicated, we can take $a=1$, obtaing the expression that appears in the statement of the theorem.
\end{proof}

\subsubsection{Exponentially-decaying interactions: Proof of Theorem \ref{thm:GeneralLocalEffectiveInteractionExistence}}

We will need the following auxiliary result. It is based on the proof of \cite[Theorem 5.4]{FriedliVelenik2018}.

\begin{lem}\label{lem:toolConvergenceGeneral}
Let $\mathbf{c}, \mathbf{u}:\mathcal{P}_{f}(V) \to [0,\infty)$ such that for every $Z \in \mathcal{S}$
\begin{equation}\label{equa:lemToolConvergenceGeneralHypo} 
\sum_{X \in \mathcal{S}}\chi(Z, X)\, \mathbf{u}(X) e^{\mathbf{c}(X)} \leq \mathbf{c}(Z)\,. 
\end{equation}
Then,
\begin{equation}\label{equa:lemToolConvergenceGeneralMain}
\sum_{k=1}^{\infty}\sum_{[X_{1}, \ldots, X_{k}] \in \mathbb{S}_{k}} \chi(Z, X_{1}, \ldots, X_{k}) \prod_{j=1}^{k} \mathbf{u}(X_j) \leq e^{\mathbf{c}(Z)}-1\,. 
\end{equation}
\begin{equation}\label{equa:lemToolConvergenceGeneralMain2}
\sum_{k=1}^{\infty}\sum_{[X_{1}, \ldots, X_{k}] \in \mathbb{P}_{k}} \chi(Z, X_{1}, \ldots, X_{k}) \prod_{j=1}^{k} \mathbf{u}(X_j) \leq \mathbf{c}(Z)\,. 
\end{equation}
\end{lem}

\begin{proof}
Let us denote $\mathbb{P}_{\leq m} = \cup_{j \leq m} \mathbb{P}_{j}$ and $\mathbb{S}_{\leq m} = \cup_{j \leq m} \mathbb{S}_{j}$ for every $m \in \mathbb{N}$. We are going to prove by induction on $m$ that for every $Z \in \mathcal{S}$ we have 
\begin{equation}\label{equa:lemToolConvergenceGeneralAux1} 
\sum_{\gamma \in \mathbb{P}_{\leq m}} \chi(Z,\gamma) \mathbf{u}^{\gamma} \leq \mathbf{c}(Z) \quad \text{and} \quad \sum_{\gamma \in \mathbb{S}_{\leq m}} \chi(Z,\gamma) \mathbf{u}^{\gamma} \leq e^{\mathbf{c}(Z)}-1\, ,  
\end{equation}
for $\mathbf{u}^{\gamma}= \prod_{X \in \gamma} \mathbf{u} (X)$. Then, taking limit when $m$ tends to infinity we will get that both \eqref{equa:lemToolConvergenceGeneralMain} and \eqref{equa:lemToolConvergenceGeneralMain2} hold. For $m=1$, both inequalities are a simple consequence of the hypothesis \eqref{equa:lemToolConvergenceGeneralHypo}, as
\[ 
\sum_{[X_{1}] \in \mathbb{P}_{1}} \chi(Z, X_{1}) \, \mathbf{u}(X_{1}) = \sum_{[X_{1}] \in \mathbb{S}_{1}} \chi(Z, X_{1}) \, \mathbf{u}(X_{1}) = \sum_{X \in \mathcal{S}} \chi(Z,X) \, \mathbf{u}(X) \leq \mathbf{c}(Z) \leq e^{\mathbf{c}(Z)}-1\,. \]
Let us next assume that \eqref{equa:lemToolConvergenceGeneralAux1} holds for $m$. To see that it holds for $m+1$, let us start by noticing that since polymers $\gamma \in \mathbb{P}$ are simply the elements of $\mathbb{S}$ satisfying $\chi(\gamma) \neq 0$ and thus equal to one, we can rewrite 
\[ \sum_{\gamma \in \mathbb{P}_{\leq m+1}} \chi(Z, \gamma)  \, \mathbf{u}^{\gamma}  = \sum_{\gamma \in \mathbb{S}_{\leq m+1}} \chi(Z, \gamma) \, \chi(\gamma) \, \mathbf{u}^{\gamma}\,.  \]
The condition $\chi(Z,\gamma) = 1$ yields that $X \cap Z \neq \emptyset$ for some $X \in \gamma$. Therefore, we can split $\gamma = \gamma' \vee [X]$ and estimate 
\begin{align*}
\sum_{\gamma \in \mathbb{S}_{\leq m+1}} \chi(Z, \gamma) \, \chi(\gamma) \, \mathbf{u}^{\gamma} 
& \leq
\sum_{X \in \mathcal{S}}   \sum_{\gamma' \in \mathbb{S}_{\leq m}}\chi(Z,X) \, \chi(X,\gamma') \,\mathbf{u}(X)\, \mathbf{u}^{\gamma'} \\[1.5mm]
& = \sum_{X \in \mathcal{S}}  \chi(Z,X) \,\mathbf{u}(X) \sum_{\gamma' \in \mathbb{S}_{\leq m}} \chi(X,\gamma') \, \mathbf{u}^{\gamma'}\,.
\end{align*}
Using then the induction hypothesis in the previous expression, and subsequently the condition of the statement of the lemma, we can further get
\[
\sum_{\gamma \in \mathbb{S}_{\leq m+1}} \chi(Z, \gamma) \, \chi(\gamma) \, \mathbf{u}^{\gamma} 
\leq \sum_{X \in \mathcal{S}} \chi(Z,X) \mathbf{u}(X) e^{\mathbf{c}(Z)}
\leq \mathbf{c}(Z)\,.
\]
This shows that the left hand-side inequality of \eqref{equa:lemToolConvergenceGeneralAux1} holds for $m+1$. To show that the inequality of the right hand-side also holds, recall that given any multiset $\gamma \in \mathbb{S}_{k}$, we know that it admits a unique (except for reordering) decomposition as a sum of connected multisets (i.e. polymers) 
\[ 
\gamma = \gamma_{1} \vee \ldots \vee \gamma_{l} \quad \text{such that} \quad \gamma_{j} \wedge \gamma_{k} = \emptyset \quad \text{whenever} \quad j \neq k\,,
\]
The fact that each $\gamma_{j}$ is connected and that $\gamma_{j} \wedge \gamma_{k} = \emptyset$ whenever $j \neq k$, ensures that 
\[ \chi(Z, \gamma) \mathbf{u}^{\gamma} = \prod_{j=1}^{l} \chi(Z, \gamma_{j}) \chi(\gamma_{j}) \mathbf{u}^{\gamma_{j}}\,. \]
Therefore, we can upper estimate 
\begin{equation}\label{equa:lem:toolConvergenceGeneralAux1}
\sum_{\gamma \in \mathbb{S}_{\leq m+1}} \chi(Z, \gamma) \mathbf{u}^{\gamma} \leq \sum_{l=1}^{m}\frac{1}{l!} \left( \sum_{\gamma \in \mathbb{P}_{\leq m+1}} \chi(Z, \gamma) \,  \, \mathbf{u}^{\gamma}\right)^{l}\,. 
\end{equation}
 Next, using the inequality of the left hand-side of \eqref{equa:lemToolConvergenceGeneralAux1}  for $m+1$ that we just proved, we conclude that
\[
\sum_{\gamma \in \mathbb{S}_{\leq m+1}} \chi(Z, \gamma) \mathbf{u}^{\gamma} \leq \sum_{l=1}^{m}\frac{1}{l!} \mathbf{c}(Z)^{l} \leq e^{\mathbf{c}(Z)}\,. 
\]
This finishes the proof by induction.
\end{proof}

We can now prove the main result.

{
\begin{proof}[Proof of Theorem \ref{thm:GeneralLocalEffectiveInteractionExistence}]
We are going to apply Theorem \ref{thm:generalExistenceEffectiveHamiltonianCommuting}. Let us consider $\mathbf{a}, \mathbf{b}:\mathcal{S} \to \mathbb{R}$ by $\mathbf{a}(X) = (\varepsilon/2) |X|$ and an arbitrary $\mathbf{b}$. We have to prove that
\[
\sum_{k=1}^{\infty} \, |\beta|^{k}\,\sum_{[X_{1}, \ldots, X_{k}] \in \mathbb{P}_{k}} \chi(Z,X_{1}, \ldots, X_{k}) \, \prod_{j=1}^{k}  \| \Phi_{X_{j}}\| e^{\mathbf{a}(X_{j}) + \mathbf{b}(X_{j})} \leq \mathbf{a}(Z) \, .
\]
For that, we will use the previous Lemma \ref{lem:toolConvergenceGeneral} with the maps $\mathbf{c},\mathbf{u} : \mathcal{S} \to [0, \infty)$ given by $\mathbf{c}=\mathbf{a}$ and $\mathbf{u}(X) = |\beta|\, \| \Phi_{X}\| e^{\mathbf{a}(X) +\mathbf{b}(X)}$. With this choice, note that the previous estimate corresponds to Eq. \eqref{equa:lemToolConvergenceGeneralMain2}.  Thus, we just have to check the hypotheses of the Lemma. But this can be easily verified, since
\[ \sum_{X \in \mathcal{S}}\chi(Z,X) \, |\beta|\, \| \Phi_{X}\| e^{\mathbf{b}(X)}e^{2\mathbf{a}(X)} \leq  |\beta|\, \sum_{x \in Z} \sum_{X \ni x} \| \Phi_{X}\|e^{\varepsilon |X| + \mathbf{b}(X)} \leq |\beta| \, |Z|\, \| \Phi\|_{\varepsilon, \mathbf{b}} \leq \frac{\varepsilon}{2} |Z| = \mathbf{a}(Z) \,  \]
if we take $|\beta| \leq \varepsilon/(2 \| \Phi\|_{\varepsilon, \mathbf{b}})$, so we conclude the result.
\end{proof}
}
\section{Mixing condition via strong effective Hamiltonians}\label{sec:stronf_effHam_implies_mixing_condition}

To finalize this section, we are going to show that, under the existence of a strong local effective Hamiltonian as the one described above, we can show that exponential decay of covariance can be lifted to the mixing condition.

\begin{theo}[Strong form]\label{prop:mixingConditionStrongEffectiveHamiltonian} Let us assume that $\Phi$ is an interaction on $V$ satisfying the \emph{strong} local effective Hamiltonian property at $\beta >0$, and assume that there is a uniform bound $\Delta >0$ such that, for every $L \subset V$, the local interaction $\widetilde{\Phi}^{L, \beta}$ satisfies
\[ \|\widetilde{\Phi}^{L, \beta}\| = \sup_{x \in V} \sum_{X \ni x}\|\Phi_{X}\|e^{\lambda |X| + \mu \operatorname{diam}(X)} \leq \Delta \,. \]
Then, for every $\Lambda \in \mathcal{P}_{f}(V)$ and every pair of disjoint subsets $A,C \subset \Lambda$, the local Gibbs state $\rho = \rho^{\Lambda}_{\beta}$ satisfies whenever $\beta < \lambda/(2\Delta)$
\[
\left\| \rho_{AC} \rho_{A}^{-1} \otimes \rho_{C}^{-1} - \mathbbm{1} \right\| \leq   \exp\left( \frac{4 \Delta \lambda \beta}{\lambda - 2 \Delta \beta} \sum_{x \in A} e^{-\mu \operatorname{dist}(x,C)}\right) \, \frac{4 \Delta \lambda \beta}{\lambda - 2 \Delta \beta} \sum_{x \in A} e^{-\mu \operatorname{dist}(x,C)} \, .
\]
\end{theo}

\begin{rem}
If $V = \mathbb{Z}^{g}$, then we can use the notation from Remark \ref{rem:onion} and rewrite the above estimation as
\[ \| \rho_{AC} \rho_{A}^{-1} \otimes \rho_{C}^{-1} - \mathbbm{1} \| \leq \exp \{ 4\beta K  |\partial A| e^{- (\mu/2) \operatorname{dist}(A,C) }  \} \cdot  4\beta K |\partial A| e^{- (\mu/2) \operatorname{dist}(A,C) }  \,.\]
Exchanging the roles of $A$ and $C$, we could write $|\partial C|$ instead of $|\partial A|$, taking the minimum of both values to minimize the expression. In any case, we have an exponential decay on the distance between $A$ and $C$.
\end{rem}

\begin{proof}
Let us drop the dependence on $\Lambda$ and $\beta$ in $\rho$ to ease notation. Consider $\Lambda$ and $A, C \subset \Lambda$ as in the statement of the proposition. Then,
\[
\rho_{AC} \rho_{A}^{-1} \rho_{C}^{-1} = \mathbb{E}_{AC}[e^{-\beta H_{\Lambda}}] \mathbb{E}_{A}[e^{-\beta H_{\Lambda}}]^{-1} \mathbb{E}_{C}[e^{-\beta H_{\Lambda}}]^{-1} \mathbb{E}_{\emptyset}[e^{-\beta H_{\Lambda}}]\,. 
\]
Using the hypothesis, there are local interactions $\widetilde{\Phi}^{A,\beta}, \widetilde{\Phi}^{C,\beta}, \widetilde{\Phi}^{AC,\beta}$ and $\widetilde{\Phi}^{\emptyset,\beta}$ such that for every $L \in \{ A,C,AC, \emptyset\}$ 
\[  \widetilde{H}^{L, \beta}_{\Lambda}:=-\frac{1}{\beta} \log\left( \mathbb{E}_{L}[e^{-\beta H_\Lambda}] \right) =  \sum_{X \subset \Lambda} \widetilde{\Phi}_{X}^{L, \beta}\,. \]
For the remainder of the proof, we will omit the superscript $\beta$ and subscript $\Lambda$ from the local interactions $\widetilde{\Phi}^{L} = \widetilde{\Phi}^{L, \beta}$ and the effective Hamiltonian as well. Thus, we can rewrite 
\[ \rho_{AC} \rho_{A}^{-1} \rho_{C}^{-1} = 
e^{-\beta \widetilde{H}^{AC}} 
e^{\beta \widetilde{H}^{A}} 
e^{\beta \widetilde{H}^{C}} 
e^{-\beta \widetilde{H}^{\emptyset}} 
\,. 
\]
Since $\widetilde{H}^{\emptyset}$ is a multiple of the identity, it commutes with every operator. Moreover, $\widetilde{H}^{A}$ and $\widetilde{H}^{C}$ commute with each other since every $\Phi_{X}^{A}$ is supported in $A$, every $\Phi_{X}^{C}$ is supported in $C$, and $A \cap C= \emptyset$. Therefore, we can rearrange
\[ \rho_{AC} \rho_{A}^{-1} \rho_{C}^{-1} = 
e^{-\beta \widetilde{H}^{AC} } 
e^{\beta (\widetilde{H}^{A}+ \widetilde{H}^{C} - \widetilde{H}^{\emptyset})} 
\,. 
\]
Let us recall that for every pair of operators $Q$ and $W$ in $\mathfrak{A}_{\Lambda}$ we have the following expansion in terms of the time-evolution operator (see \cite[Eqs.\ (5.5), (5.9)]{Araki1969})
\begin{equation}\label{equa:mixingviaStrongEffectiveAux2} 
e^{-\beta Q}e^{\beta(Q+W)} = \sum_{m=0}^{\infty} \,\, \int_{0}^{\beta}dt_{1} \int_{0}^{t_{1}}dt_{2} \ldots \int_{0}^{t_{m-1}} dt_{m} \,\, \Gamma_{Q}^{i  t_{1}}(W) \cdot \ldots \cdot \Gamma_{Q}^{i t_{m}}(W)
\end{equation}
 where  $\Gamma_Q^{it}(W) = e^{-tQ}We^{tQ}$. Thus, applying this identity with
 \[ Q:= \widetilde{H}^{AC, \beta}_{\Lambda}  \quad \mbox{and} \quad W = \widetilde{H}^{A, \beta}_{\Lambda} + \widetilde{H}^{C, \beta}_{\Lambda} - \widetilde{H}^{\emptyset, \beta}_{\Lambda} - \widetilde{H}^{AC, \beta}_{\Lambda}\,, \]
 we can estimate
\begin{equation}\label{eq:estimate_prod_exp_eff_ham_minus_identity_first}
    \left\| \rho_{AC} \rho_{A}^{-1} \rho_{C}^{-1} - \mathbbm{1} \right\| =  \left\| e^{-\beta Q}e^{\beta(Q+W)} - \mathbbm{1}\right\| \leq \sum_{m=1}^{\infty} \frac{|\beta|^{m}}{m!} \, \left(\sup_{0 \leq s \leq \beta}\| \Gamma^{is}_{Q}(W)\|\right)^{m} \, .
\end{equation}
Next we are going to expand $W$ in terms of the local interactions and note some cancellations between summands:
\[ W = \sum_{X \subset \Lambda} \widetilde{\Phi}^{A}_{X} + \widetilde{\Phi}^{C}_{X}  - \widetilde{\Phi}^{AC}_{X} - \widetilde{\Phi}^{\emptyset}_{X}\,.  \]
We claim that if $X \cap C = \emptyset$ or $X \cap A = \emptyset$, then 
\[ \widetilde{\Phi}^{A}_{X} + \widetilde{\Phi}^{C}_{X}  - \widetilde{\Phi}^{AC}_{X} - \widetilde{\Phi}^{\emptyset}_{X} =0\,. \]
Indeed,  if $X \cap C = \emptyset$, then $X \cap AC = X \cap A$ and $X \cap C = X \cap \emptyset$, which respectively yield that $\widetilde{\Phi}^{AC}_{X} = \widetilde{\Phi}^{A}_{X}$ and $\widetilde{\Phi}^{C}_{X} = \widetilde{\Phi}^{\emptyset}_{X}$, by property (iii) in Definition \ref{defi:localityEffectiveHamiltonian}. Thus, we get a zero summand in this case. The argument when $X \cap A=\emptyset$ is analogous exchanging the roles of $A$ and $C$. Therefore, the expression for $W$ can be simplified to
\[ W = \sum_{X \subset \Lambda \colon X \cap A \neq \emptyset, X \cap C \neq \emptyset}  \widetilde{\Phi}^{A}_{X}  + \widetilde{\Phi}^{C}_{X} - \widetilde{\Phi}^{AC}_{X} - \widetilde{\Phi}^{\emptyset}_{X}\,.  \]
Next, we can estimate
\[ \| \Gamma_{Q}^{is}(W)\| \leq \sum_{X \subset \Lambda \colon X \cap A \neq \emptyset, X \cap C \neq \emptyset} ( \|\Gamma_{Q}^{is}(\widetilde{\Phi}^{A}_{X}) \| +\|\Gamma_{Q}^{is}(\widetilde{\Phi}^{C}_{X}) \|
+\|\Gamma_{Q}^{is}(\widetilde{\Phi}^{AC}_{X}) \|
+\|\widetilde{\Phi}^{\emptyset}_{X} \|) \,, \]
where in the last summand we have used that $\widetilde{\Phi}^{\emptyset, \beta}_{X}$ is a multiple of the identity. Applying Proposition \ref{theo:localityEstimates}, we can estimate since $|s| \leq \beta  < \lambda/(2\Delta)$
\begin{align*}
\| \Gamma_{Q}^{is}(W)\| 
& \leq   \,  \frac{\lambda}{\lambda-2\Delta|s|}  \sum_{\substack{X \subset \Lambda \colon\\[2mm] X \cap A \neq \emptyset, X \cap C \neq \emptyset}} e^{\lambda |X|}(\|\widetilde{\Phi}^{AC}_{X} \| + \|\widetilde{\Phi}^{A}_{X} \| + \|\widetilde{\Phi}^{C}_{X} \| +\|\widetilde{\Phi}^{\emptyset}_{X} \|)\\[2mm]
& \leq   \, \frac{\lambda}{\lambda-2 \Delta |s|}   \sum_{x \in A} \sum_{\substack{X \ni x \colon\\[1mm] X \cap C \neq \emptyset}} e^{\lambda |X|}(\|\widetilde{\Phi}^{AC}_{X} \| + \|\widetilde{\Phi}^{A}_{X} \| + \|\widetilde{\Phi}^{C}_{X} \|+\|\widetilde{\Phi}^{\emptyset}_{X} \|)\\[2mm]
& =   \, \frac{\lambda}{\lambda-2\Delta|s|}   \sum_{x \in A} \sum_{\substack{X \ni x \colon\\[1mm] X \cap C \neq \emptyset}} e^{-\mu \diam(X)} e^{\lambda |X|+ \mu \diam(X)}(\|\widetilde{\Phi}^{AC}_{X} \| + \|\widetilde{\Phi}^{A}_{X} \| + \|\widetilde{\Phi}^{C}_{X} \|+\|\widetilde{\Phi}^{\emptyset}_{X} \|)\\[2mm]
& \leq   \, \frac{\lambda}{\lambda-2 \Delta |s|}   \sum_{x \in A}  e^{-\mu \operatorname{dist}(x,C)} \sum_{\substack{X \ni x \colon\\[1mm] X \cap C \neq \emptyset}}  e^{\lambda |X|+ \mu \diam(X)}(\|\widetilde{\Phi}^{AC}_{X} \| + \|\widetilde{\Phi}^{A}_{X} \| + \|\widetilde{\Phi}^{C}_{X} \|+\|\widetilde{\Phi}^{\emptyset}_{X} \|)\\[2mm]
& \leq   \, \frac{4 \Delta \lambda}{\lambda-2 \Delta |s|}    \sum_{x \in A}  e^{-\mu \operatorname{dist}(x,C)} \,.
\end{align*}
Next we use this inequality in Eq.\ \eqref{eq:estimate_prod_exp_eff_ham_minus_identity_first}, taking into account that $\sum_{k=1}^{\infty} x^{k}/k! = e^{x} - 1 \leq xe^{x}$ whenever $x \geq 0$, which allows to estimate 
\[
 \left\|\rho_{AC} \rho_{A}^{-1} \rho_{C}^{-1}- \mathbbm{1} \right\| \leq   \exp\left( \frac{4\Delta\lambda|\beta|}{\lambda - 2\Delta |s|} \sum_{x \in A} e^{-\mu \operatorname{dist}(x,C)}\right) \, \frac{4\Delta\lambda|\beta|}{\lambda - 2\Delta |s|} \sum_{x \in A} e^{-\mu \operatorname{dist}(x,C)} \, .
\]

\end{proof}

It is natural to ask if an analgous theorem holds assuming the weak local effective Hamiltonian property. However, in this case an additional term arises that must be controlled in other ways to have a suitable decay of the mixing condition. We will discuss this in detail in Section \ref{mixingviaweakeffective}.

\section{Local indistinguishability}\label{sec:local_indistinguishability}

This section is dedicated to elucidating the concept of local indistinguishability for Gibbs states of short-range, local Hamiltonians and its validity under the assumption of uniform clustering of correlations. While this property was previously discussed in \cite{Kastoryano2013} for finite-range interactions, it is worth noting that their proof contains certain issues pertaining to normalization factors. A similar approach to the one presented in this section, based on the so-called \emph{Quantum Belief Propagation} (QBP), has recently been explored in \cite{OnoratiRouzeStilckfrancaWatson2023LearningStatesPhases} for finite-range interactions, and has been extended to short-range interactions in full detail in \cite{CapelMoscolariTeufelWessel-LPPL-2023}. Here, we provide a concise and rigorous presentation of QBP for short-range interactions.

\subsection{Quantum Belief Propagation}\label{subsubsec:quantum_belief_propagation}
The technique of QBP was introduced in \cite{Hastings2007, leifer2008quantum}. We follow the presentation in \cite{Kato2019, kim2012perturbative}, which developed the method further for finite-range interactions, and \cite{CapelMoscolariTeufelWessel-LPPL-2023}, which adapted QBP to the setting of short-range interactions. We will only present here the necessary ingredients for the proof of local indistinguishability of Gibbs states and refer the interested reader to \cite{CapelMoscolariTeufelWessel-LPPL-2023} for the details on the proof. 

In this section, we will fix $(V,E) = \mathbb Z^g$ for some $g \in \mathbb N$ together with the Euclidean distance, as this is the setting of \cite{CapelMoscolariTeufelWessel-LPPL-2023}, although the results should carry over to more general graphs. First, let $\Phi$ be a finite-range interaction generating a Hamiltonian $H$ and let $W$ be a bounded Hermitian operator on some finite subset of the lattice $\mathbb Z^g$. QBP \cite{Hastings2007, kim2012perturbative,CapelMoscolariTeufelWessel-LPPL-2023} allows us to rewrite 
\begin{equation}\label{eq:QBPgeneralformulation}
    e^{-\beta(H+W)} = \eta(W) e^{-\beta H} \eta(W)^{\ast} \, ,
\end{equation}
where $\eta(W)$ inherits the locality properties of $W$ by means of the (real) Lieb-Robinson bounds. Let us describe $\eta(W)$ more explicitly. Denote $H(s) = H+sW$ and 
\[ \mho^{s}_{H}(W) := \int_{\mathbb{R}} dt \, f_\beta(t) \, e^{-iH(s)t} W e^{iH(s)t} \, , \]
where $f_\beta \in L^{1}(\mathbb{R})$, see \cite{Kato2019, Ejima2019} for an explicit description. Then,
\begin{equation} \label{eq:def-OW}
     \eta (W) = \sum_{m=0}^{\infty} \left(\frac{-\beta}{2} \right)^m \int_{0}^{1} ds_{1} \int_{0}^{s_{1}} d s_{2} \ldots \int_{0}^{s_{m-1}}ds_{m} \, \mho^{s_{m}}(W) \cdot \ldots \cdot \mho^{s_{1}}(W)\,.
\end{equation}
It can be shown that $\|\mho^{s}_{H}(W)\| \leq \|W\|$ and thus 
\begin{equation}\label{eq:bound-on-norm-of-OW}
    \|\eta (W) \|\leq e^{\beta \|W\|/2}.
\end{equation}
Moreover, the operator $\eta(W)$ can be well-approximated by an operator $\eta_\ell(W)$ supported on a ball of radius $\ell$ around $\mathcal W := \operatorname{supp}W$ in Euclidean distance \cite{kim2012perturbative, Kato2019}. We will call this set $\mathcal W_\ell$. In fact, $\eta_\ell(W)$ is defined as $\eta(W)$ in Eq.\ \eqref{eq:def-OW}, but with $\mho^{s}_{H}(W)$ replaced by $\mathbb{E}_{\mathcal W_\ell}[\mho^{s}_{H}(W)]$. This also shows
\begin{equation*}
     \|\eta_\ell(W)\|\leq e^{\beta \|W\|/2}.
\end{equation*}
Additionally, the distance between $\eta(W)$ and $\eta_\ell(W)$ can be estimated as 
\begin{equation}\label{eq:QBP-localisation}
   \|\eta(W) - \eta_\ell(W)\| \leq e^{c_1 \|W\|} e^{-c_2 \ell} \, .
\end{equation}
Here, $c_1$ and $c_2$ depend on $\mathcal W$ as can be seen from the proof in \cite{kim2012perturbative, Kato2019} based on Lieb-Robinson bounds, but they do not depend on the support of the Hamiltonian $H$. Something similar can be proven for short-range interactions and for the $\eta$ that appears in the normalized version of Eq.\ \eqref{eq:QBPgeneralformulation}, namely one involving Gibbs states instead of exponentials. This was done explicitly in \cite{CapelMoscolariTeufelWessel-LPPL-2023}, from which we extract the following result.

\begin{prop} \label{prop:QBP} Let $\Lambda$ be a finite lattice $\Lambda \subset \mathbb{Z}^g$, and let $H$ be a self-adjoint operator on $\mathcal{B}(\mathcal{H}_\Lambda)$ generated by short-range interactions, with $W \in \mathfrak{A}(\mathcal{H}_X)$ for $X \subset \Lambda$. Consider the path of Hamiltonians $H(s):= H + s W$ for $s\in [0,1]$.  Then,
\begin{enumerate}
    \item There exists $s \mapsto \eta (W,s)$ such that
    \begin{equation}\label{eq:prop_QBP_1}
        e^{-\beta H(s)} = \eta(W,s) e^{-\beta H} \eta(W,s)^{\ast}  \quad , \qquad \norm{\eta(W,s)} \leq e^{\frac{\beta}{2}s \norm{W}} \, .
    \end{equation}
    \item There exists $s \mapsto \tilde \eta (W,s)$ such that
     \begin{equation}\label{eq:prop_QBP_2}
        \rho_{\beta}(s) = \tilde \eta(W,s) \rho_{\beta}(0) \tilde \eta(W,s)^{\ast}  \quad , \qquad \norm{\tilde \eta(W,s)} \leq e^{{\beta} s \norm{W}} \, ,
    \end{equation}
    where $\rho_{\beta}(s)$ is the Gibbs state for $H(s)$. Moreover,
    \begin{equation}\label{eq:prop_QBP_3}
        \norm{ \rho_{\beta}(s) - \rho_{\beta}(0)}_1 \leq e^{2 \beta s \norm{W}}-1 \leq s (e^{2 \beta \norm{W}}-1) \, .
    \end{equation}
    \item There exist constants $\kappa , \gamma >0$ such that
    \begin{equation}\label{eq:prop_QBP_4}
        \norm{\tilde \eta (W,s) - \tilde \eta_\ell(W,s)} \leq \kappa \beta s  |X| \norm{W} e^{\beta s \norm{W}} e^{-\gamma \ell} \, .
    \end{equation}
    $\kappa $ and $\gamma$ are based on Lieb-Robinson bounds, and their explicit expression can be found in \cite{CapelMoscolariTeufelWessel-LPPL-2023}. {Again, $\tilde \eta_\ell(W,s)$ is supported on $\mathcal W_{\ell}$.}
\end{enumerate}
\end{prop}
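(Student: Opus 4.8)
The plan is to realise the interpolation $H(s) := H + sW$, $s\in[0,1]$, through a quantum belief propagation operator, following \cite{Hastings2007,kim2012perturbative,Kato2019} in the finite-range case and \cite{CapelMoscolariTeufelWessel-LPPL-2023} in the short-range case, to which all quantitative details (in particular the short-range Lieb--Robinson estimates) would be deferred.

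\textbf{Part (1).} First I would produce $s\mapsto\eta(W,s)$ as the unique solution of the linear differential equation
\begin{equation*}
\partial_s \eta(W,s) = -\tfrac{\beta}{2}\,\mho^{s}_{H}(W)\,\eta(W,s)\, ,\qquad \eta(W,0)=\mathds{1}\, ,
\end{equation*}
which, iterated, is precisely the Dyson series \eqref{eq:def-OW}. The generator $\mho^{s}_{H}(W)=\int_{\mathbb R} f_\beta(t)\,e^{-iH(s)t}We^{iH(s)t}\,dt$ is tailored so that it solves the fluctuation--dissipation identity
\begin{equation*}
\tfrac{1}{2}\big(\mho^{s}_{H}(W)\,e^{-\beta H(s)} + e^{-\beta H(s)}\,\mho^{s}_{H}(W)\big) = \int_0^1 e^{-\beta u H(s)}\,W\,e^{-\beta(1-u)H(s)}\,du\, ,
\end{equation*}
which in the eigenbasis of $H(s)$ is equivalent to the requirement $\widehat f_\beta(\omega)=\tanh(\beta\omega/2)/(\beta\omega/2)$ and in particular ensures $\|\mho^{s}_{H}(W)\|\le\|W\|$ as recorded before the statement. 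Differentiating $\eta(W,s)\,e^{-\beta H}\,\eta(W,s)^{\ast}$ in $s$, using the Duhamel expansion of $\partial_s e^{-\beta H(s)}$, the above identity, and the fact that $\mho^{s}_{H}(W)$ is Hermitian, one checks that $\eta(W,s)\,e^{-\beta H}\,\eta(W,s)^{\ast}$ and $e^{-\beta H(s)}$ solve the same linear ODE in $s$ with the same value at $s=0$; uniqueness then gives \eqref{eq:QBPgeneralformulation} in the form \eqref{eq:prop_QBP_1}, while Gr\"onwall's inequality applied to $\|\partial_s\eta(W,s)\|\le\tfrac{\beta}{2}\|W\|\,\|\eta(W,s)\|$ yields $\|\eta(W,s)\|\le e^{\frac{\beta}{2}s\|W\|}$, cf.\ \eqref{eq:bound-on-norm-of-OW}.

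\textbf{Part (2).} Writing $Z(s):=\operatorname{Tr}[e^{-\beta H(s)}]$, Part (1) gives $\rho_\beta(s)=\tfrac{Z(0)}{Z(s)}\,\eta(W,s)\,\rho_\beta(0)\,\eta(W,s)^{\ast}$, so I would set $\tilde\eta(W,s):=\sqrt{Z(0)/Z(s)}\,\eta(W,s)$. From $H(s)\le H+s\|W\|\mathds{1}$ we get $e^{-\beta H(s)}\ge e^{-\beta s\|W\|}e^{-\beta H}$, hence $Z(0)/Z(s)\le e^{\beta s\|W\|}$, so $\|\tilde\eta(W,s)\|\le e^{\frac{\beta}{2}s\|W\|}\cdot e^{\frac{\beta}{2}s\|W\|}=e^{\beta s\|W\|}$, which is \eqref{eq:prop_QBP_2}. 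The same Dyson series gives $\|\tilde\eta(W,s)-\mathds{1}\|\le e^{\beta s\|W\|}-1$, and the telescoping identity
\begin{equation*}
\rho_\beta(s)-\rho_\beta(0) = (\tilde\eta(W,s)-\mathds{1})\,\rho_\beta(0)\,\tilde\eta(W,s)^{\ast} + \rho_\beta(0)\,(\tilde\eta(W,s)^{\ast}-\mathds{1})
\end{equation*}
together with $\|\rho_\beta(0)\|_1=1$ gives $\|\rho_\beta(s)-\rho_\beta(0)\|_1\le(\|\tilde\eta(W,s)\|+1)\|\tilde\eta(W,s)-\mathds{1}\|\le (e^{\beta s\|W\|}+1)(e^{\beta s\|W\|}-1)=e^{2\beta s\|W\|}-1$; the final inequality $e^{2\beta s\|W\|}-1\le s(e^{2\beta\|W\|}-1)$ is convexity of $s\mapsto e^{2\beta\|W\|s}-1$ on $[0,1]$, establishing \eqref{eq:prop_QBP_3}.

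\textbf{Part (3).} I would define $\tilde\eta_\ell(W,s)$ by the same Dyson series (rescaled by $\sqrt{Z(0)/Z(s)}$) but with each factor $\mho^{s}_{H}(W)$ replaced by $\mathbb{E}_{\mathcal W_\ell}[\mho^{s}_{H}(W)]$, so that $\tilde\eta_\ell(W,s)$ is supported on $\mathcal W_\ell$ and satisfies the same norm bound. The crucial estimate is $\|\mho^{s}_{H}(W)-\mathbb E_{\mathcal W_\ell}[\mho^{s}_{H}(W)]\|\le c_1\|W\|e^{-c_2\ell}$ uniformly in $s\in[0,1]$: one splits $\int_{\mathbb R} f_\beta(t)(\cdots)\,dt$ at $|t|\sim\ell/v$, bounding the short-time part by a Lieb--Robinson estimate (so that $e^{-iH(s)t}We^{iH(s)t}$ is exponentially well approximated by its conditional expectation onto $\mathcal W_\ell$ when $|t|\lesssim\ell/v$) and the long-time part by the fast decay of $f_\beta$; this is the analogue of \eqref{eq:QBP-localisation}. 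A term-by-term comparison of the two Dyson series, of exactly the telescoping type used in the proof of \Cref{prop:estimates_expansionals_normal}, then produces $\|\tilde\eta(W,s)-\tilde\eta_\ell(W,s)\|\le\kappa\beta s\|W\|e^{\beta s\|W\|}e^{-\gamma\ell}$, which is \eqref{eq:prop_QBP_4}. The main obstacle is precisely this last step: one needs Lieb--Robinson bounds valid for short-range (not merely finite-range) interactions, with constants uniform along the whole path $s\in[0,1]$ of Hamiltonians, and one must track how these approximation errors propagate through the series; the explicit choices of $f_\beta$, $\kappa$ and $\gamma$ are carried out in \cite{CapelMoscolariTeufelWessel-LPPL-2023}.
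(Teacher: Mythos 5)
Your proposal is correct and takes essentially the same route as the paper, which states \Cref{prop:QBP} without its own proof by importing it from \cite{CapelMoscolariTeufelWessel-LPPL-2023}: your fluctuation--dissipation choice of $\widehat f_\beta$, the Dyson-series construction of $\eta(W,s)$, the rescaling by $\sqrt{Z(0)/Z(s)}$, and the Lieb--Robinson localisation of $\mho^{s}_{H}(W)$ are exactly the ingredients the paper records in Eqs.\ \eqref{eq:def-OW}--\eqref{eq:QBP-localisation} and defers to that reference. One small repair in Part (2): since $x\mapsto e^{-x}$ is not operator anti-monotone, the step $e^{-\beta H(s)}\ge e^{-\beta s\norm{W}}e^{-\beta H}$ is unjustified as an operator inequality; instead obtain $Z(0)/Z(s)\le e^{\beta s\norm{W}}$ from eigenvalue monotonicity under $H(s)\le H+s\norm{W}\mathds{1}$ (or from $\partial_s\log Z(s)=-\beta\operatorname{Tr}[\rho_\beta(s)W]\ge-\beta\norm{W}$), after which the rest of your argument goes through unchanged.
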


In the next section, we will make use of the results collected above. Whenever the perturbation $W$ is clear from the context, we will drop the dependence of $\eta$ on it.

\subsection{Local indistinguishability}\label{subsec:local_indistinguishability}

 In this section, we discuss the notion of local indistinguishability of Gibbs states and how it arises from decay of correlations. For that, we reprove some of the main findings of \cite{CapelMoscolariTeufelWessel-LPPL-2023}, extended from \cite[Theorem 5]{Brandao2019} from finite-range to short-range interactions, in order to keep track of the constants and terms involved.  Note that the proof of the former is similar in spirit to the latter, but notably more technical. Here, we will limit ourselves to the regular lattice $\mathbb Z^g$ for some $g \in \mathbb N$, although the results could be extended to more general lattices.

Let us recall the notion of \textit{operator correlation function}, also known as \textit{covariance}. Given a finite subset $\Lambda \subset \subset \mathbb{Z}^g$, $\rho \in \mathcal{D}(\mathcal{H}_\Lambda)$ a state on $\Lambda$, subsets $A,B \subset  \Lambda$, and $O_A \in \mathfrak{A}_A$, $O_B \in \mathfrak{A}_B$, we define the covariance of $\rho$ between $A$ and $B$ as
\begin{equation*}
    \operatorname{Cov}_\rho(A,B) = \underset{\norm{O_A}=\norm{O_B}=1}{\sup} |\mathrm{Tr}[\rho \, O_A O_B] - \mathrm{Tr}[\rho \, O_A] \mathrm{Tr}[\rho \, O_B]| \, .
\end{equation*}
In the main result of this section, namely the local indistinguishability of Gibbs states, we will consider a region $\Lambda$ split into $ABC$ as in Figure \ref{fig:local_indist_construction} (left-hand side) and we will show that the effect of an observable $O_A\in \mathfrak{A}_A$ traced with respect to the Gibbs state in $ABC$, and with respect to $AB$ is almost indistinguishable {if $A$ and $C$ are sufficiently far apart}. For that, we will remove the sites of $C$ (i.e. the interactions acting on each site) one by one, and we will show that the change performed at each step is almost negligible. However, this requires the assumption that correlations between spatially separated regions decay fast enough, not only for the Gibbs state of the global Hamiltonian in $\Lambda$, but also for the Hamiltonian on each of the intermediate steps until having completely removed $C$. For simplicity, we assume a more general condition, which is inspired by \cite{Brandao2019} and contains exponential uniform clustering of covariance as a special case. 

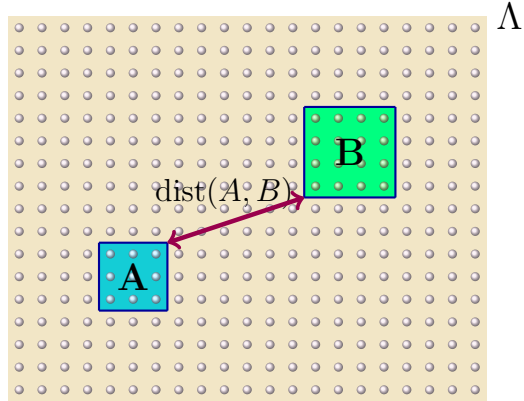
\begin{figure}[ht]
\begin{center}

\begin{tikzpicture}[scale=0.3]

\fill [darkyellow!50!orange!25!white] (-0.5,-0.5) rectangle (20.5,16.5);
\foreach \n in {1,...,21}{
\foreach \m in {1,...,17}{
\shade[shading=ball, ball color=darkred!10!white] (\n-1,\m-1) circle (0.2);
}
 }
 \begin{scope}[xshift=4cm,yshift=4cm]
 \fill [cyan!80!lightblue] (-0.5,-0.5) rectangle (2.5,2.5);
  \foreach \n in {1,...,3}{
\foreach \m in {1,...,3}{
\shade[shading=ball, ball color=lightblue!50!white] (\n-1,\m-1) circle (0.2);
}
 }
 \end{scope}
  \begin{scope}[xshift=13cm,yshift=9cm]
 \fill [lightgreen] (-0.5,-0.5) rectangle (3.5,3.5);
  \foreach \n in {1,...,4}{
\foreach \m in {1,...,4}{
\shade[shading=ball, ball color=midgreen!40!white] (\n-1,\m-1) circle (0.2);
}
 }
 \end{scope}

\node at (5,5) {\Large $\text{\textbf{A}}$};
\node at (14.5,10.5) {\Large $\text{\textbf{B}}$};
\node at (21.5,16.5) {\Large $\Lambda$};
\node at (9,8.7) {\large $\mathrm{dist}(A,B)$};
\draw [darkblue,thick](3.5,3.5) -- (3.5,6.5) ;
\draw [darkblue,thick](6.5,3.5) -- (6.5,6.5) ;
\draw [darkblue,thick](3.5,3.5) -- (6.5,3.5) ;
\draw [darkblue,thick](3.5,6.5) -- (6.5,6.5) ;
\draw [darkblue,thick](12.5,8.5) -- (12.5,12.5) ;
\draw [darkblue,thick](12.5,8.5) -- (16.5,8.5) ;
\draw [darkblue,thick](16.5,8.5) -- (16.5,12.5) ;
\draw [darkblue,thick](12.5,12.5) -- (16.5,12.5) ;
\draw [<->,darkred,thick,line width=0.6mm](6.5,6.5) -- (12.5,8.5) ;

\end{tikzpicture}

 \caption{Display of two sublattices $A$, $B$ of $\Lambda$ such that $\mathrm{dist}(A,B) \geq \ell $.}
  \label{fig:local_indist_mult}
  \end{center}
\end{figure}

\begin{defi}\label{def:uniform_clustering}
Let $\Phi$ be a local, short-range interaction, i.e. such that $\norm{\Phi}_{\lambda, \mu} < \infty$ for certain $\lambda, \mu >0$ (cf. Eq.\ \eqref{eq:norm_interaction}). Fix an inverse temperature $\beta > 0$, and for any finite $\Lambda \subset \subset \mathbb{Z}^g$, let $\rho^\Lambda_\beta$ be the Gibbs state of $H_\Lambda$ at  $\beta > 0$, defined from $\Phi$. We say that $H = (H_\Lambda)_{\Lambda \subset \subset \mathbb{Z}^g}$ is \emph{$\epsilon(\ell)$-uniform clustering} if for all $\Lambda \subset \mathbb Z^g$, and all $O_A \in \mathfrak{A}_A$, $O_B \in \mathfrak{A}_B$, where $A$, $B \subseteq \Lambda$ such that $\operatorname{dist}(A,B) \geq \ell$ (cf. Figure \ref{fig:local_indist_mult}), 
\begin{equation*}
    \operatorname{Cov}_{\rho^\Lambda}(O_A,O_B) \leq 
 {f(A,B)} \norm{O_A} \norm{O_B} \epsilon(\ell) \, .
\end{equation*}
Here, $f(A,B) \leq g(A) |B|^b$ for some $b \in \mathbb N$ and some function $g$.
\end{defi}

\begin{rem}\label{rem:discussion_clustering}
    In our definition of $\epsilon(\ell)$-uniform clustering, we leave on purpose open the dependence of the function $f$ on $A$ and $B$. That is, because our aim is to show that uniform clustering implies the mixing condition, not to prove uniform clustering. As examples, the review \cite{alhambra2022quantum} considers $f(A,B) = {\mathrm{min}\{|\partial A|, |\partial B|\}}$. For finite-range interactions, \cite{Kliesch2014} proves that form of clustering for any finite interaction hypergraph.  The article \cite{Brandao2019} limits their attention to uniform clustering with $f(A,B) = 1$, such as one-dimensional systems, for which uniform clustering was shown to hold in \cite{Araki1969} for finite-range interactions and subsequently extended to short-range interactions in \cite{Perez2020} above a threshold temperature. This is also the case for commuting Hamiltonians associated to gapped Davies Lindbladians by \cite{Kastoryano2013}.  Additionally, for short-range interactions, \cite[Theorem 3.2]{frohlich2015some} seems to show uniform clustering with $f(A,B) = \mathcal O(|A| |B|)$ for high dimensions. To unify all these different notions of uniform clustering, we hence chose to retain the freedom of choosing $f(A,B)$ appropriately. In the following proofs, however, we will need to control the growth of $f$ if one of the sets in its arguments is a ball of radius $\ell$. Therefore, we assume for simplicity in Definition \ref{def:uniform_clustering} that the second argument of $f$ behaves like a power. This is similar to the treatment in \cite{CapelMoscolariTeufelWessel-LPPL-2023} and covers the aforementioned examples.
\end{rem}

\begin{figure}[ht]
\begin{center}

\begin{tikzpicture}[scale=0.4]

\fill [darkyellow!50!orange!25!white] (-0.5,-0.5) rectangle (12.5,12.5);
\foreach \n in {1,...,13}{
\foreach \m in {1,...,13}{
\shade[shading=ball, ball color=darkred!10!white] (\n-1,\m-1) circle (0.2);
}
 }
   \begin{scope}[xshift=1cm,yshift=1cm]
 \fill [lightgreen] (-0.5,-0.5) rectangle (8.5,8.5);
  \foreach \n in {1,...,9}{
\foreach \m in {1,...,9}{
\shade[shading=ball, ball color=midgreen!40!white] (\n-1,\m-1) circle (0.2);
}
 }
 \end{scope}
 \begin{scope}[xshift=4cm,yshift=4cm]
 \fill [cyan!80!lightblue] (-0.5,-0.5) rectangle (2.5,2.5);
  \foreach \n in {1,...,3}{
\foreach \m in {1,...,3}{
\shade[shading=ball, ball color=lightblue!50!white] (\n-1,\m-1) circle (0.2);
}
 }
 \end{scope}

   \begin{scope}[xshift=19cm,yshift=0cm]
 \fill [darkyellow!50!orange!25!white] (-0.5,-0.5) rectangle (12.5,12.5);
 \fill [white] (3.5,9.5) rectangle (9.5,10.5);
  \fill [white] (9.5,10.5) rectangle (10.5,0.5);
  \foreach \n in {1,...,13}{
\foreach \m in {1,...,13}{
\shade[shading=ball, ball color=darkred!10!white] (\n-1,\m-1) circle (0.2);
}
 }
   \fill [white,opacity=0.6] (3.5,9.5) rectangle (9.5,10.5);
  \fill [white,opacity=0.6] (9.5,10.5) rectangle (10.5,0.5);
 \end{scope}
    \begin{scope}[xshift=20cm,yshift=1cm]
 \fill [lightgreen] (-0.5,-0.5) rectangle (8.5,8.5);
  \foreach \n in {1,...,9}{
\foreach \m in {1,...,9}{
\shade[shading=ball, ball color=midgreen!40!white] (\n-1,\m-1) circle (0.2);
}
 }
 \end{scope}
 \begin{scope}[xshift=23cm,yshift=4cm]
 \fill [cyan!80!lightblue] (-0.5,-0.5) rectangle (2.5,2.5);
  \foreach \n in {1,...,3}{
\foreach \m in {1,...,3}{
\shade[shading=ball, ball color=lightblue!50!white] (\n-1,\m-1) circle (0.2);
}
 }
 \end{scope}

\node at (5,5) {\huge $\text{\textbf{A}}$};
\node at (2,11) {\huge $\text{\textbf{C}}$};
\node at (3.5,8) {\huge $\text{\textbf{B}}$};
\node at (24,5) {\huge $\text{\textbf{A}}$};
\node at (24.1,11.5) {\Large $\text{\textbf{C}}\setminus \{ c_1 \cup \ldots \cup c_j\}$};
\node at (22.5,8) {\huge $\text{\textbf{B}}$};

\node at (23.1,10) {\large $c_j$};
\node at (29.1,1.2) {\large $c_1$};
\node at (29,4.2) {\Huge $\vdots$};
\node at (29,7.2) {\Huge $\vdots$};
\node at (26,10.1) {\huge $\ldots$};
\node at (8.1,5.4) {\begin{small}
$\mathrm{dist}(A,C)$
\end{small} };
\draw [darkblue,thick](3.5,3.5) -- (3.5,6.5) ;
\draw [darkblue,thick](6.5,3.5) -- (6.5,6.5) ;
\draw [darkblue,thick](3.5,3.5) -- (6.5,3.5) ;
\draw [darkblue,thick](3.5,6.5) -- (6.5,6.5) ;
\draw [darkblue,thick](0.5,0.5) -- (0.5,9.5) ;
\draw [darkblue,thick](0.5,0.5) -- (9.5,0.5) ;
\draw [darkblue,thick](9.5,0.5) -- (9.5,9.5) ;
\draw [darkblue,thick](9.5,9.5) -- (0.5,9.5) ;
\draw [darkblue,thick](-0.5,-0.5) -- (-0.5,12.5) ;
\draw [darkblue,thick](-0.5,-0.5) -- (12.5,-0.5) ;
\draw [darkblue,thick](12.5,-0.5) -- (12.5,12.5) ;
\draw [darkblue,thick](12.5,12.5) -- (-0.5,12.5) ;

\draw [darkblue,thick](22.5,3.5) -- (22.5,6.5) ;
\draw [darkblue,thick](25.5,3.5) -- (25.5,6.5) ;
\draw [darkblue,thick](22.5,3.5) -- (25.5,3.5) ;
\draw [darkblue,thick](22.5,6.5) -- (25.5,6.5) ;
\draw [darkblue,thick](19.5,0.5) -- (19.5,9.5) ;
\draw [darkblue,thick](19.5,0.5) -- (28.5,0.5) ;
\draw [darkblue,thick](28.5,0.5) -- (28.5,9.5) ;
\draw [darkblue,thick](28.5,9.5) -- (19.5,9.5) ;
\draw [darkblue,thick](18.5,-0.5) -- (18.5,12.5) ;
\draw [darkblue,thick](18.5,-0.5) -- (31.5,-0.5) ;
\draw [darkblue,thick](31.5,-0.5) -- (31.5,12.5) ;
\draw [darkblue,thick](31.5,12.5) -- (18.5,12.5) ;

\draw [darkblue,thick](22.5,10.5) -- (29.5,10.5) ;
\draw [darkblue,thick](29.5,0.5) -- (29.5,10.5) ;
\draw [darkblue,thick](28.5,0.5) -- (29.5,0.5) ;
\draw [darkblue,thick](22.5,10.5) -- (22.5,9.5) ;
\draw [<->,darkred,thick,line width=0.6mm](6.5,4.5) -- (9.5,4.5) ;

\end{tikzpicture}

 \caption{On the left-hand side, we split $\Lambda$ into $ABC$ so that $B$ is a ring around $A$, shielding it from $C$. On the right-hand side, we show the support of $\Lambda_j$, the lattice after $j$ steps of removing sites in an ordered way, one by one. }
  \label{fig:local_indist_construction}
  \end{center}
\end{figure}
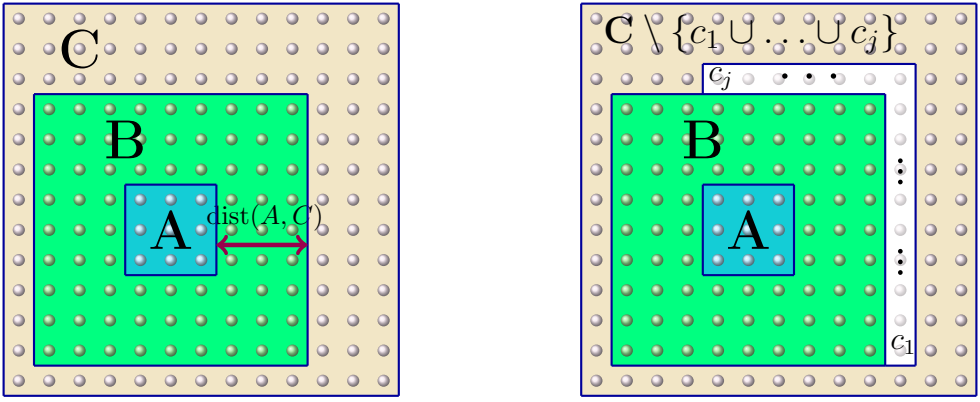

This allows us to prove a version of the following theorem for hypercubic lattices, inspired by \cite[Theorem 5]{Brandao2019} for finite-range interactions, and extended to short-range interactions in \cite{CapelMoscolariTeufelWessel-LPPL-2023}. Its proof can be derived as a combination of Theorem 14, Lemma 19 and Theorem 20 of \cite{CapelMoscolariTeufelWessel-LPPL-2023} for the particular case considered here. However, we include here a self-contained and simplified proof with easier notation for completeness. 

\begin{thm}\label{thm:local_indistinguishability_kastoryanobrandao}
Let { $V = \mathbb{Z}^{g}$ and} let $\Phi$ be a short-range interaction with $\norm{\Phi}_{\lambda, \mu} < \infty$ for certain $\lambda, \mu >0$. For any $\Lambda \subset \subset V$, let $\rho^\Lambda_\beta$ be the Gibbs state of $H_\Lambda$ at  $\beta > 0$, defined from $\Phi$. Consider a splitting of $\Lambda$ as $\Lambda=ABC$ and $\ell \in \mathbb{N}$ such that $\operatorname{dist}(A,C) \geq 2 \ell + 1 $ (see Figure \ref{fig:local_indist_construction}, left-hand side), with $H=(H_\Lambda)_{\Lambda \subset \subset V}$ being $\epsilon(\ell)$-clustering. Then, we have
\begin{align}\label{eq:local_indistinguishability_kastoryanobrandao_1norm}
    & \left| \Tr[\rho^\Lambda_\beta O_A] -  \Tr[\rho^{AB}_\beta O_A ] \right| \nonumber \\
    & \hspace{2.2cm} \leq |C| g(A) \ell^{db} K e^{2 \beta (\norm{\Phi}_{0,0}+ \norm{\Phi}_{\mu,\lambda} e^{-\mu \ell})}  \left(  4 \kappa' \beta \norm{\Phi}_{0,0} \ell^{d} e^{-\gamma \ell}  + \epsilon(\ell) +  \norm{\Phi}_{\mu,\lambda} e^{-\mu \ell} \right) \norm{O_A} \, ,
\end{align}
where  $K$, $\kappa'$, $\gamma >0 $ are constants. Thus, in particular if $\epsilon(\ell)$ is at least exponentially decreasing, the above can be simplified to
\begin{align}\label{eq:local_indistinguishability_simplified}
     \left| \Tr[\rho^\Lambda_\beta O_A] -  \Tr[\rho^{AB}_\beta O_A ] \right| \leq |C| g(A) \mathcal{K (\beta)}  \norm{O_A}  e^{-\alpha \ell} \, ,
\end{align}
for certain constants $\alpha , \mathcal{K}(\beta)>0$. 
\end{thm}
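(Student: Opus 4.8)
The plan is to remove the sites of $C$ one at a time, bounding the cost of each removal by a QBP argument combined with the clustering hypothesis, and then sum these $|C|$ contributions. Order the sites of $C$ as $c_1, c_2, \ldots, c_{|C|}$ (say, lexicographically, as in Figure \ref{fig:local_indist_construction}, right-hand side), and set $\Lambda_0 := \Lambda = ABC$ and $\Lambda_j := \Lambda \setminus \{c_1, \ldots, c_j\}$, so $\Lambda_{|C|} = AB$. Writing $H_{\Lambda_{j-1}} = H_{\Lambda_j} + W_j$ where $W_j := H_{\Lambda_{j-1}} - H_{\Lambda_j}$ collects exactly those interaction terms $\Phi_X$ with $c_j \in X$ and $X \subset \Lambda_{j-1}$, one has $\norm{W_j} \leq \norm{\Phi}_{0,0}$ by the definition of the interaction norm. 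A telescoping estimate then gives
\begin{equation*}
\left| \Tr[\rho^\Lambda_\beta O_A] - \Tr[\rho^{AB}_\beta O_A] \right| \leq \sum_{j=1}^{|C|} \left| \Tr[\rho^{\Lambda_{j-1}}_\beta O_A] - \Tr[\rho^{\Lambda_j}_\beta O_A] \right| \, ,
\end{equation*}
so it suffices to bound each summand by (roughly) $g(A)\ell^{db} \mathcal{K}(\beta) e^{-\alpha\ell}\norm{O_A}$ uniformly in $j$, with constants not depending on $\Lambda_{j-1}$ — this uniformity is exactly what the $\epsilon(\ell)$-uniform clustering hypothesis and the $\Lambda$-independence of the QBP constants $\kappa,\gamma$ in \Cref{prop:QBP} are designed to provide.

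\textbf{Bounding a single step.} Fix $j$ and abbreviate $H := H_{\Lambda_j}$, $W := W_j$, $\rho(s)$ the Gibbs state of $H + sW$. By \Cref{prop:QBP}(2), $\rho(1) = \tilde\eta(W,1)\rho(0)\tilde\eta(W,1)^\ast$ with $\norm{\tilde\eta(W,1)} \leq e^{\beta\norm{W}}$, and by \Cref{prop:QBP}(3), $\norm{\rho(1)-\rho(0)}_1 \leq e^{2\beta\norm{W}}-1$. Replace $\tilde\eta := \tilde\eta(W,1)$ by its local approximation $\tilde\eta_\ell$ supported on the ball $\mathcal W_\ell$ of radius $\ell$ around $c_j$, paying $\norm{\tilde\eta - \tilde\eta_\ell} \leq \kappa\beta\norm{W}e^{\beta\norm{W}}e^{-\gamma\ell}$ via \Cref{prop:QBP}(4); since $c_j$ is separated from $A$ by $\operatorname{dist}(A,C) \geq 2\ell+1$, the set $\mathcal W_\ell$ is disjoint from $A$. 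Now write
\begin{equation*}
\Tr[\rho(1)O_A] - \Tr[\rho(0)O_A] = \Tr\big[(\tilde\eta_\ell\,\rho(0)\,\tilde\eta_\ell^\ast - \rho(0))O_A\big] + (\text{error terms from }\tilde\eta \to \tilde\eta_\ell) \, .
\end{equation*}
The error terms are controlled by the norm estimates above and by $\norm{\tilde\eta_\ell}\leq e^{\beta\norm{W}}$. For the main term, use cyclicity of the trace and the fact that $\tilde\eta_\ell$ acts on $\mathcal W_\ell$ while $O_A$ acts on $A$ (disjoint sets), so $\tilde\eta_\ell^\ast O_A = O_A \tilde\eta_\ell^\ast$, to rewrite $\Tr[\tilde\eta_\ell\rho(0)\tilde\eta_\ell^\ast O_A] = \Tr[\rho(0)\,\tilde\eta_\ell^\ast\tilde\eta_\ell\, O_A]$. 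Setting $M := \tilde\eta_\ell^\ast\tilde\eta_\ell$ (supported on $\mathcal W_\ell$, with $\Tr[\rho(0)M]$ close to $1$ because $\Tr[\rho(1)] = \Tr[\rho(0)] = 1$ and the localization errors are small), the main term becomes $\Tr[\rho(0)(M O_A - O_A)] = \Tr[\rho(0)(M - \Tr[\rho(0)M]\mathbbm{1})O_A] + (\Tr[\rho(0)M]-1)\Tr[\rho(0)O_A]$; the first piece is a covariance of $\rho(0)=\rho^{\Lambda_j}_\beta$ between $\mathcal W_\ell$ and $A$, which are at distance $\geq \ell$, hence bounded by $f(A,\mathcal W_\ell)\norm{O_A}\norm{M}\epsilon(\ell) \leq g(A)\ell^{db}\cdot(\text{const})\cdot\norm{O_A}\epsilon(\ell)$ using the assumed growth $f(A,B)\leq g(A)|B|^b$ with $|\mathcal W_\ell| = \mathcal O(\ell^d)$, and the second piece is tiny. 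Finally, $W$ itself is not literally supported on a ball, but one may first replace $H$ by a truncated Hamiltonian (or truncate $W$), incurring the extra $\norm{\Phi}_{\mu,\lambda}e^{-\mu\ell}$ terms appearing in \eqref{eq:local_indistinguishability_kastoryanobrandao_1norm}, via the interaction-decay estimate; this is a routine bookkeeping step. Collecting everything for this single step gives a bound of the claimed form with an overall factor $e^{2\beta(\norm{\Phi}_{0,0}+\norm{\Phi}_{\mu,\lambda}e^{-\mu\ell})}$ absorbing the $\tilde\eta$-norm factors.

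\textbf{Summation and simplification.} Summing the per-step bound over $j = 1, \ldots, |C|$ yields \eqref{eq:local_indistinguishability_kastoryanobrandao_1norm}, since every constant is uniform in $j$ and the prefactor $g(A)\ell^{db}$ does not grow with $j$ (the region $A$ is fixed, and $\mathcal W_\ell$ always has size $\mathcal O(\ell^d)$). For the simplified bound \eqref{eq:local_indistinguishability_simplified}, observe that when $\epsilon(\ell) \leq C_0 e^{-\mu_0\ell}$ for some $C_0,\mu_0>0$, each of the three terms $4\kappa\beta\norm{\Phi}_{0,0}e^{-\gamma\ell}$, $\epsilon(\ell)$, $\norm{\Phi}_{\mu,\lambda}e^{-\mu\ell}$ decays exponentially, the polynomial factor $\ell^{db}$ can be absorbed into a slightly smaller exponential rate, and the factor $e^{2\beta(\norm{\Phi}_{0,0}+\norm{\Phi}_{\mu,\lambda}e^{-\mu\ell})}$ is bounded uniformly in $\ell$; choosing $\alpha := \min\{\gamma,\mu,\mu_0\}/2$ and $\mathcal{K}(\beta)$ to collect the remaining constants gives the result. \textbf{The main obstacle} I anticipate is the careful handling of the localization of $\tilde\eta(W)$ together with the truncation of the Hamiltonian — one must ensure that every approximation error is controlled by a quantity that is genuinely uniform over the sequence of shrinking lattices $\Lambda_{j-1}$ (in particular that $\kappa,\gamma$ do not degrade), and that the covariance estimate is applied at distance $\geq\ell$ between the correct pair of sets ($\mathcal W_\ell$ and $A$), which forces the hypothesis $\operatorname{dist}(A,C)\geq 2\ell+1$; getting all the constants to line up as in \eqref{eq:local_indistinguishability_kastoryanobrandao_1norm} is the technical heart of the argument.
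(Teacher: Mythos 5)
Your proposal follows essentially the same route as the paper's proof: telescoping over the one-by-one removal of the sites of $C$, applying the QBP representation of \Cref{prop:QBP} with the localized operator $\tilde\eta_\ell$, commuting it past $O_A$ to produce a covariance term controlled by the $\epsilon(\ell)$-uniform clustering with $f(A,B)\leq g(A)|B|^b$, splitting each perturbation $W_j$ into a norm-bounded local part and an exponentially small tail responsible for the $\norm{\Phi}_{\mu,\lambda}e^{-\mu\ell}$ contributions, and summing the $|C|$ uniform per-step bounds. The bookkeeping details you flag (truncation radius for $W_j$ versus the localization radius $\ell$, and the normalization term $\Tr[\rho\,\tilde\eta_\ell^\ast\tilde\eta_\ell]-1$) are handled in the paper exactly as you sketch, via an intermediate Gibbs state for $H_{\Lambda_j}+W_{j,1}$ and a second application of the localization estimate.
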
 

\begin{rem}
    The factor $|C|$ in Eq.\ \eqref{eq:local_indistinguishability_kastoryanobrandao_1norm} is due to the number of steps performed to remove all interactions with support intersecting $C$. In the case of finite-range interactions with range $r$, to prove local indistinguishability it is enough to decouple the interactions with support in $AB$ from those with a disjoint support. Therefore, in that case it is enough to remove the sites of $|\partial_r C|$, since 
    \begin{equation*}
        \Tr[\rho^{AB}_\beta\otimes \rho_\beta^{C \setminus \partial_r C} O_A] =  \Tr[\rho_\beta^{AB} O_A]
    \end{equation*}
    for every $O_A \in \mathfrak{A}_A$, and thus the dependence on $C$ in Eq.\ \eqref{eq:local_indistinguishability_kastoryanobrandao_1norm} can be tightened to $|\partial_r C|$. 
\end{rem}

\begin{proof}[Proof of {Theorem \ref{thm:local_indistinguishability_kastoryanobrandao}}]
Let us drop hereafter the subscript $\beta$ in $\rho$ to ease notation. Let us denote $m=|C|$ and let us enumerate by $c_j$, for $j \in [m]$, the sites in $ C$. The idea is to remove in an ordered way these sites, one by one, from the interactions in the Hamiltonian, and to use QBP and uniform clustering to show that the change when doing this is small. We will write $\Lambda_k$ for the remaining lattice after removing $k$ sites, i.e. $\Lambda_k = \Lambda \setminus \bigcup_{j=1}^k \{c_j\}$ (see Figure \ref{fig:local_indist_construction}, right-hand side). In particular, $\Lambda_0 = \Lambda$ and $\Lambda_m = AB$.  Then, for any $O_A \in \mathfrak{A}_A$, we have 
\begin{align}\label{eq:triangle_sum_local_ind}
    \left| \Tr[\rho^\Lambda O_A ] - \Tr[\rho^{AB} O_A ] \right| & \leq \sum_{j = 0}^{m-1} \left| \Tr[\rho^{\Lambda_j} O_A ] - \Tr[\rho^{\Lambda_{j+1}} O_A ] \right| ,
\end{align}
For each of these terms, we denote:
\begin{equation*}
    W_j := H_{\Lambda_{j+1}} - H_{\Lambda_j}  \, ,
\end{equation*}
and we split it as
\begin{equation*}
     W_{j,1} := -\underset{\substack{c_j \in Z \subset \Lambda \\\text{diam}(Z)\leq R}}{\sum} \, \Phi(Z) \quad , \quad \quad  W_{j,2} := -\underset{\substack{c_j \in Z \subset \Lambda \\\text{diam}(Z) > R}}{\sum} \, \Phi(Z) \, ,
\end{equation*}
for $R:= 1/2 \lfloor {\mathrm{dist}(A,C)}-1\rfloor$. Since $W_{j,1}$ is supported in an $R$-ball centered at $\{c_j\}$, we have 
\begin{equation*}
    \norm{W_{j,1}} \leq \norm{\Phi}_{0,0} \, .
\end{equation*}
Additionally, for the remaining part, we know that it is small, namely
\begin{equation*}
    \norm{W_{j,2}} \leq \norm{\Phi}_{\mu,\lambda} e^{-\mu R} \, .
\end{equation*}
Therefore, $\norm{W_j} \leq  \norm{\Phi}_{0,0} + \norm{\Phi}_{\mu,\lambda} e^{-\mu R}$. We note that denoting by $\hat \rho^{\Lambda_{j}}$ the Gibbs state on $\Lambda_j$, but with the Hamiltonian $H_{\Lambda_j}+W_{j,1}$, we can bound
\begin{equation*}
     \left| \Tr[\rho^{\Lambda_j} O_A ] - \Tr[\rho^{\Lambda_{j+1}} O_A ] \right| \leq \left| \Tr[\rho^{\Lambda_j} O_A ] - \Tr[\hat \rho^{\Lambda_{j}} O_A ] \right| + 2 \beta \|O_A\| \|W_{j,2} \| e^{2 \beta \|W_{j,2}\|} \, ,
\end{equation*}
where we have used Eq.\ \eqref{eq:prop_QBP_3} and $e^x-1 \leq x e^x$ for $x \geq 0$.

Moreover, by Eq.\ \eqref{eq:prop_QBP_2}, and denoting  $\tilde \eta^j :=\tilde \eta (W_{j,1},1) $ and $\tilde \eta^j_R :=\tilde \eta_R (W_{j,1},1) $, each term in the right-hand side of Eq.\ \eqref{eq:triangle_sum_local_ind} can be bounded as
\begin{align*}
     \left| \Tr[\rho^{\Lambda_j} O_A ] - \Tr[\hat \rho^{\Lambda_{j}} O_A ] \right|& =  \left| \operatorname{Tr}[ \tilde \eta^j \rho^{\Lambda_{j}} \tilde \eta^{j,*}  O_A] - \Tr[\rho^{\Lambda_{j}} O_A ] \right| \\[1mm] 
    & \leq  \underbrace{ \left|\operatorname{Tr}[ (\tilde \eta^j - \tilde \eta^j_R)\rho^{\Lambda_{j}} \tilde \eta^{j,*}  O_A]  - \operatorname{Tr}[ \tilde \eta^j_{{R}} \rho^{\Lambda_{j}} (\tilde \eta^{j,*} - \tilde \eta^{j,*}_R) O_A] \right|}_{I} \\
    &\quad + \underbrace{\left|\operatorname{Tr}[\rho^{\Lambda_{j}} \tilde \eta^{j,*}_R \tilde \eta^{j}_R O_A ] -  \operatorname{Tr}[\rho^{\Lambda_j} O_A ] \right|}_{II} \, . 
\end{align*}
For $I$, note that
\begin{equation*}
    I \leq \big\| \tilde \eta^j - \tilde \eta^j_R \big\| \norm{O_A} \left( \norm{ \tilde \eta^j} + \big\| \tilde \eta^j_R \big\| \right) \, , 
\end{equation*}
and for $II$, we have
\begin{align*}
    II&  = \operatorname{Tr}[\rho^{\Lambda_{j}} \tilde \eta^{j,*}_R \tilde \eta^{j}_R O_A ] -  \operatorname{Tr}[\rho^{\Lambda_{j}} \tilde \eta^{j,*}_R \tilde \eta^{j}_R] \operatorname{Tr}[\rho^{\Lambda_{j}}  O_A ] +  \operatorname{Tr}[\rho^{\Lambda_{j}} \tilde \eta^{j,*}_R \tilde \eta^{j}_R] \operatorname{Tr}[\rho^{\Lambda_{j}}  O_A ]- \operatorname{Tr}[\rho^{\Lambda_j} O_A ]\\[1mm] 
    & \leq \text{Cov}_{\rho^{\Lambda_{j}}}\left(\tilde \eta^{j,*}_R \tilde \eta^{j}_R , O_A \right) + \left( \operatorname{Tr}[\rho^{\Lambda_{j}} \tilde \eta^{j,*}_R \tilde \eta^{j}_R]  - 1 \right) \operatorname{Tr}[\rho^{\Lambda_j} O_A ] \\[1mm] 
    & \leq\text{Cov}_{\rho^{\Lambda_{j}}}\left(\tilde \eta^{j,*}_R \tilde \eta^{j}_R , O_A \right) + \big\| \tilde \eta^j - \tilde \eta^j_R \big\| \norm{O_A} \left( \norm{ \tilde \eta^j} + \big\| \tilde \eta^j_R \big\| \right) \, .
\end{align*}
{In the last inequality, the second estimate comes from a similar argument as in the bound on $I$.} Let $\mathcal W^{1}_{j}$ be the support of $W_{j,1}$ and $\mathcal W^{1}_{j,\ell}$ be the set of points with distance at most $\ell$ to $\mathcal W^{1}_{j}$. Combining the estimates of $I$ and $II$, the $\epsilon(\ell)$-clustering with $\ell= R$ and Eq.\ \eqref{eq:prop_QBP_4}, we get 
\begin{align*}
     \left| \Tr[\rho^{\Lambda_j} O_A ] - \Tr[\hat \rho^{\Lambda_{j}} O_A ] \right| & \leq \text{Cov}_{\rho^{\Lambda_{j}}}\left(\tilde \eta^{j,*}_R \tilde \eta^{j}_R , O_A \right) + 2 \big\| \tilde \eta^j - \tilde \eta^j_R \big\| \norm{O_A} \left( \norm{ \tilde \eta^j} + \big\| \tilde \eta^j_R \big\| \right) \\[1mm] 
     & \leq f(\mathcal{W}^1_{j,R} , A) \big\| \tilde \eta^{j,*}_R \tilde \eta^{j}_R \big\| \norm{O_A}  \epsilon(\ell) + 4 \kappa \beta |\mathcal{W}^1_{j,R}| \norm{W_{j,1}}  e^{2 \beta \norm{W_{j,1}}}  \norm{O_A}e^{-\gamma \ell} \\[1mm] 
     & \leq  \max \{ f(\mathcal{W}^1_{j,R} , A) ,   1 \} \, e^{2 \beta \norm{W_{j,1}}} \norm{O_A} \left(  4 \kappa \beta | \mathcal{W}^1_{j,R}| \norm{W_{j,1}} e^{-\gamma \ell}  + \epsilon(\ell) \right) \, .
\end{align*}
Finally, summing this over all sites of $C$, we get in Eq.\ \eqref{eq:triangle_sum_local_ind}:
\begin{align*}
     &\left| \Tr[\rho^\Lambda O_A ] - \Tr[\rho^{AB} O_A ] \right| \leq \\& \quad \max_j |C| \max \{ f(\mathcal{W}^1_{j,R} , A) ,   1 \} \, e^{2 \beta (\norm{W_{j,1}}+\norm{W_{j,2}})} \norm{O_A} \left(  4 \kappa \beta | \mathcal{W}^1_{j,R}| \norm{W_{j,1}} e^{-\gamma \ell}  + \epsilon(\ell) + \| W_{j,2}\| \right) \, .
\end{align*}
We conclude by noting the explicit bound for $f(\mathcal{W}^1_{j,R} , A)$ from Definition \ref{def:uniform_clustering} and recalling $R=\ell$:
\begin{equation*}
    f(\mathcal{W}^1_{j,R} , A)\leq g(A) |\mathcal{W}^1_{j,R}|^b 
\end{equation*}
as well as the fact that, as $\mathcal W^1_j$ is contained in an $\ell$-ball centered at $c_j$,
\begin{equation*}
    |\mathcal{W}^1_{j,R}|^b = \left( \frac{\pi^{d/2}}{\Gamma \left( \frac{d}{2} +1\right)} (2\ell)^d \right)^b \, ,
\end{equation*}
where $\Gamma$ is Euler's gamma function. 
\end{proof}

In the next section, we will use the previous theorem to prove a mixing condition for the Gibbs state assuming exponential uniform decay of correlations. 

{
\section{Mixing condition via weak effective Hamiltonian }\label{mixingviaweakeffective}

 The main result of this section is a mixing condition for Gibbs states of local, short-range Hamiltonians under the assumption of a weak local effective Hamiltonian such as the one presented in Section \ref{sec:effective_hamiltonian}.

 \begin{prop}[Weak form]\label{prop:mixingConditionWeakEffectiveHamiltonian}
Let us assume that $\Phi$ is an interaction on $V$ satisfying the \emph{weak} local effective Hamiltonian property at $\beta >0$ (cf. \Cref{defi:weakeffHamiltonian}), and assume that there is a uniform bound $\Delta >0$ and $\lambda, \mu >0$ such that, for every $L \subset V$, the local interaction $\widehat{\Phi}^{L, \beta}$ satisfies
\[ \|\widehat{\Phi}^{L, \beta}\|_{\lambda, \mu} = \sup_{x \in V} \sum_{X \ni x}\|\Phi_{X}\|e^{\lambda |X| + \mu \operatorname{diam}(X)} \leq \Delta \,. \]
Then, for every $\Lambda \in \mathcal{P}_{f}(V)$ split into three disjoint subsets $\Lambda =ABC$, the local Gibbs state $\rho = \rho^{\Lambda}_{\beta}$ satisfies whenever $\beta < \lambda/(2\Delta)$
\[
\left\| \rho_{AC} \rho_{A}^{-1} \otimes \rho_{C}^{-1} - \mathbbm{1} \right\| \leq   \exp\left( \frac{3\Delta\lambda \beta}{\lambda - 2\Delta \beta} \sum_{x \in A} e^{-\mu \operatorname{dist}(x,C)}\right) \, \left( \frac{3\Delta\lambda \beta}{\lambda - 2\Delta \beta} \sum_{x \in A} e^{-\mu \operatorname{dist}(x,C)}  + \abs{\kappa_{ABC} - 1 }\right)\,,
\]
where 
\begin{equation}\label{equa:kappaConstant}
\kappa_{ABC} = \kappa_{ABC}(\beta) = \frac{Z_{B} \cdot Z_{\Lambda}}{Z_{BC} \cdot Z_{AB}}\,. 
\end{equation}
\end{prop}

\begin{rem}
Comparing to Theorem \ref{prop:mixingConditionStrongEffectiveHamiltonian}, observe that we have an extra summand $|\kappa_{ABC} -1|$ as a consequence of the weak local effective Hamiltonian assumption.
\end{rem}

\begin{rem}\label{rema:mixingConditionStrongEffectiveHamiltonianR1}
If $V = \mathbb{Z}^{g}$, then we can use the notation from Remark \ref{rem:onion} and rewrite the above estimation as
\[ \| \rho_{AC} \rho_{A}^{-1} \otimes \rho_{C}^{-1} - \mathbbm{1} \| \leq \exp \{ 3\beta K  |\partial A| e^{- (\mu/2) \operatorname{dist}(A,C) }  \} \cdot  \left( 3\beta K |\partial A| e^{- (\mu/2) \operatorname{dist}(A,C) } + |\kappa_{ABC} - 1| \right)  \,,\]
where $K = \frac{\Delta \lambda \nu}{\lambda - 2 \Delta \beta}  $. Exchanging the roles of $A$ and $C$, we could write $|\partial C|$ instead of $|\partial A|$, taking the minimum of both values to minimize the expression. In any case, we have an exponential decay on the distance between $A$ and $C$.
\end{rem}

\begin{proof}[Proof of Proposition \ref{prop:mixingConditionWeakEffectiveHamiltonian}]
Given $\Lambda$ and $A, C \subset \Lambda$ as in the statement of the proposition, let us denote $B = \Lambda \setminus (A \cup C)$ we omit the subscript $\Lambda$ from the local effective interactions $\widehat{\Phi}^{L} = \widehat{\Phi}^{L, \beta}$ and the effective Hamiltonian, so that (see Remark \ref{rem:relation_effrho_usualrho})
\begin{equation*}
    \rho_{AC} \, \rho^{-1}_A \otimes \rho^{-1}_C  = \operatorname{e}^{- \beta \widehat{H}^{AC}}\operatorname{e}^{\beta \widehat{  H}^{A}}\operatorname{e}^{\beta \widehat{  H}^{C}} \frac{Z_{B} \cdot Z_{\Lambda}}{Z_{BC} \cdot Z_{AB}}\ = \operatorname{e}^{- \beta \widehat{H}^{AC}}\operatorname{e}^{\beta \widehat{  H}^{A}}\operatorname{e}^{\beta \widehat{  H}^{C}} \kappa_{ABC} \, .
\end{equation*}
Then, we clearly have 
\begin{equation}\label{equa:auxMixingviaWeakEffective}
\begin{split}
    \norm{\rho_{AC} \, \rho_{A}^{-1} \otimes \rho_{C}^{-1} - \mathbbm{1}_{AC}}  & =\norm{ \operatorname{e}^{- \beta \widehat{H}^{AC}}\operatorname{e}^{\beta ( \widehat{  H}^{A} + \widehat{  H}^{C}) }\kappa_{ABC} - \mathbbm{1}_{AC} } \\[2mm]
    & \hspace{-8mm} \leq \norm{ \operatorname{e}^{- \beta \widehat{H}^{AC}}\operatorname{e}^{\beta ( \widehat{  H}^{A} + \widehat{  H}^{C}) }} \abs{\kappa_{ABC} - 1 } + \norm{ \operatorname{e}^{- \beta \widehat{H}^{AC}}\operatorname{e}^{\beta ( \widehat{  H}^{A} + \widehat{  H}^{C}) } - \mathbbm{1}_{AC}}  \, .
    \end{split}
\end{equation}
To deal with the product of exponential, we will argue as in the proof of Theorem \ref{prop:mixingConditionStrongEffectiveHamiltonian} and use Eq. \eqref{equa:mixingviaStrongEffectiveAux2}. Indeed, we can rewrite
\[ \operatorname{e}^{- \beta \widehat{H}^{AC}}\operatorname{e}^{\beta ( \widehat{  H}^{A} + \widehat{  H}^{C}) } = e^{-\beta Q} e^{\beta(Q+W)} \]
where $Q = \widehat{H}^{AC} 
$ and
\[ W= \widehat{H}^{A} + \widehat{H}^{C}  - \widehat{H}^{AC} =  \sum_{X \cap A \neq \emptyset} \widehat{\Phi}_{X}^{A} +  \sum_{X \cap C \neq \emptyset} \widehat{\Phi}_{X}^{C} -  \sum_{X \cap AC \neq \emptyset} \widehat{\Phi}_{X}^{AC} \,.\]
We claim that if $X \cap A = \emptyset$ or $X \cap C = \emptyset$, then $\widehat{\Phi}^{A}_{X} + \widehat{\Phi}^{C}_{X}- \widehat{\Phi}^{AC}_{X} = 0$. Let us prove the claim when $X \cap C = \emptyset$. In this case, $X \cap A = X \cap AC$, and therefore $\widehat{\Phi}^{A}_{X} = \widehat{\Phi}_{X}^{AC}$ and $\widehat{\Phi}^{C}_{X} = 0$, so the sum is obviously zero. The case $X \cap A = \emptyset$ is analogous.   As a consequence of this claim, only summands over $X$ with $X \cap A \neq \emptyset$ and $X \cap C \neq \emptyset$ will survive:
\[ W = \sum_{X \subset \Lambda \colon X \cap A \neq  \emptyset \,, \, X \cap C \neq \emptyset} \widehat{\Phi}^{A}_{X} + \widehat{\Phi}^{C}_{X} - \widehat{\Phi}^{AC}_{X}\,.\]
Thus, we can estimate by Proposition \ref{theo:localityEstimates} for $|s| \leq \beta  < \lambda/(2\Delta)$
\begin{align*}
\| \Gamma_{Q}^{is}(W)\| & \leq \sum_{X \subset \Lambda \colon X \cap A \neq \emptyset, X \cap C \neq \emptyset} ( \|\Gamma_{Q}^{is}(\widehat{\Phi}^{A}_{X}) \| +\|\Gamma_{Q}^{is}(\widehat{\Phi}^{C}_{X}) \|
+\|\Gamma_{Q}^{is} (\widehat{\Phi}^{AC}_{X}) \| ) \\
& \leq \frac{3\Delta\lambda}{\lambda -2\Delta|s|} \sum_{x \in A} e^{-\mu \operatorname{dist}(x,C)} \,, 
\end{align*}
Therefore,
 using that $\sum_{k=1}^{\infty} \frac{x^k}{k!} =e^{x}-1 \leq xe^{x}$ for $x>0$,  we can estimate
\begin{align*}\label{eq:estimate_prod_exp_eff_ham_minus_identity_first2}
     \norm{ \operatorname{e}^{- \beta \widehat{H}^{AC}}\operatorname{e}^{\beta ( \widehat{  H}^{A} + \widehat{  H}^{C}) } - \mathbbm{1}} =  \left\| e^{-\beta Q}e^{\beta(Q+W)} - \mathbbm{1}\right\| \leq \sum_{m=1}^{\infty} \frac{|\beta|^{m}}{m!} \, \left(\sup_{0 \leq s \leq \beta}\| \Gamma^{is}_{Q}(W)\|\right)^{m} \\ 
     \leq \exp\left( \frac{3\Delta\lambda|\beta|}{\lambda - 2\Delta |s|} \sum_{x \in A} e^{-\mu \operatorname{dist}(x,C)}\right) \, \frac{3\Delta\lambda|\beta|}{\lambda - 2\Delta |s|} \sum_{x \in A} e^{-\mu \operatorname{dist}(x,C)}\,,
\end{align*}
and also
\[  \norm{ \operatorname{e}^{- \beta \widehat{H}^{AC}}\operatorname{e}^{\beta ( \widehat{  H}^{A} + \widehat{  H}^{C}) } } \leq \sum_{m=0}^{\infty} \frac{|\beta|^{m}}{m!} \, \left(\sup_{0 \leq s \leq \beta}\| \Gamma^{is}_{Q}(W)\|\right)^{m} \leq \exp\left( \frac{3\Delta\lambda|\beta|}{\lambda - 2\Delta |s|}  \sum_{x \in A} e^{-\mu \operatorname{dist}(x,C)}\right)\,.  \]
 Applying these estimates on \eqref{equa:auxMixingviaWeakEffective}, we conclude the result.
\end{proof}

 To deal with the summand $|\kappa_{ABC} - 1|$ in Proposition \ref{prop:mixingConditionStrongEffectiveHamiltonian}, we prove the following lemma. In the proof of this result, we will use the technique of Quantum Belief Propagation recalled in Section \ref{sec:local_indistinguishability}, as well as the assumption that uniform clustering of correlations as in Definition \ref{def:uniform_clustering} holds with exponential decay.

\begin{lem}\label{lem:clusteringImpliesPartitionFunction}
Let $V = \mathbb{Z}^{g}$ and let $\Phi$ be a local interaction on $V$ satisfying, for some $\lambda, \mu >0$
\[ \|\Phi\|_{\lambda, \mu} = \sup_{x \in V} \sum_{X \ni x} \| \Phi_{X}\| e^{\lambda |X| + \mu \diam(X)}\,. \]
Let us assume that  for the inverse temperature $0<\beta< \lambda/(2\|\Phi\|_{\lambda, \mu})$, the family of Hamiltonians $H = (H_{\Lambda})_{\Lambda \subset \subset \mathbb{Z}^{g}}$ is $\epsilon(\ell)$-clustering for an exponentially decaying function $\epsilon(\ell)$.

 Then, there exist constants $\widehat{K},c>0$, such that for every subset $\Lambda \subset \subset \mathbb{Z}^{g}$ and every pair of disjoint subsets $A,C \subset \Lambda$, we have that the constant $\kappa_{ABC} =\kappa_{ABC}(\beta)  $ introduced in \eqref{equa:kappaConstant} satisfies:
\begin{equation}\label{eq:step_lambda-1}
    \abs{\kappa_{ABC}-1}  \leq  \widehat{K} \operatorname{e}^{- c \, \mathrm{dist}(A,C)} \,.
\end{equation}

\end{lem}

\begin{proof}
Here, we follow similar steps as those in the proof of \cite[Theorem 8.2]{Bluhm2021exponential}. Let us denote $\Lambda = ABC$, so that $B$ shields $A$ from $C$, see Figure \ref{fig:step2new}. First, note that we can rewrite $\kappa_{ABC}$ as 
\begin{align*} 
\kappa_{ABC} & = \, \Tr_{ABC}\left[ e^{-\beta H_{ABC}} \right] \, \Tr_{BC}\left[ e^{-\beta H_{BC}} \right]^{-1} \, \Tr_{AB}\left[ e^{-\beta H_{AB}} \right]^{-1} \, \Tr_{B}\Big[e^{-\beta H_{B}}\Big]\\[2mm] 
& = \, \Tr_{ABC}\big[\rho_\beta^{ABC} \, e^{\beta H_{ABC}} \, e^{-\beta ( H_{A} +H_{BC}) }\big]^{-1} \, \operatorname{Tr}_{AB}\left[ \rho_\beta^{AB} \, e^{\beta H_{AB}} \, e^{-\beta (H_{A} + H_{B})}\right]\, \\[2mm]
& = \, \Tr_{ABC}\big[\rho_\beta^{ABC} E_{A, BC}^{\ast \, -1}\big]^{-1} \, \Tr_{AB}\big[\rho_\beta^{AB} \, E_{A,B}^{\ast \, -1}\big] \, , 
\end{align*}
where we recall that we are denoting $\rho^{ABC}=e^{-\beta H_{ABC}}/Z_{ABC}$ and $\rho^{AB}=e^{-\beta H_{AB}}/Z_{AB}$. Note that, since we are using the same $\beta>0$ throughout the whole proof, we are dropping the explicit dependence of $\rho^X$, $E_{X,Y}$ and $Z_X$ on it, for every $X,Y \subset ABC$. Denoting $\ell := \lfloor \frac{\mathrm{dist}(A,C)}{2} \rfloor$, let us split $B$ into $B_1 B_2$ so that:
\begin{itemize}
    \item $B_2$ shields $C$ from $A B_1$. 
    \item dist$(A, B_2) \geq \ell$.
\end{itemize}
A possible construction for $B_1 B_2 $ is shown in Figure \ref{fig:step2new}.

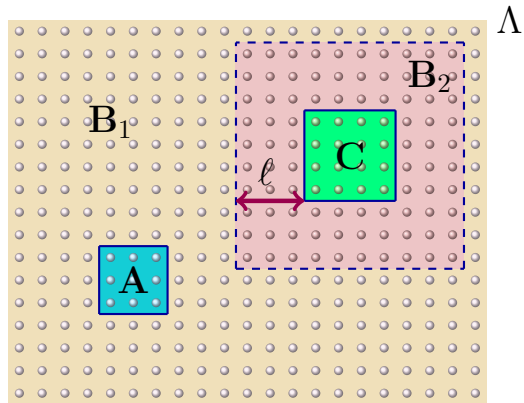
\begin{figure}[ht]
\begin{center}

\begin{tikzpicture}[scale=0.3]

\fill [darkyellow!50!orange!30!white] (-0.5,-0.5) rectangle (20.5,16.5);
\foreach \n in {1,...,21}{
\foreach \m in {1,...,17}{
\shade[shading=ball, ball color=darkred!10!white] (\n-1,\m-1) circle (0.2);
}
 }
 \begin{scope}[xshift=4cm,yshift=4cm]
 \fill [cyan!80!lightblue] (-0.5,-0.5) rectangle (2.5,2.5);
  \foreach \n in {1,...,3}{
\foreach \m in {1,...,3}{
\shade[shading=ball, ball color=lightblue!50!white] (\n-1,\m-1) circle (0.2);
}
 }
 \end{scope}
 \begin{scope}[xshift=10cm,yshift=6cm]
 \fill [darkred!60!orange!30!white] (-0.5,-0.5) rectangle (9.5,9.5);
 \foreach \n in {1,...,10}{
\foreach \m in {1,...,10}{
\shade[shading=ball, ball color=darkred!60!orange!40!white] (\n-1,\m-1) circle (0.2);
}
 }
 \end{scope}

  \begin{scope}[xshift=13cm,yshift=9cm]
 \fill [lightgreen] (-0.5,-0.5) rectangle (3.5,3.5);
  \foreach \n in {1,...,4}{
\foreach \m in {1,...,4}{
\shade[shading=ball, ball color=midgreen!40!white] (\n-1,\m-1) circle (0.2);
}
 }
 \end{scope}

\node at (5,5) {\Large $\text{\textbf{A}}$};
\node at (14.5,10.5) {\Large $\text{\textbf{C}}$};
\node at (18,14) {\Large $\text{\textbf{B}}_2$};
\node at (4,12) {\Large $\text{\textbf{B}}_1$};
\node at (21.5,16.5) {\Large $\Lambda$};
\node at (10.8,9.8) {\Large $\ell$};
\draw [dashed,darkblue,thick](9.5,5.5) -- (9.5,15.5) ;
\draw [dashed,darkblue,thick](9.5,5.5) -- (19.5,5.5) ;
\draw [dashed,darkblue,thick](19.5,5.5) -- (19.5,15.5) ;
\draw [dashed,darkblue,thick](19.5,15.5) -- (9.5,15.5) ;
\draw [darkblue,thick](3.5,3.5) -- (3.5,6.5) ;
\draw [darkblue,thick](6.5,3.5) -- (6.5,6.5) ;
\draw [darkblue,thick](3.5,3.5) -- (6.5,3.5) ;
\draw [darkblue,thick](3.5,6.5) -- (6.5,6.5) ;
\draw [darkblue,thick](12.5,8.5) -- (12.5,12.5) ;
\draw [darkblue,thick](12.5,8.5) -- (16.5,8.5) ;
\draw [darkblue,thick](16.5,8.5) -- (16.5,12.5) ;
\draw [darkblue,thick](12.5,12.5) -- (16.5,12.5) ;
\draw [<->,darkred,thick,line width=0.6mm](9.5,8.5) -- (12.5,8.5) ;
\end{tikzpicture}
 \caption{Splitting of $B$ into $B_1$ and $B_2$ devised for the proof of  \Cref{lem:clusteringImpliesPartitionFunction}.} 
  \label{fig:step2new}
  \end{center}
\end{figure}
Next, note that $|\kappa_{ABC}-1|$ can be bounded by
\begin{align*}
      \abs{\kappa_{ABC}-1} & = \abs{\Tr_{ABC}\big[\rho^{ABC} E_{A, BC}^{\ast \, -1}\big]^{-1} \, \Tr_{AB}\big[\rho^{AB} \, E_{A,B}^{\ast \, -1}\big] - 1} \\
    & \leq \abs{\Tr_{ABC}\big[\rho^{ABC} E_{A, BC}^{\ast \, -1}\big]^{-1} } \abs{\Tr_{AB}\big[\rho^{AB} \, E_{A,B}^{\ast \, -1}\big] - \Tr_{ABC}\big[\rho^{ABC} E_{A, BC}^{\ast \, -1}\big] } \\
    & \leq \| E_{A, BC}^{\ast \, -1}\| \abs{\Tr_{AB}\big[\rho^{AB} \, E_{A,B}^{\ast \, -1}\big] - \Tr_{ABC}\big[\rho^{ABC} E_{A, BC}^{\ast \, -1}\big] } \\
    & \leq e^{\beta K |\partial A|  } \abs{\Tr_{AB}\big[\rho^{AB} \, E_{A,B}^{\ast \, -1}\big] - \Tr_{ABC}\big[\rho^{ABC} E_{A, BC}^{\ast \, -1}\big] } \, ,
\end{align*}
where we have used Corollary \ref{cor:estimate_expansional}. Next, we add and subtract some intermediate terms in the previous difference, which allows us to bound:
\begin{align*}
 \left| \Tr_{AB}\big[\rho^{AB} E_{A,B}^{\ast \,-1}\big] \, - \, \Tr_{ABC}\big[\rho^{ABC} E_{A,BC}^{\ast \,-1}\big] \right|&  \leq \,\, \left| \Tr_{AB}\big[\rho^{AB} E_{A,B}^{\ast \,-1}\big] \, - \, \Tr_{AB}\big[\rho^{AB} E_{A,B_{1}}^{\ast \,-1}\big] \right| \, \\[1mm]
&  \quad + \left| \Tr_{AB}\big[\rho^{AB} E_{A,B_{1}}^{\ast \,-1}\big] \, - \, \Tr_{ABC}\big[\rho^{ABC} E_{A,B_{1}}^{\ast \,-1}\big] \right|  \\[1mm]
& \quad  + \left| \Tr_{ABC}\big[\rho^{ABC} E_{A,B_{1}}^{\ast \,-1}\big] \, - \, \Tr_{ABC}\big[\rho^{ABC} E_{A,BC}^{\ast \,-1}\big] \right|\,.
\end{align*}
The first and third terms are bounded using estimates for the expansionals. Indeed, by Hölder's inequality and the simplified bound of \Cref{prop:estimates_expansionals_normal} from Eq. \eqref{eq:simplified_estimates_difference_expansionals}, note that 
\begin{align*}
    \left| \Tr_{AB}\big[\rho^{AB} E_{A,B}^{\ast \,-1}\big] \, - \, \Tr_{AB}\big[\rho^{AB} E_{A,B_{1}}^{\ast \,-1}\big] \right| & \leq \norm{\rho^{AB}}_1 \norm{E_{A,B}^{\ast \,-1}\, - \, E_{A,B_{1}}^{\ast \,-1} } \\[2mm]
    & \leq  e^{ |\beta| K |\partial A| } K ' |\partial A| e^{- (\mu/2) \operatorname{dist}(A,B_{2}) } \, ,
\end{align*}
and analogously 
\begin{align*}
    &\left| \Tr_{ABC}\big[\rho^{ABC} E_{A,B_{1}}^{\ast \,-1}\big] \, - \, \Tr_{ABC}\big[\rho^{ABC} E_{A,BC}^{\ast \,-1}\big] \right| \leq e^{ |\beta| K |\partial A| } K ' |\partial A| e^{- (\mu/2) \operatorname{dist}(A,B_{2}) } \, .
\end{align*}
Let us bound the remaining term using Theorem \ref{thm:local_indistinguishability_kastoryanobrandao} and Eq.\ \eqref{eq:simplified_estimates_expansionals}. For that, since we are assuming uniform exponential decay of correlations, there exist constants $\alpha>0$ and $\mathcal{K}(\beta)>0$ such that
\begin{equation*}
    \left| \Tr_{AB}\big[\rho^{AB} E_{A,B_{1}}^{\ast \,-1}\big] \, - \, \Tr_{ABC}\big[\rho^{ABC} E_{A,B_{1}}^{\ast \,-1}\big] \right| \leq |C| g(A)  \mathcal{K}(\beta)  \operatorname{e}^{-\alpha \mathrm{dist}(A,B_2)} e^{\beta  K  |\partial A|   } \, .
\end{equation*}
Putting together these three estimates, and taking $c = \min\{ \mu/2, \alpha\}$ we get that
\[ \abs{\kappa_{ABC}-1} \leq 3 e^{2 \beta K |\partial A|} \left( 2K' |\partial A| + |C| g(A) \mathcal{K}(\beta)\right)  e^{-c \operatorname{dist}(A, B_{2})}\,.\]
This finishes the proof.
\end{proof}

Combining Proposition \ref{prop:mixingConditionStrongEffectiveHamiltonian}, see Remark \ref{rema:mixingConditionStrongEffectiveHamiltonianR1}, and Lemma \ref{lem:clusteringImpliesPartitionFunction}, we conclude the following main result of the section.

\begin{theo}\label{thm:Weakimpliesmixingcondition}
Let $\Phi$ be a local interaction on $V = \mathbb{Z}^{g}$ satisfying for some $\lambda, \mu, \Delta > 0$
\[ \| \Phi\|_{\lambda, \mu} = \sup_{x \in V} \sum_{X \ni x} \| \Phi_{X} \| e^{\lambda |X| + \mu \diam(X)} \leq \Delta\,.  \]
Moreover, let $0 <\beta < \lambda/(2 \Delta)$ be an inverse temperature such that:
\begin{itemize}
\item There is a weak local effective Hamiltonian at temperature $\beta>0$, and for every $L \subset V$, the local interaction $\widehat{\Phi}^{L, \beta}$ satisfies
\[ \|\widehat{\Phi}^{L, \beta}\|_{\lambda, \mu} = \sup_{x \in V} \sum_{X \ni x} \| \widehat{\Phi}_{X}^{L, \beta}\| e^{\lambda |X| + \mu \diam(X)} \leq \Delta\,.\]
\item $\Phi$ satisfies $\epsilon(\ell)$-clustering property.
\end{itemize}
Then, there exists constants $\widehat{K}', c'>0$  such that for every $\Lambda \in \mathcal{P}_{f}(V)$ and every pair of disjoint subsets $A,C \subset \Lambda$, the local Gibbs state $\rho = \rho^{\Lambda}_{\beta}$ satisfies 
\[
\left\| \rho_{AC} \rho_{A}^{-1} \otimes \rho_{C}^{-1} - \mathbbm{1} \right\| \leq   \widehat{K}' e^{-c' \operatorname{dist}(A,C)}\,.
\]
Moreover, $\widehat{K}'= \mathcal{O}(\mathrm{min} \{ e^{|\partial A |}( |\partial A| + |C| g(A)), e^{|\partial C |}( |\partial C| + |A| g(C)) \} ).$ 
\end{theo}
}

\section{Discussion}\label{sec:outlook}

Let us conclude this article with a discussion of the equivalence of different notions of decay of correlations in quantum many-body systems. In this work, we have reviewed the notions of exponential uniform decay of covariance, exponential uniform decay of mutual information, the uniform mixing condition and uniform local indistinguishability, which all quantify in some sense that correlations in a quantum Gibbs state decay with the distance between spatially separated regions. 

In \cite{Bluhm2021exponential}, the present authors proved that, for Gibbs states of finite-range interactions and one-dimensional quantum spin chains at any positive temperature, all these notions of decay of correlations are equivalent. The current manuscript together with previous work shows that, under certain conditions, all these notions of decay of correlations hold also for higher-dimensional systems with short-range interactions above a critical temperature. In that sense, the present work can be seen as an extension of \cite{Bluhm2021exponential}. 

\begin{figure}[h!]
\centering
\includegraphics[scale=0.35]{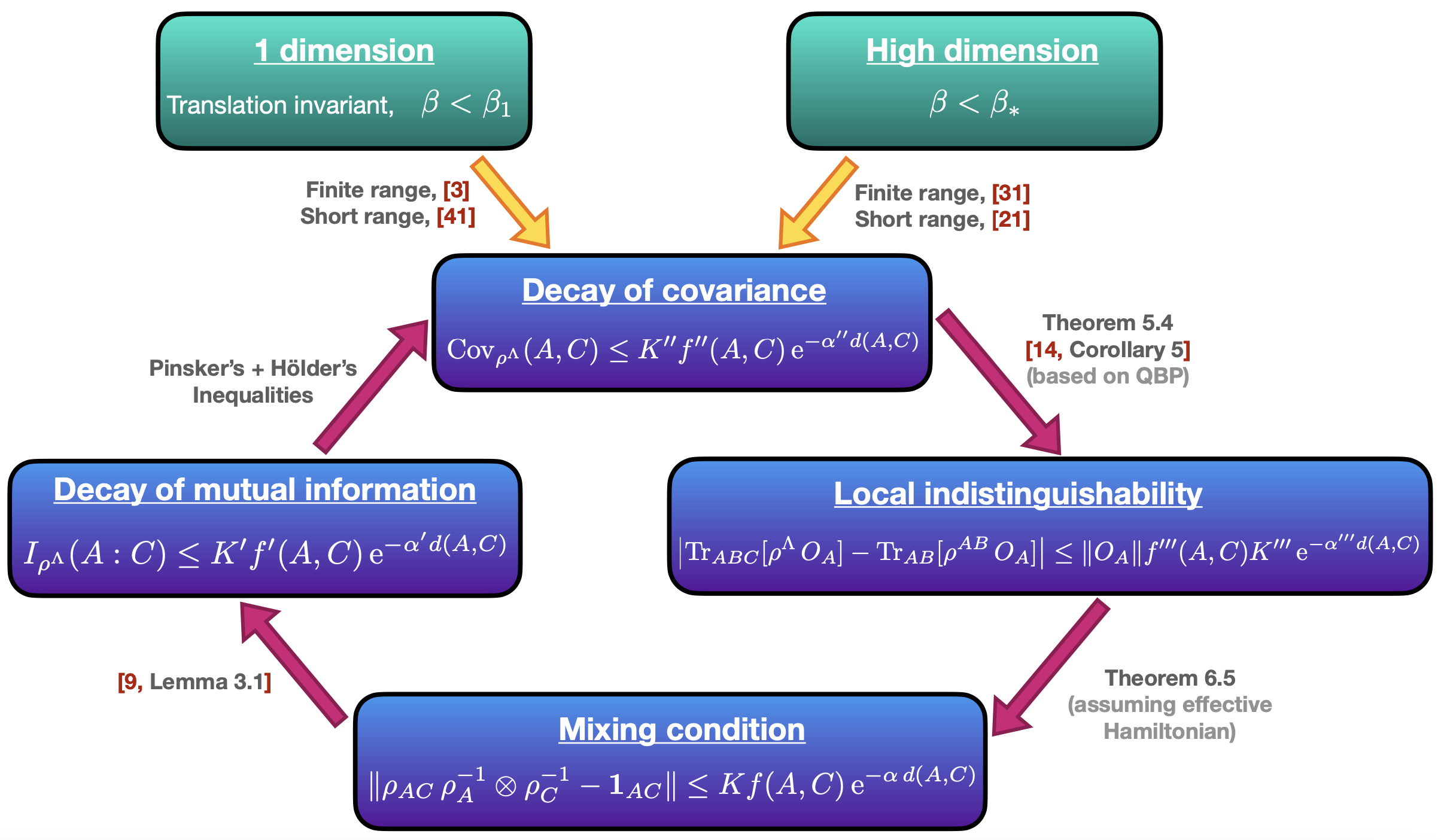}
\caption{Summary of the main results contained in this paper. We consider a positive function $f$ on finite sets that can possibly be different for each type of correlation decay. The equivalence between the four notions of decay of correlations is valid for short-range interactions. For one-dimensional spin systems, $\beta_1$ reduces to $\infty$ for finite-range interactions. Beyond one dimension, we have this equivalence only assuming the existence of an effective Hamiltonian.  }
\label{fig:diagram1}
\end{figure}

On the other hand, contrary to the one-dimensional case, in this work we have to assume the existence of a weak local effective Hamiltonian as in Section \ref{sec:effective_hamiltonian}, motivated by the cluster expansion techniques in \cite{Kuwahara2019}. This seems a quite strong assumption, and actually its strong version is already sufficient to prove the mixing condition, and thus all the different notions of decay of correlations we discussed above. Therefore, we cannot claim that we have shown the equivalence of these different notions of decay of correlations also beyond the one-dimensional case. However, note that the existence of a local effective Hamiltonian is only needed in Step 3 of the proof outline in Section \ref{sec:proof-outline}.

In future work, we will explore whether the existence of a local effective Hamiltonian is equivalent to other notions of decay of correlations, or, failing that, whether we can prove equivalence of different versions of decay of correlations without having to assume the existence of a local effective Hamiltonian. There is hope for that, since this is the case for commuting Hamiltonians, for which the aforementioned equivalence has recently been shown in \cite{kochanowski2023MLSI} without the use of the effective Hamiltonian, building up on previous work from \cite{CapelRouzeStilckFranca-MLSIcommuting-2020}.

\vspace{\baselineskip}
\textbf{Acknowledgements:} AB acknowledges financial support from the European Research Council (ERC Grant Agreement No. 81876) and VILLUM FONDEN via the QMATH Centre of Excellence (Grant No.10059). Moreover, AB is supported by the French National Research Agency in the framework of the ``France 2030” program (ANR-11-LABX-0025-01) for the LabEx PERSYVAL. AC  acknowledges the support of the Deutsche Forschungsgemeinschaft (DFG, German Research Foundation) – Project-ID 470903074 – TRR 352. APH is partially supported by the Escuela T\'{e}cnica Superior de Ingenieros Industriales (UNED) of Spain, project 2023-ETSII-UNED-01, as well as by the Spanish Ministerio de Ciencia e Innovación project PID2020-113523GB-I00 and by Comunidad de Madrid project QUITEMAD-CM P2018/TCS4342. AC and AB are grateful for the hospitality of Perimeter Institute where part of this work was carried out. Research at Perimeter Institute is supported in part by the Government of Canada through the Department of
Innovation, Science and Economic Development and by the Province of Ontario through the Ministry of
Colleges and Universities. This research was also supported in part by the Simons Foundation through the
Simons Foundation Emmy Noether Fellows Program at Perimeter Institute.

\bibliographystyle{plain}
\bibliography{lit}

\end{document}